\newtheorem{Definition}{Definition}[section]
\newtheorem{Lemma}{Lemma}[section]
\newtheorem{Proposition}{Proposition}[section]
\newcommand{\be}{\begin{equation}}
\newcommand{\ee}{\end{equation}}
\newcommand{\ba}{\begin{eqnarray}}
\newcommand{\ea}{\end{eqnarray}}
\title{{\sf Asymptotically safe canonical quantum gravity: Gaussian 
dust matter}}
\author{
{\sf R. Ferrero}$^1$\thanks{{\sf 
renata.ferrero@gravity.fau.de}}, 
{\sf T. Thiemann}$^1$\thanks{{\sf 
thomas.thiemann@gravity.fau.de}}\\
\\
{\sf $^1$ Inst. for Quantum Gravity, FAU Erlangen -- N\"urnberg,}\\
{\sf Staudtstr. 7, 91058 Erlangen, Germany}\\
}
\date{{\small\sf \today}}
\begin{document} 

\maketitle

{\sf

\begin{abstract}
In a recent series of publications we have started to investigate possible points of contact
between the canonical (CQG) and the asymptotically safe (ASQG) approach to quantum gravity, despite 
the fact that the CQG approach is exclusively for Lorentzian signature gravity while the 
ASQG approach is mostly for Euclidean signature gravity. Expectedly, 
the simplest route is via the generating functional of time ordered N-point functions 
which requires a Lorentzian version of the Wetterich equation and heat kernel methods employed in ASQG. 

In the present contribution we consider gravity coupled to Gaussian dust matter. 
This is a generally covariant Lorentzian signature system, which can be considered as a field 
theoretical implementation of the idealisation of a congruence of collision free test observers
in free fall, filling the universe. The field theory version correctly accounts for 
geometry -- matter backreaction and thus in principle serves as a dark matter model.
Moreover, the intuitive geometric interpretation selects a preferred reference frame 
that allows to disentangle gauge degrees of freedom from observables. The CQG treatment of this theory 
has already been considered in the past.

For this particular matter content it is possible to formulate the quantum field 
theory of observables as a non-linear $\sigma$ model described by a highly non-linear
conservative Hamiltonian. This allows to apply techniques from Euclidean field theory
to derive the generating functional of Schwinger N-point functions which can be treated 
with the standard Euclidean version of the heat kernel methods employed in ASQG. The 
corresponding Euclidean action is closely related to Euclidean signature gravity but 
not identical to it despite the fact that the underlying Hamiltonian is for Lorentzian
signature gravity.   
\end{abstract}

\section{Introduction}
\label{s1}

The canonical (CQG) \cite{1} and asymptotically safe (ASQG) \cite{2} approach to quantum gravity have 
received much attention in the past. To date there has been little contact between these 
programmes, mainly because CQG is exclusively for Lorentzian signature gravity while 
ASQG is mostly for Euclidean signature, see \cite{ 3aa, 3aaa, 3a} for Lorentzian work in ASQG (in particular \cite{3, 3b} for works in foliated spacetime)  and references 
therein.   

In recent work \cite{4,4a} we have started to investigate possible routes of contact between these programmes.
The fact that CQG works exclusively with the physically relevant Lorentzian signature must 
find its way in such a contact seeking enterprise. It was shown that a natural avenue is to 
formulate the generating functional of time ordered N-point functions in both frameworks 
which naturally leads to a Lorentzian version of the Wetterich equation fundamental for 
the ASQG programme and the corresponding heat kernel techniques. 

While in principle one can also attempt to construct the generating functional of Schwinger 
functions in both approaches \cite{4},
for generic matter coupling this becomes technically rather involved because when one 
integrates out the momenta one has to solve systems of partical differential equations rather 
than algebraic equations. In the present paper we consider a very particular matter content
which allows to avoid those partial differential equations \cite{5}. The classical geometry -- matter
system is described by the Einstein--Hilbert Lagrangian plus a generally covariant matter 
Lagrangian minimally coupled to geometry. Its matter content consists of scalar fields 
which give rise to a pressure free energy momentum tensor. Their covariant differentials 
define a set of four vector fields, one of which is a unit timelike geodesic tangent while 
the three others are orthogonal to it. Accordingly, the matter system can be considered 
a field theoretic modelling of a congruence of collision free massive test particles 
freely falling through the universe. The field theory formulation accounts for the fact 
that the ideal test particle does not exist and thus correctly implements geometry -- matter 
backreaction. While perhaps not entirely realistic, the matter can be considered as a dark 
matter candidate. Moreover, the natural reference frame provided by the four matter vector 
fields enables a straightforward disentangling of gauge degrees of freedom from the observables 
of the system.

Without matter, gravity carries two observable polarisations in four spacetime dimensions.
With four scalar fields serving as a material reference frame, the number of physical degrees of 
freedom is augmented to six. This can be considered as ``Higgsing the diffeomorphism 
gauge group''. One can encode those observable six degrees of freedom as two gravitational 
and and four scalar degrees of freedom (gravitational wave gauge) or one can encode them 
as six gravitational degrees of freedom (matter gauge). The analog of the latter gauge in
the electroweak interaction is the unitary gauge which reduces the four real Higgs fields 
to one and trades them for three longitudinal polarisations of massive vector bosons. 
We will choose the matter gauge in the present work due to its simplicity. The final 
picture after having removed the gauge degrees of freedom is that we obtain a non-linear
$\sigma$ model, that is, a classical field theory of dynamical 3-metrics $q$ (``symmetric matrices'') 
in four spacetime dimensions whose evolution is reigned by a conservative Hamiltonian $H=H(q,p)$
where $p$ is a symmetric matrix valued momentim conjugate to $q$. 
As one can show, the matter gauge fixes the lapse and shift of the Lorentzian signature 
four metric $g_L$ to be unity and zero 
respectively. Accordingly, implicitly this is still Lorentzian general relativity (GR) 
in the synchronous gauge dictated by a material matter reference system. 

The canonical quantisation of this model was constructed in \cite{6} using the Loop Quantum Gravity 
(LQG) choice of representation \cite{7} of the canonical commutation and adjointness relations 
among the observable fields. Furthermore, this model has been used to construct the one-loop effective action by means of its coherent state path integral representation \cite{7b}. It is therefore of considerable interest to consider the ASQG 
treatment of the system. Following the general steps laid out in \cite{1} we can construct 
the generating functional of Schwinger N-point functions starting from the canonical 
framework. This is obtained, as usual, by analytically continuing the generating functional 
of time ordered N-point functions, a step that is 
often called Wick rotation in Euclidean quantum field theory \cite{8}. The latter rely on the unitary evolution of the time 
zero fields with respect to the Hamiltonian operator mentioned above. 
That Schwinger generating functional has a path integral formulation as an integral over the phase space 
coordinatised by $q,p$ with respect to the natural Liouville measure, 
emphasising the fact that Euclidean quantum field theory is equivalent 
to classical statistical physics in four rather than three Euclidean dimensions.  
It turns out that for this model the integral over $p$ can be performed 
in closed form, a step that in general is not possible due to the appearance of partial differential 
equations as mentioned above. However, the final expression is not just the configuration 
space path integral with respect to Lebesgue measure but rather involves a Jacobian that is 
related to the the DeWitt metric which appears in the canonical formulation of GR \cite{9} (we refer the reader to \cite{9a} for a discussion on the measure in ASQG).
We get rid of that Jacobian by a field redefinition $q\mapsto Q=m(q)$ and understand the generating functional 
as generating correlation functions of that redefined metric $Q$. Classically, this is just 
a canonical transformation accompanied by $p\to P(q,p)$ and thus one bases the entire quantisation on that 
redefined field from the outset, which thus justifies this step. The absence of non-trivial measure 
factors avoids the use of additional ghost field integrals which otherwise serve to bring the 
Jacobian to the exponent. This is also why our Wetterich equation is exact without using super traces. 

After all of these preparatory steps, we end up with the Euclidean QFT formulation of a non-linear
$\sigma$ model with path integral measure simply the Lebesgue measure times the exponential 
of what one calls the Euclidean action. One can now release the 
ASQG machinery on this model, i.e. one makes use of the background field technique $Q=\bar{Q}+h$
and modifies the path integral integrand by a cutoff Gaussian in $h$ that depends on 
$\bar{Q}$ and a scale $k$. The running of the (Legendre transform of the logarithm of the) modified 
generating functional with $k$ is described by the Wetterich equation whose fixed points 
as $k\to \infty$ serve to fix the dimension free couplings of the Euclidean action of the path integral. The evaluation 
of the Euclidean action in terms of the dimensionful couplings as $k\to 0$ {\it define} the Euclidean
QFT from which one regains the canonical formulation (Hilbert space, vacuum, Hamiltonian) by 
Osterwalder-Schrader reconstruction \cite{11}. 
        
The exact Wetterich equation \cite{11a} must be truncated in practice and the computation of the truncation benefits 
from heat kernel techniques when one uses the Gaussian the background Laplacian of the background Euclidean
metric $\bar{g}$ that one obtains from $\bar{q}=m^{-1}(\bar{Q})$ by assigning unit lapse and 
zero shift. The corresponding DeWitt coefficients of the heat kernel expansion then also refer
to this restricted class of Euclidean signature metrics. Note that within this restricted 
class of metrics the shift between signatures is simply by analytic continuation of the lapse between 
the real and imaginary unit which is rigorously possible here because both lapse and shift are no longer integrated over.
See \cite{12} for general considerations of the lapse analytic continuation in ASQG.

In this paper we restrict to the Einstein--Hilbert truncation of the Wetterich equation as a first 
step \cite{11b}. We do not rely on cutoff Gaussians or cutoff kernels of the type usually employed 
in ASQG because these rely on the unproved assumption that these have a pre-image under 
the Laplace transform. In \cite{4} we have shown that the question about the existence of the Laplace pre-image
is non-trivial and can be answered in the negative for some of the suggested cutoff functions.
Instead we employ cutoff functions defined as the Laplace transform 
of a natural and concrete choice of pre-image cutoff functions. These pre-images are
Schwartz functions with respect to both heat kernel time and its inverse which ensures existence 
of otherwise singular heat kernel time integrals. The evaluation of the corresponding 
flow equations benefits from tools associated with the Barnes type integrals \cite{13} (we make use of the automatisation implemented in \cite{barnes}).      
We solve the flow and analyse the fixed point structure within the current truncation and compare 
with the literature.\\
\\
This article is organised as follows:\\
      
In section \ref{s2} we define the classical Gaussian dust model and sketch a few steps 
of the canonical classical and quantum analysis.

In section \ref{s3} we derive the formal path integral of the generating functional
of connected Schwinger functions. A new element of our treatment is to use a non-standard 
density weight to avoid otherwise non-trivial measure corrections as outlined above. It turns out 
that the Euclidean action is essentially the Einstein--Hilbert Lagrangian for 
Euclidean signature in synchronous gauge although the Hamiltonian comes from the Lorentzian signature 
Lagrangian. We explain why this is no contradiction and how Wick rotation has to be 
understood, in particular why one does not end up with a path integral for 
complex GR.  

Section \ref{s4} is devoted to a thorough discussion of cutoff functions and heat kernels
for the present theory which has a reduced symmetry group, namely active rather than 
passive diffeomorphisms that preserve the synchronous gauge. The cutoff function 
must be invariant with respect to that reduced symmetry group only which enhances 
the freedom in this choice. We pick a cutoff which is induced as much as possible  
from the standard choice of cutoff combined with a natural projection operator. 

In section \ref{s5} we compute the Einstein--Hilbert truncation of the corresponding 
Wetterich equation and analyse the flow equations, fixed points, critical exponents 
and $k=0$ limits. This involves, in addition to the usual heat kernel expansion, an 
additional expansion in the polynomial degree of non-minimal operators which involve 
multiple commutators between the standard Laplacian and the afore mentioned projection
operator. In this paper we focus as a first step on the lowest order of that additional
expansion in terms of the non-minimal operators. 

In section \ref{s6} we summarise, conclude and give an outlook.

In appendix \ref{sa} we sketch the afore mentioned expansion of the projected heat 
kernel with respect to the polynomial degree in non-minimal operators. 
  
In appendix \ref{sb} for the benefit of the reader  we include some background information on the Barnes integral technique 
that we use for our concrete choice of cutoff.           

\section{Gaussian dust}
\label{s2}

In the first subsection we briefly review the classical starting point of the theory
under consideration. In the second we perform a canonical transformation on the 
classical reduced phase space which is motivated in the third subsection.

\subsection{Review of the classical canonical treatment}
\label{s2.1}

We follow closely \cite{5,6} but generalise the analysis to arbitrary spacetime dimension.\\
\\
The generally covariant Gaussian dust Lagrangian density reads
\be \label{2.1}
l_{GD}=-|\det(g)|^{1/2}\;[\frac{\rho}{2}\;(g^{\mu\nu} \; 
T_{,\mu}\; T_{,\nu}+1)+g^{\mu\nu}\;T_{,\mu}\; (W_j\; S^j_{,\nu})]\;.
\ee
Here $\mu=0,1,...,D$ and $j=1,..,D$ where $D+1$ is the spacetime 
dimension. Thus it depends on $2(D+1)$ scalar fields $(T,S^j),\;(\rho,W_j)$. The Euler-Lagrange 
equations for $\rho,W_j$ respectively yield $g(U,U)=-1$ and $g(U,V_j)=0$ where 
$U^\mu=g^{\mu\nu} T_{,\nu}, V^\mu_j=g^{\mu\nu} S^j_{,\nu}$. This already implies that 
$\nabla_U U=0$, i.e., that $U$ is a unit timelike geodesic tangent.
The Euler-Lagrange equations for 
$T,S^j$ respectively yield the conservation 
equations $\nabla_\mu(\rho U^\mu+W^j V_j^\mu)=\nabla_\mu (W_j \;U^\mu)=0$ which imply 
that the energy momentum tensor $T^{\mu\nu}=\rho\;U^\mu U^\nu+2\;W^j V_j^{(\mu}\; U^{\nu)}$ 
is conserved. The pressure $p=\frac{1}{D}(g^{\mu\nu}+U^\mu U^\nu)\; T_{\mu\nu}=0$ vanishes
which motivates the attribute ``dust''.

The canonical analysis of the system proceeds
via a $D+1$ split of spacetime $M\cong \mathbb{R}\times \sigma$ where
$\sigma$ is a $D$-manifold and a corresponding ADM parametrisation of $g$ \cite{9} in terms 
of the pull-back metric $q$ on $\sigma$ and lapse $N$ and shift $N^a$ functions. 
Then (\ref{2.1}) becomes
\be \label{2.2}
l_{GD}=-N\;|\det(q)|^{1/2}\;[\frac{\rho}{2}\;(-[\nabla_n T]^2+q^{ab} T_{,a} T_{,b}+1)
+W_j(-[\nabla_n T]\;[\nabla_n S^j]+ q^{ab} T_{,a} S^j_{,b})]\;,
\ee
where $n=\frac{1}{N}[\partial_t-N^a \partial_a],\; a=1,..,D$ is the timelike unit normal
to the $t=$ const. surfaces. Computing the momenta $\partial l/\partial (\partial_t (.))$
conjugate to the eight scalar fields yields primary constraints $Z=Z^j=0$ 
where $Z, Z^j$ are the momenta conjugate to $\rho, W_j$ and $\zeta_I=P_I-\frac{W_I}{W_D} P_D,\; I=1,..,D-1$ 
where where $P, P_j$ are the momenta conjugate to $T, S^j$. The Legendre transform 
yields the Hamiltonian density (abbreviating $w=\sqrt{\det(q)}$)
\ba \label{2.3}
h_{GD} &=& v\; Z+v_j Z^j+[u^I-N w\frac{W_D q^{ab} T_{,a} S^I_{,b}}{P_D}] \zeta_I+
N^a[P\;T_{,a}+P_j S^j_{,a}]
\nonumber\\
&& +N[w^{-1}\{P\frac{P_D}{W_D}-\frac{\rho}{2}(\frac{P_D}{W_D})^2\}
+w\{\frac{\rho}{2}[1+q^{ab} T_{,a} T_{,b}]+\frac{W_D}{P_D}\;q^{ab}\; T_{,a} P_j S^j_{,b}\}]\;,
\ea
where $v,v_j,u^I$ are the velocities that one cannot solve for.
 
This has to be supplemented by the geometry contribution which 
we take as Einstein--Hilbert Lagrangian with 
cosmological constant $l=\kappa^{-1}|\det(g)|^{1/2}[R(g)-2\Lambda]$ with 
Ricci scalar $R(g)$ of the $D+1$ metric $g$. The canonical analysis yields additional 
primary constraints $p=p_a=0$ where $p,p_a$ are the momenta conjugate to $N,N^a$ while 
the momenta conjugate to $q_{ab}$ are denoted as $p^{ab}$. This yields the well known 
result \cite{9}
\be \label{2.4} 
\kappa h=u P+u^a P_a+N^a (-2 D_b p^b\;_a)+N[w^{-1}\{(p_{ab}\; p^{ab})^2-\frac{1}{D-1}
(p^a\;_a)^2\}-w\; (R(q)-2\Lambda)]\;,
\ee
where $u,u^a$ are again non-solvable velocities, $\kappa$ is proportional 
to Newton's constant and $\Lambda$ the cosmological 
constant. Here
spatial indices are moved with $q_{ab},\; q^{ab}, q_{ac} q^{cb}=\delta_a^b$, 
$D_a$ is the torsion free covariant differential compatible with $q$ and $R$ its 
Ricci scalar. The result (\ref{2.4}) holds for Lorentzian signature. Remarkably,
for Euclidean signature one just has to invert the sign in front of the term quadratic in 
$p_{ab}$.   

The total Hamiltonian density $h_T=h_{GD}+h$ generates equations of motion via 
Poisson brackets. To ensure that the primary constraints 
$p_a,\;p,\;Z_j,Z,\zeta_I$
are preserved in time 
one uses Dirac's algorithm \cite{14}. This leads to secondary constraints
\ba \label{2.5}
k_D &=& -\frac{1}{2}(\frac{P_D}{W_D})^2+w\frac{1}{2}[1+q^{ab} T_{,a} T_{,b}]
\nonumber\\
k &=& N[w^{-1}\{-P\frac{P_D}{W_D^2}+\rho\frac{P_D^2}{W_D^3}\}
+w\frac{1}{P_D}\;q^{ab}\; T_{,a} P_j S^j_{,b}]
\nonumber\\
c^T_a &=& c_a^{GD}+c_a,\;
c_a^{GD}=P\;T_{,a}+P_j S^j_{,a},\;\;\kappa c_a=-2 D_b p^b\;_a;\;\;
c^T=c^{GD}+c
\ea
where $c^{GD}, c$ are the coefficients of $N$ in (\ref{2.3}) and (\ref{2.4}) respectively.
The constraints $k,k_D$ result from stabilising $Z,Z_D$ respectively while $c^T,c^T_a$ result from 
stabilising $p,p_a$ respectively. Stabilising $Z_I,\zeta_I$ respectively fixes
$v_I,u^I$ respectively. Stabilising the secondary constraints produces no new 
constraints but fixes $v,v_D$. The total list of constraints is now subdivided into those 
of first class $p,p_a,c^T,c_a^T$ and second class pairs $(Z,k),\;(Z^D,k_D),\;(Z^I,\zeta_I)$.
We are asked to compute the corresponding Dirac bracket and to solve the second class constraints.
We solve $k_D,k=0$ respectively for $W_D,\rho$ in the form  
\be \label{2.6}
\frac{P_D}{W_D}=w\sqrt{1+q^{ab} T_{,a} T_{,b}},\;\;
\rho=\frac{W_D}{P_D}[P-w^2\frac{q^{ab}\; T_{,a} P_j S^j_{,b}}{(P_D/W_D)^2}]
\ee 
and can insert this into the remaining constraints as we use the Dirac bracket 
instead of the Poisson brackets. This yields
\be \label{2.7}
c^T=c+\frac{P-q^{ab} T_{,a} c_b}{\sqrt{1+q^{cd} T_{,c} T_{,d}}}\;,
\ee
where $c^T_a=0$ was used. The constraints $\zeta_I$ can be solved for $W_I$. In this way,
all variables $(\rho,Z),\; (W_j,Z^j)$ have completely disappeared.

After the second class constraints have been solved we can focus on the left over variables 
and constraints $p,p_a,c^T, c^T_a$ which just involve the canonical pairs          
$(N,p)\;(N^a,p_a),\;(q_{ab},p^{ab})$. As the difference between Dirac and Poisson bracket 
involves terms that contain at least one Poisson bracket with $Z,Z^j$, the Dirac bracket 
coincides with the Poisson bracket on functions of our remaining canonical pairs.
The Hamiltonian density is now
\be \label{2.8}
h_T=\kappa^{-1}[u\; P+u^a\; P_a]+N\; c^T+N^a\;c_a^T]\;.
\ee
We completely reduce the phase space by imposing gauge conditions on the variables 
$T,S^j$ namely
\be \label{2.9}
T=T_\ast=t,\;S^j=S^j_\ast=\delta^j_a\; x^a\;.
\ee
The stability of these gauge conditions under the Hamiltonian flow generated by
$\frac{d}{dt}(.)=\partial_t+\{\int_\sigma \; d^Dx h_T(x),.\}$
now fixes lapse and shift
\be \label{2.10}
N=N_\ast=1,\; N^a=N^a_\ast=0
\ee
and we solve $c^T,c_a^T=0$ for $P^\ast=-c,\;P_j=P^\ast_j=-\delta^a_j\; c_a$.
The stability of the fixed lapse and shift fixes $u=u_\ast=0,\;u^a=u^a_\ast=0$
and we have trivially $Z=Z_\ast=0,\;Z^j=Z^j_\ast=0$.
The reduced phase space is coordinatised by the true degrees of freedom or observables 
$q_{ab},p^{ab}$ and the reduced Hamiltonian on functions $F$ of those is given by
\be \label{2.11}
\{H,F\}:=\{\int\; d^Dx\; h_T(x),F\}_{u=u_\ast,N=N^\ast,Z=Z_\ast,P=P_\ast,T=T_\ast,S=S_\ast}
=\{\int\; d^Dx \;c(x), F\}\;.
\ee
In other words, we end up with a conservative Hamiltonian system defined by a non-linear
$\sigma$ model of matrices $q$ with conjugate momentum $p$ and Hamiltonian 
\be \label{2.12}
H=\kappa^{-1}\;\int_\sigma\; d^Dx\;
[w^{-1}\{(p_{ab}\; p^{ab})^2-\frac{1}{D-1}(p^a\;_a)^2\}-w\; (R(q)
-2\Lambda)]\;.
\ee

Note that (\ref{2.12}) is no longer constrained to vanish, it is not the generator 
of temporal diffeomorphism gauge transformations but rather of physically observable 
time translations. It even has an infinite number of conserved charges: For every vector 
field $w^a$ on $\sigma$, the functional $c[u]:=\int_\sigma\; d^Dx\; u^a\; c_a$ is a 
constant of motion. Note that the $c[u]$ are also no longer constrained to vanish,
they do not generate spatial diffeomorphism gauge transformations but 
rather physically observable active spatial diffeomorphisms. The easiest way to check this
is to recall the hypersurface deformation algebroid relations \cite{9} 
\be \label{2.12a}
\{c[u],c[v]\}=-c[[u,v]],\;  
\{c[u],c[f]\}=-c[u[f]],\;
\{c[f],c[g]\}=-c[q^{-1}(f\;dg-g\;df)],v]]\;,
\ee
where $[u,v],\; u[f]$ are respectly the Lie derivatives of the vector field $v$ and scalar
$f$ respectively with respect to $u$ and to note that $H=c[f=1]$. Here 
with $c[f]:=\int_\sigma\; d^Dx\; f\; c$.

\subsection{Canonical transformation}
\label{s2.2}

For reasons that will become transparent in the next subsection, we consider passing to new canonical 
configuration coordinates 
\be \label{2.13}
Q_{ab}=[\det(q)]^r\; q_{ab} \;\; \Rightarrow\;\; \det(Q)=\det(q)^{1+rD},\;
q_{ab}=[\det(Q)]^{-\frac{r}{1+rD}}\; Q_{ab}  
\ee
for some $r\in \mathbb{R}$. The conformally rescaled metric $Q$ is a twice covariant 
symmetric tensor field of spatial density weight $2r$ with respect to spatial 
diffeomorphisms. We can complete this to a canonical transformation by passing
to new momenta
\be \label{2.14}
P^{ab}=[\det(q)]^{-r}\;[p^{ab}-\frac{r}{1+rD}\; q_{cd}\; p^{cd} \; q^{ab}]\;\;
\Rightarrow\;\; P^{ab} Q_{ab}=\frac{1}{1+rD} p^{ab}\; q_{ab},\;
p^{ab}=[\det(Q)]^{\frac{r}{1+rD}}[P^{ab}+r\; Q_{cd} P^{cd} Q^{ab}] 
\ee
where $q^{ac} q_{cb}=Q^{ac} Q_{cb}=\delta^a_b$. That is, the non-vanishing Poisson brackets are
\be \label{2.15}
\{P^{ab}(x), Q_{cd}(y)\}=\{p^{ab}(x), q_{cd}(y)\}=\kappa\delta^{(D)}(x,y)\; \delta^{(a}_c \;\delta^{b)}_d\;.
\ee
The Hamiltonian reads in terms of $Q,P$
\be \label{2.16}
H=\int\; d^Dx\;
\{[\det(Q)]^{-\frac{1}{2(1+rD)}}[Q_{ac}\;Q_{bd}-\frac{1+2r+r^2 D}{D-1} Q_{ab}\;Q_{cd}]\; P^{ab}\; P^{cd}
-[\det(Q)]^{\frac{1}{2(1+rD)}}\; [R(\det(Q)^{-\frac{r}{1+rD}}\; Q)-2\Lambda]\} 
\ee
The transformation of the Ricci scalar under conformal transformations 
$q_{ab}=\Omega^2 Q_{ab},\; \Omega^2=\det(Q)^{-\frac{r}{1+rD}}$
is given by the formula \cite{9} 
\be \label{2.17}
R(q)=\Omega^{-2}\;\{R(Q)
-2(D-1) Q^{ab} D^Q_a D^Q_b \ln(\Omega)
-(D-2)(D-1) Q^{ab} [D^Q_a \ln(\Omega)] [D^Q_b \ln(\Omega)]\}\;,
\ee
where $D_a^Q$ is the covariant differential of $Q$ as if it had density weight zero,
$R(Q)$ is the Ricci scalar of $Q$ as if it had density weight zero and $\Omega$ is treated as a 
scalar of density weight zero, that is
\ba \label{2.18}
D_a^Q \ln(\Omega)=\partial_a \ln(\Omega)
&=& -\frac{r}{2(1+rD)} Q^{bc} Q_{bc,a}
=-\frac{r}{1+rD} \Gamma^b_{ba}(Q),\;\;
\nonumber\\
Q^{ab} D_a^Q D_b^Q\ln(\Omega) &=& -\frac{r}{1+rD}\;Q^{ab}
[\partial_a \Gamma^c_{cb}(Q)-\Gamma^c_{ab}(Q)\;\Gamma^d_{dc}(Q)]\;.
\ea
Note that since both $Q$ and $\Omega$ have non-trivial density weight for $r\not=0$, 
neither $R(Q)$ nor the additional terms in (\ref{2.17}) are scalars (density weight zero)
but they combine altogether to a scalar.  

\subsection{Measure Jacobians}
\label{s2.3} 

The motivation for introducing $Q,P$ is as follows: When we construct the path integral for
our theory, in a first step one arrives at a path integral over phase space with 
Liouville measure in $Q,P$ times the exponential of the Hamiltonian. Since the Hamiltonian
is quadratic in $P$, one can integrate out the momenta to arrive at the usual configuration 
path integral but there is a Jacobian left over coming from doing the Gaussian integral.
We will determine $r$ such that this Jacobian is independent of $Q$ so that the configuration
path integral measure is simply the Lebesgue measure in $Q$. We are thus concerned with the 
term in (\ref{2.16}) which is quadratic in $P$. The discussion is simplified by introducing
a (density valued) D-Bein $e^a_i,\; i=1,..,D$ with inverse $e_a^i$ satisfying 
$Q_{ab}=\delta_{ij} e^i_a e^j_b$ and the objects
\be \label{2.19}
\hat{P}^{ij}:=e^i_a\; e^j_b\; P^{ab} [\det(Q)]^{-\frac{1}{4(1+rD)}}\;\;\
\Rightarrow\;\; P^{ab}=e^a_i e^b_j \hat{P}^{ij} [\det(Q)]^{\frac{1}{4(1+rD)}} \;.  
\ee
In terms of the new integration variables $P^{ij}$ the quadratic term is simply
\be \label{2.20}
[\delta_{i(k} \delta_{l)j}-u \;\delta_{ij} \delta_{kl}]\; \hat{P}^{ij} \hat{P}^{kl},\;
u=\frac{1+2r+r^2 D}{D-1}\;.
\ee
We pick some gauge of the internal rotation freedom that is left over in determining 
$e_a^i$ given $Q_{ab}$ and think of $e_a^i=e_a^i(Q)$ as determined entirely by $Q$.
Then the Liouville measure transforms as 
\be \label{2.21}
d\mu_L(Q,P)=[d^k Q]\; [d^k P]=[d^k Q]\; [d^k \hat{P}]\; |\det([\partial P]/[\partial \hat{P}])|\;,
\ee
where $k=D(D+1)/2$ is the dimension of the space of symmetric tensors. The Jacobian is given
explicitly by 
\be \label{2.22}
|\det([\partial P]/[\partial \hat{P}])|=[\det(Q)]^{\frac{k}{4(1+rD)}}
|\det(E)|,\; E^{ab}_{ij}=e^a_{(i}\; e^b_{j)}\;.
\ee
The matrix $E$ is a $k\times k$ matrix. To compute its determinant we introduce a 
lexicographic ordering of symmetric index pairs $A=(a,b),\; B=(c,d)$ with 
$a\le b,\;c\le d$: We define 
$A<B$ if either $a<c$ and no condition on $b,d$ or $a=b$ and $b<d$. We define $A=B$
if $a=c$ and $b=d$. In this way we have a one to one correspondence between symmetric 
index pairs $(a,b)$ and values $A=1,..,k$. We define the same ordering for 
symmetric index pairs $I=(i,j),\; J=(k,l)$ with $i\le j,\; k\le l$. Then we have 
the following result.
\begin{Lemma} \label{la2.1} ~\\
Suppose that $e^a_i$ is an upper triangular $D\times D$ matrix, i.e. $e^a_i=0$ for $i<a$
where the $a$ type index labels rows and the $i$ type index labels columns. 
Then $E^A_I$ is upper triangular $k\times k$ matrix
with respect to the lexicographic ordering defined above, that is, 
$E^A_I=0$ for $I<A$.
\end{Lemma}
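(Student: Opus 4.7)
The plan is to unpack the definition $E^{ab}_{ij} = e^a_{(i}\, e^b_{j)} = \tfrac{1}{2}\bigl(e^a_i\, e^b_j + e^a_j\, e^b_i\bigr)$ for $a \le b$, $i \le j$, and then show that both summands of this symmetrisation separately vanish whenever the index pair $I = (i,j)$ is strictly smaller than $A = (a,b)$ in the lexicographic order. By definition this means either (i) $i < a$, or (ii) $i = a$ and $j < b$, and I would simply treat these two cases in turn, invoking the hypothesis $e^a_i = 0$ for $i < a$ on each factor.

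In case (i) the chain $i < a \le b$ gives $e^a_i = 0$, which kills the first summand, and $e^b_i = 0$, which kills the second summand, independently of the value of $j$. In case (ii) one first notes that $j \ge i = a$ together with $j < b$ forces $a < b$, so this case can only occur off the diagonal $a = b$. Then $j < b$ yields $e^b_j = 0$ for the first summand, while $a < b$ yields $e^b_i = e^b_a = 0$ for the second summand. In each case both terms vanish, so $E^A_I = 0$, which is the claimed upper triangularity of $E$ on the space of symmetric pairs.

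I do not anticipate any real obstacle: the argument is a clean combinatorial match between the chosen lexicographic ordering on symmetric pairs and the pattern of zeros dictated by triangularity of $e^a_i$. The only subtlety worth flagging is the symmetrisation, so one must not forget the second term $e^a_j\, e^b_i$ and must verify in case (ii) that $a < b$ before quoting $e^b_a = 0$ on that second term; both of these checks are immediate from the constraints $i \le j$ and $a \le b$ that are built into the lex order.
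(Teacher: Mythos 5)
Your proof is correct and follows essentially the same two-case argument as the paper: case $i<a$ kills both summands via $e^a_i=e^b_i=0$, and case $i=a,\ j<b$ kills them via $e^b_j=0$ and $e^b_i=0$ (your route through $a<b$ is the same fact as the paper's $i\le j<b$, since $i=a$ there). No gaps.
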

\begin{proof}: \\
Suppose that $I=(i,j)<A=(a,b)$ with $i\le j,\;a\le b$ and consider 
$E^A_I=\frac{1}{2}(e^a_i e^b_j+e^a_j e^b_i)$. Thus either I. $i<a$ and no 
condition on $j,b$ or II. $i=a$ and $j<b$. In case I. 
we have $e^a_i=0$ because $i<a$ and $e^b_i=0$ because $i<a\le b$, hence $E^A_I=0$.
In case II. we have $e^b_j=0$ because $j<b$ and $e^b_i=0$ because $i\le j<b$, hence 
$E^A_I=0$.
\end{proof}
\begin{Lemma} \label{la2.2} ~\\
If $e^a_i$ is upper triangular then $\det(E)=(\frac{1}{2})^{k-D}\;[\det(e)]^{D+1}$.
\end{Lemma}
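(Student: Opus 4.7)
The plan is to lean directly on Lemma \ref{la2.1}: since $E^A_I = 0$ whenever $I < A$ in the lexicographic order, the matrix $E$ is upper triangular in the induced basis, and hence $\det(E) = \prod_{A=1}^{k} E^A_A$. The entire task therefore reduces to computing the $k$ diagonal entries $E^{ab}_{ab}$ with $a \le b$ and multiplying them together.

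Next I would split the diagonal entries into the two natural cases. If $a = b$, then $E^{aa}_{aa} = \frac{1}{2}(e^a_a e^a_a + e^a_a e^a_a) = (e^a_a)^2$. If $a < b$, then $E^{ab}_{ab} = \frac{1}{2}(e^a_a e^b_b + e^a_b e^b_a)$, and the upper-triangularity hypothesis forces $e^b_a = 0$ whenever $a < b$, so that only the term $\frac{1}{2} e^a_a e^b_b$ survives.

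Finally I would carry out the combinatorial bookkeeping. A factor of $\frac{1}{2}$ is picked up precisely once for each strictly ordered pair $a < b$, and there are $\binom{D}{2} = k - D$ such pairs. Each diagonal entry $e^c_c$ of $e$ contributes twice from the term $A = (c, c)$ and once from each of the $D - 1$ strictly ordered pairs in which the label $c$ appears (either as the smaller or the larger index), producing a total exponent of $D + 1$. Assembling the product gives $\det(E) = (1/2)^{k - D} \prod_c (e^c_c)^{D + 1} = (1/2)^{k - D} [\det(e)]^{D + 1}$, where the last equality uses that $\det(e) = \prod_c e^c_c$ for upper triangular $e$. The only mild obstacle is this multiplicity count; all structural content is already supplied by Lemma \ref{la2.1}.
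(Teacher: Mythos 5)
Your proof is correct and follows essentially the same route as the paper: invoke Lemma \ref{la2.1} to reduce $\det(E)$ to the product of diagonal entries, evaluate $E^{aa}_{aa}=(e^a_a)^2$ and $E^{ab}_{ab}=\tfrac{1}{2}e^a_a e^b_b$ for $a<b$ using upper triangularity, and then do the same multiplicity count (a factor $\tfrac12$ for each of the $k-D$ strict pairs, and exponent $2+(D-1)=D+1$ for each $e^c_c$). Nothing is missing.
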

\begin{proof}: \\
By lemma \ref{la2.1}, $E$ is upper triangular and the determinant of an upper triangular 
matrix is the product of its diagonal entries $\det(E)=\prod_{A=1} E^A,\;
E^A:=[E^A_I]_{I=A}=E^A_I \delta^I_A$. Using again 
$E^A_I=\frac{1}{2}(e^a_i e^b_j+e^a_j e^b_i)$ with $A=I$ i.e. $a=i, \; b=j$ we find 
for $a=b$ that $E^A=(e^a)^2$ with $e^a=[e^a_i]_{i=a}=e^a_i \delta^i_a$ while for $a<b$ we have 
$E^A=\frac{1}{2} e^a e^b$ because $e^b_i=0$ for $i=a<b$. There are $k-D$ diagonal 
entries of $E$ with $a<b$ hence $\det(E)=(\frac{1}{2})^{k-D}\;\prod_{1\le a\le b \le d}
e^a e^b$. Given $1\le c\le d$, we ask how many times the factor $e^c$ occurs in this 
product of $2k=D(D+1)$ factors of the $e^d$. It occurs twice in $e^a e^b$ for $a=b=c$,
it occurs once in $e^a e^b$ for $c=a<b$ i.e. $b=c+1,.., D$ which are $D-c$ possibilities 
and it occurs once in $e^a e^b$ for $a<b=c$ i.e. $a=1,..,c-1$ which are $c-1$ possibilities.
Thus each $e^c$ occurs $2+D-1=D+1$ times in the product. It follows 
$\prod_{a\le b} e^a e^b=(\prod_c e^c)^{D+1}=[\det(e)]^{D+1}$ where we used again that 
$e$ is upper triangular.
\end{proof}
\begin{Proposition} \label{prop2.1} ~\\
The Jacobian (\ref{2.22}) is constant on the phase space iff 
\be \label{2.23}
r=\frac{D-4}{4\;D}
\ee
\end{Proposition}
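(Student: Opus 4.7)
The plan is to combine the explicit form (\ref{2.22}) of the Jacobian with Lemma \ref{la2.2} and the defining relation $Q_{ab}=\delta_{ij}e^i_a e^j_b$ of the D-bein; after expressing everything in terms of $\det(Q)$, the claim reduces to a one-variable algebraic equation in $r$.

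First I would exploit the internal rotation gauge freedom invoked in the paragraph preceding (\ref{2.22}) to choose $e^a_i$ to be upper triangular pointwise, which is precisely the hypothesis of Lemmas \ref{la2.1} and \ref{la2.2}. Lemma \ref{la2.2} then gives $|\det(E)|=2^{-(k-D)}|\det(e^a_i)|^{D+1}$. To translate $\det(e)$ into $\det(Q)$, I use $\det(Q)=[\det(e^i_a)]^2$ together with $e^a_i=(e^i_a)^{-1}$, which yields $|\det(e^a_i)|=[\det(Q)]^{-1/2}$ and hence $|\det(E)|=2^{-(k-D)}[\det(Q)]^{-(D+1)/2}$. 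Substituting this into (\ref{2.22}) collapses the $Q$-dependence of the Jacobian to a single power,
\begin{equation}
|\det([\partial P]/[\partial \hat{P}])|=2^{-(k-D)}\,[\det(Q)]^{\frac{k}{4(1+rD)}-\frac{D+1}{2}}.
\end{equation}

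Since $\det(Q)$ is a non-constant function on phase space, the Jacobian is constant if and only if the exponent vanishes, i.e.\ $k=2(D+1)(1+rD)$. Using $k=D(D+1)/2$ and cancelling the common factor $(D+1)$, this reduces to $D=4(1+rD)$, and solving for $r$ yields (\ref{2.23}). I do not anticipate a real obstacle: both non-trivial inputs are the already proved Lemmas \ref{la2.1}--\ref{la2.2}, and what remains is the bookkeeping displayed above. The only point that deserves care is the sign of the power of $\det(Q)$ produced when passing from $\det(e^i_a)$ to $\det(e^a_i)$ via matrix inversion, since a slip there would shift the critical value of $r$ away from $(D-4)/(4D)$.
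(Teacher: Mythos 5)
Your proof is correct and follows essentially the same route as the paper: fix the upper-triangular gauge for the D-bein, apply Lemmas \ref{la2.1}--\ref{la2.2}, convert $|\det(e^a_i)|$ to $[\det(Q)]^{-1/2}$ via $\det(Q)=[\det(e^i_a)]^2$, and demand that the resulting exponent $\frac{k}{4(1+rD)}-\frac{D+1}{2}$ vanish. The sign you flag as the delicate point is handled correctly, matching the paper's identity $\det(\{Q^{ab}\})=|\det(\{e^a_i\})|^2=\det(Q)^{-1}$.
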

\begin{proof}:\\
Given a positive definite metric $Q_{ab}$ we find a unique co-D-Bein such that 
$e^a_i(Q)$ is upper triangular with positive diagonal entries (see e.g. \cite{15}).
Combining (\ref{2.22}) with lemmas \ref{la2.1}, \ref{la2.2} the Jacobian is  
\be \label{2.24}
|\det([\partial P]/[\partial \hat{P}])|=2^{D-k}\;
[\det(Q)]^{\frac{k}{4(1+rD)}-\frac{D+1}{2}}
\ee
where we used that $\det(\{Q^{ab}\})=|\det(\{e^a_i\})|^2=\det(Q)^{-1}$. This becomes 
independent of $Q$ when $D=2\frac{k}{D+1}=4(1+rD)$ i.e. $r=\frac{D-4}{4\;D}$.
\end{proof}
Remarkably for $D=4$ nothing needs to be done while for the physically interesting case 
$D=3$ we have $r=-\frac{1}{12}$, i.e. $Q$ has density weight $-\frac{1}{6}$. 

With the tools provided one now has to integrate over $\hat{P}\in \mathbb{R}^k$ the 
function $\exp(-M_{IJ} \hat{P}^I \hat{P}^J)$ using the Lebesgue measure
$d^k \hat{P}$ where $M_{IJ}=\delta_{IJ}-u\;\delta_I\delta_J$ where 
$\delta_{I}=\delta_{ij}$ for $I=(i,j)$ and $\delta_{IJ}=\delta_{k(i}\;\delta_{j)k}$,
see (\ref{2.20}). It is important to note that in the Einstein summation convention 
of say $u_I v^I$ we {\it do not} mean the lexicographic summation
$\sum_{I=1}^k u_I v^I=\sum_{i=1}^D u_{ii} v^{ii}+\sum_{1\le i<j\le D} u_{ij} v^{ij}$
but rather the original tensor summation
$\sum_{i,j=1}^D u_{ij} v^{ij}=\sum_{i=1}^D u_{ii} v^{ii}+2\sum_{1\le i<j\le D} u_{ij} v^{ij}$.
The lexicographic ordering was introduced just in order to keep track of the number
of independent summation variables which has an influence on the form of the Jacobian. 
It is easy to diagonalise the matrix $M$ in the space of symmetric matrices 
$T^I=T^{ij}$: The vector $\delta^I$ is an eigenvector with eigenvalue 
$1-u D=\frac{D-1-D(1+2r+r^2 D)}{D-1}$
which is easily checked to be negative for all values $D\ge 1$. In the $k-1$ dimensional 
space of trace-free vectors $\delta_I T^I=\delta_{ij} T^{ij}=0$ pick an orthonormal 
basis $T^I_\alpha,\;\alpha=1,..,k$ with respect to the metric $\delta_{IJ}$. Then 
$T^I_\alpha$ has eigenvalue unity with respect to $M$. It follows that the 
DeWitt metric $M$ has signature $(1,k-1)$. Thus after switching to adapted integration 
variables at the price of a numerical constant Jacobian we are confronted with 
fractions of integrals of the form $Z[F]/Z[1]$ where 
\be\label{2.25}
Z[F]=\int\;d^k y\;\int\; d^k x\; \exp(-\eta_{IJ} x^I x^J+i x^I y_I) \; F(y)\;,
\ee 
where $\eta$ is the Minkowski metric in $k$ dimensions. We regularise this divergent integral 
by replacing $(x^0)^2$ by $-a(x^0)^2$ for $a>0$ resulting in $Z_a[F]$ and define 
$Z[F]$ as the analytic continuation of $Z_a[F]$ to $a=-1$. One obtains
\be \label{2.26}
\frac{Z_a[F]}{Z_a[1]}=\frac{z_a[F]}{z_a[1]},\;
z_a[F]=\int\;d^k y\;
\exp(-\frac{1}{4}[y_0^2/a+\sum_{A=1}^{k-1} y_k^2])\; F(y)
\to \int\;d^k y\;
\exp(-\frac{1}{4}\eta^{IJ} y_I y_J)\; F(y)\;.
\ee

More precisely, the path integral involves the exponential of 
\be \label{2.27}
-\int\;dt\;\int\; d^Dx\; \{[\det(Q)]^{-\frac{1}{2(1+rD)}}[Q_{ac} Q_{bd}-u Q_{ab} Q_{cd}]
P^{ab} P^{cd}+i\dot{Q}_{ab} P^{ab}-[\det(Q)]^{\frac{1}{2(1+rD)}}\;[R([\det(Q)]^{-\frac{r}{1+rD}} Q)-2\Lambda]\}
\ee
Set $\hat{Y}_{ij}=\dot{Q}_{ab} e^a_i e^b_j [\det(Q)]^{\frac{1}{4(1+rD)}}$ then the
$P$ dependent part of the integrand of (\ref{2.27}) can be written
\be \label{2.28}
M_{IJ} \hat{P}^I \hat{P}^J+iY_I \hat{P}^I
=M_{IJ}[\hat{P}^I+\frac{i}{2}\;M^{IK} Y_K]\;[\hat{P}^J+\frac{i}{2}\;M^{JL} Y_L]\;
+\frac{1}{4} M^{IJ} Y_I Y_J\;,
\ee
where $M^{IK} M_{KJ}=\delta^I_J$. The Ansatz $M^{IJ}=\delta^{IJ}-u' \delta^I \delta^J$ 
gives $M^{IK} M_{KJ}=[\delta^I_J-u\delta^I\delta_J]-u'\delta^I[1-uD]\delta_J$ hence 
$u'=\frac{u}{uD-1}$. It follows that after integrating over $\hat{P}$ under the assumption 
$u<0$ and then continuing to the actual positive value of $u$ (\ref{2.27}) becomes
\be \label{2.29}
-\int\;dt\;\int\; d^Dx\; \{
\frac{1}{4}[\det(Q)]^{\frac{1}{2(1+rD)}}[Q^{ac}\;Q^{bd}-u'\;Q^{ab}\;Q^{cd}]\dot{Q}_{ab}\dot{Q}_{cd}
-[\det(Q)]^{\frac{1}{2(1+rD)}}\;[R([\det(Q)]^{-\frac{r}{1+rD}} Q)-2\Lambda]\}\;.
\ee
We check that the term quadratic in $\dot{Q}$ inside the curly bracket of (\ref{2.29})
reduces to the extrinsic curvature term when transforming back 
from $Q$ to $q$. Indeed this gives using $\dot{Q}=[\det(q)]^r\;[\dot{q}+r\;v\;q],\;v=q^{ab}\dot{q}_{ab}$
\ba \label{2.30}
&& \frac{1}{4}[\det(q)]^{1/2}\;[q^{ac} \; q^{bd}-u'\;q^{ab}\; q^{cd}]\;
[\dot{q}_{ab}+r\;q_{ab}\;v]\;
[\dot{q}_{cd}+r\;q_{cd}\;v]\;
\nonumber\\
&=&
\frac{1}{4}[\det(q)]^{1/2}\;[q^{ac} \; q^{bd}
\;\dot{q}_{ab}\;\dot{q}_{cd}+[2r+D\;r^2-u'(1+rD)^2]\;v^2]
\nonumber\\
&=& \frac{1}{4}[\det(q)]^{1/2}\;[q^{ac} \; q^{bd}-q^{ab} \; q^{cd}]\dot{q}_{ab}\;\dot{q}_{cd}\;,
\ea
where we used $2r+r^2 D=-1+u(D-1), u'=u/(uD-1)$. However, at unit lapse and zero shift 
the extrinsic curvature is
\be \label{2.31}
k_{ab}=\frac{1}{2N}\;[\dot{q}_{ab}-[{\cal L}_{\vec{N}} q]_{ab}]=\frac{\dot{q}_{ab}}{2}
\ee
which means that (\ref{2.29}) equals in terms of $q,k$
\be \label{2.33bis}
-\int\;dt\;\int\; d^Dx\; [\det(q)]^{1/2}\{[q^{ac} \; q^{bd}-q^{ab} \; q^{cd}]\;k_{ab}\;k_{cd}
-[R(q)-2\Lambda]\}\;.
\ee 
Now recall the Codacci relation (e.g. \cite{9})
\be \label{2.33}
R_\epsilon(g)=R(q)-\epsilon\;[k^{ab} k_{ab}-(k^a\;_a)^2]+2\epsilon\nabla_\mu\;[
n^\nu (\nabla_\nu n^\mu)-n^\mu (\nabla_\nu n^\nu)]\;,
\ee
where $\epsilon=\pm 1$ for Euclidean/Lorentzian signature GR respectively and $\nabla$ 
is the covariant differential of $g$. The 
last term times $|\det(g)|^{1/2}$ is a total divergence. Using $n^\mu=\delta^\mu_t$ 
and $g_{tt}=\epsilon,\; g_{ta}=0$ for unity lapse and zero shift one finds 
that it equals $\frac{\partial_t^2\sqrt{\det(q)}}{\sqrt{\det(q)}}$. Therefore 
(\ref{2.33bis}) equals (using that $\det(g)=\epsilon \det(q)$)
\be \label{2.35}
+\int_{\mathbb{R}\times\sigma}\;d^{D+1} x \; \sqrt{\det(g)} R_{\epsilon=+1}(g)
-2\lim_{T\to \infty}\; [\partial_t[\int_\sigma\; d^Dx\; \sqrt{\det(q)}]\}_{t=-T}^{t=T}\;.  
\ee
The Gibbons-Hawking boundary term \cite{16} is equal to the difference between the 
distant future and past of the time derivative of the total volume of the universe
(due to $\dot{\sqrt{\det(q)}}=\sqrt{\det(q)} q^{ab}\;k_{ab}$ it can also be written 
in terms of the trace of the extrinsic curvature). 
The volume term is the Einstein--Hilbert action of {\it Euclidean signature} GR although
the Hamiltonian corresponds to {\it Lorentzian signature} GR. This happens because
of the analytic continuation of time that we performed. Note, however, that this 
does not mean that the time argument of the metric becomes imaginary as it is 
often criticised. The path integral is strictly over real valued spacetime and spacetime metrics 
with Euclidean signature in {\it synchronous gauge} (unit lapse, zero shift) and thus 
reduces to a path integral over just the dynamical spatial metric.    
 
In what follows we will not ignore the boundary term by restricting the path of $D-$metrics 
to those which have the same total volume derivative in the distant past and future but rather 
keep the action in the form (\ref{2.29}). 
         
\section{Quantisation}
\label{s3}

In this section we quantise the classical theory of the previous section based on the 
polarisation $(Q,P)$ and the Hamiltonian $H$. We use canonical 
quantisation and from that derive a path integral formulation. The inverse path from 
the path integral to canonical quantisation is known as Osterwalder-Schrader reconstruction 
on which we briefly touch at the end of the section. We will follow closely \cite{4} and will 
be brief in our presentation which is included only for reasons of self-containedness.\\
\\
The starting point is the Weyl algebra $\mathfrak{A}$ of the {\it time zero fields}
$Q,P$. It is generated by the Weyl elements
\be \label{3.1}
w[f]=e^{i\;Q(f)},\;w[g]=e^{i \;P(g)},\;
Q(f)=\int_\sigma\;d^D x\; f^{ab}\; Q_{ab},\; 
P(g)=\int_\sigma\;d^D x\; g_{ab}\; P_{ab},\; 
\ee
which are subject to the canonical commutation relations (CCR) 
\be \label{3.2}
w[g]\;w[f]\;w[-g]=e^{-ig(f)}\;w[f]
\ee
(all other commutators are trivial and we have set $\hbar=1$) and adjointness relations (AR)
\be \label{3.3}
w[f]^\ast=w[-f],\;w[g]^\ast=w[-g]\;.
\ee
We then consider the representation theory of $\mathfrak{A}$. We consider cyclic representations
$(\rho,{\cal H})$ of the elements $a\in \mathfrak{A}$ by operators $\rho(a)$ on a Hilbert space
$\cal H$ which are such that there exists a unit vector $\Omega\in {\cal H}$ such that 
${\cal D}:=\rho(\mathfrak{A})\Omega$
is dense. As is well known, such representations are equivalently defined by positive, linear
normalised functionals $\omega$ on $\mathfrak{A}$ where the correspondence is given by 
$\omega(a)=<\Omega,\;\rho(a)\Omega>_{{\cal H}}$. In QFT there is no uniqueness theorem 
on the choice of $\omega$ and thus to select a suitable $\omega$ we use the physical input 
that $\omega$ allows for a quantisation of the Hamiltonian $H$ as a self-adjoint operator 
$\rho(H)$ densely defined on $\cal D$. This step is very hard and for interacting theories 
not yet under rigorous control. However, assuming this to be the case we have at our disposal
the unitary operators $U(t)=e^{-it \rho(H)}$ which define the Heisenberg evolution 
$\rho(a)(t)=U(t)\rho(a)U(-t)$. We also assume that $\rho(H)$ has a ground state $\Omega_H$,
i.e. an eigenvector of lowest eigenvalue (by shifting $H$ by a constant we can assume 
that eigenvalue to be zero). 

We are then interested in the time ordered correlators
\be \label{3.4}
<\Omega_H,\; \rho(a_N)(t_N)\;..\;\rho(a_1)(t_1)\;\Omega_H>\;,
\ee
with $t_{l+1}>t_l,\;l=1,..,N-1$ which occur e.g. in scattering matrix element computations
using the LSZ reduction formula. If the representation is regular for the Weyl elements 
we also have access to the fields themselves and not only their exponentials. Hence 
we specialise (\ref{3.4}) to
\be \label{3.5}
<\Omega_H,\; \rho(Q(f_N))(t_N)\;..\;\rho(Q(f_1))(t_1)\;\Omega_H>\;.
\ee   
This suggests to interpret the test functions $f_k(x)$ on $\sigma$ as instantaneous 
evaluations of a spacetime test function $F$ such that $F(t_l,x)=f_l(x)$ and to 
consider 
\be \label{3.6}
<\Omega_H,\; T_l(e^{i \rho(Q(F))})\;\Omega_H>,\;
\rho(Q(F))=\int\; d^{D+1}x\; \rho(Q_{ab}(x))(t) F^{ab}(t,x)\;,
\ee  
where the time ordering symbol $T_l$ instructs to order the latest time dependence to the 
outmost left. The functional derivatives of (\ref{3.6}) with respect to $F$ at $F=0$ yield the 
time ordered $N-$point functions. 

In trying to derive a practically useful expression for (\ref{3.6}) e.g. in form of a 
path integral, an inconvenient fact is that (\ref{3.6}) involves 
the vector $\Omega_H$ which is typically not explicitly known. We assume that the time evolution 
is {\it mixing} \cite{17} (a stronger condition than ergodicity), that is, for any 
$\psi_1,\psi_2\in {\cal H}$
\be \label{3.7}
\lim_{T\to \infty} <\psi,U(-T)\psi'>=<\psi_1,\Omega_H>\; <\Omega_H,\psi_2>
\ee
(the same relation then also holds for $T\to-\infty$). Suppose that $F$ has compact 
time support in $[-T,T]$. We define $t_l=l/N\; T,\;l=-N,..,N-1;\;\Delta=T/N$ and $f_l(x)=F(t_l,x)$.
Then (\ref{3.6}) can be written
\be \label{3.8}
\lim_{T\to \infty} \;\lim_{N\to \infty}\; <\Omega_H,U(T)\; e^{i\Delta \rho(Q(f_{N-1}))}
e^{i\Delta \rho(H)}..e^{i\Delta \rho(H)}\;e^{i\Delta \rho(Q(f_{-N})}\; U(T)\Omega_H>\;,
\ee
We extend (\ref{3.8}) by $<\Omega,\Omega_H>$ and use (\ref{3.7}) with $\psi_1=\Omega$ and $\psi_2$ the 
ket of (\ref{3.8}) to obtain for (\ref{3.8})
\be \label{3.9}
\lim_{T\to \infty} \;\lim_{N\to \infty}\; \frac{1}{<\Omega,\Omega_H>}\;
<\Omega,\;e^{i\Delta \rho(Q(f_{N-1}))}
e^{i\Delta \rho(H)}..e^{i\Delta \rho(H)}\;e^{i\Delta \rho(Q(f_{-N})}\; U(T)\Omega_H>\;.
\ee
We extend (\ref{3.9}) by $<\Omega_H,\Omega>$ and apply (\ref{3.7}) with 
$\psi_1=[e^{i\Delta \rho(Q(f_{N-1}))}
e^{i\Delta \rho(H)}..e^{i\Delta \rho(H)}\;e^{i\Delta \rho(Q(f_{-N})}\; U(T)]^\dagger\Omega$
and $\psi_2=\Omega$. Then (\ref{3.9}) becomes
\be \label{3.10}
\lim_{T\to \infty} \;\lim_{N\to \infty}\; \frac{1}{|<\Omega,\Omega_H>|^2}\;
<\Omega,\;e^{i\Delta \rho(Q(f_{N-1}))}
e^{i\Delta \rho(H)}..e^{i\Delta \rho(H)}\;e^{i\Delta \rho(Q(f_{-N})}\;\Omega>\;.
\ee
Finally, using again (\ref{3.7}), we can substitute the denominator so that (\ref{3.10}) becomes
\be \label{3.11}
\lim_{T\to \infty} \;\lim_{N\to \infty}\; \
\frac{
<\Omega,\;e^{i\Delta \rho(Q(f_{N-1}))}
e^{i\Delta \rho(H)}..e^{i\Delta \rho(H)}\;e^{i\Delta \rho(Q(f_{-N})}\;\Omega>
}
{
<\Omega,U(-2T)\Omega>
}\;.
\ee    
This formula no longer refers to $\Omega_H$. To finally obtain a path integral formulation
one inserts resolutions of unity and relies on Feynman -- Kac type of arguments. 
As it stands, this would lead to the Feynman path integral
for time ordered functions. Instead, we pass to the Euclidean formulation and analytically 
continue $T\to iS$. This yields with $s_l=l S/N, \Delta'=S/N, F'(s_l,x):=-F(i\;s_l,x)$
\be \label{3.12}
\chi[F']=\lim_{S\to \infty} \;\lim_{N\to \infty}\; 
\frac{
<\Omega,\;e^{\Delta' \rho(Q(F'(s_{N-1})))}
e^{-\Delta' \rho(H)}..e^{-\Delta' \rho(H)}\;e^{\Delta' \rho(Q(F'(s_{-N}))}\;\Omega>
}
{
<\Omega,e^{-2 S H}\Omega>
}\;.
\ee   
Note that (\ref{3.12}) still contains the time zero fields as operators. Therefore 
the Wick rotation performed just affects the Heisenberg evolution. The time zero 
fields remain untouched by the Wick rotation and therefore never develop a non-analytic time dependence.  
Also note that passing to the Euclidean formulation and obtaining Schwinger functions rather 
than Feynman functions as their analytic continuation is especially attractive if 
$H$ is bounded from below so that $e^{-\Delta' \rho(H)}$ is a bounded operator which 
also improves the convergence of the corresponding path integral. This 
is not the case for GR. Rather our motivation to pass to the Euclidean formulation is 
that it is this formulation that is favoured in the ASQG approach.   

The subsequent relations can be properly justified only by compactifying not only 
time (by $S$) but also $\sigma$
(IR cutoff) and by discretising the fields not only temporally (by $N$) but also 
spatially on a lattice (UV-cutoff) so that 
the number of degrees is finite. In this case the representation $(\rho,{\cal H})$ is 
necessarily unitarily equivalent to the Schr\"odinger representation when irreducible and 
regular and we can sandwich resolutions of the identity in between the various factors appearing in (\ref{3.12})
both in numerator and denominator. This step is standard and we just note the formal end result
after having removed all regulators (we drop the primes and relabel $s$ by $X^0$ 
and consider spatial spatime coordinates $X^a:=x^a$  so that $X=(s,x)$)
\be \label{3.13}
\chi[F]=\frac{Z[F]}{Z[0]},\;\;
Z[F]=\int\; d\mu_L[Q,P]\; e^{-\int_{\mathbb{R}\times\sigma}\; d^{D+1}X\;[H(X)+iP^{ab}(X)\dot{Q}_{ab}(X)]} 
\exp{<Q,F>}\;\Omega[Q(\infty)]^\ast\; \Omega[Q(-\infty)]
\ee
Here $<Q,F>=\int\; d^{D+1}X\; F^{ab}(X) Q_{ab}(X)$, $\Omega[Q]$ is the Schr\"odinger 
representation of the cyclic vector $\Omega$ as a functional of the time zero field $Q$
and $d\mu_L=\prod_{x^0\in \mathbb{R}}\; [dQ(x^0)]\;[dP(x^0)]$ with
$[dQ(x^0)]=\prod_{x\in \sigma}[d^k Q(x^0,x)], \; [dP(x^0)]=\prod_{x\in \sigma}[d^k P(x^0,x)/(2\pi)^k],\; k=D(D+1)/2$ 
is the formal Liouville measure (one for each spacetime point). It is important to understand that the time 
dependence of the integration variables simply comes from the insertions of unity 
$1_{{\cal H}} =\int\; [dQ] \delta_Q <\delta_Q,.>=\int\; [dP] e_P <e_P,.>$
at the 
various times and we label the integration variables of these resolutions of unity 
by the time slot at which we insert them. The factor $i$ in the exponent is no mistake, it comes 
from the position and momentum eigenfunctions $\delta_Q[Q']=\delta[Q,Q'],\; 
e_P[Q']=e^{-i\int_\sigma\;d^Dx \; P^{ab} Q'_{ab}}$. All of this is exactly the same as 
in the Feynman path integral, the only difference is that the Hamiltonian density 
$H(X)=H(Q(X),P(X))$ does not come with a pre-factor of $i$ in the exponent since we did not
sandwich between Weyl elements and  
unitary operators $e^{i\Delta t \;\rho(H)}$ but rather contraction operators $e^{-\Delta s \rho(H)}$.

The next step is to integrate over the momenta. This can be done pointwise 
in spacetime and we can immediately write the result using the preparations of the previous section
(we use $M=\mathbb{R}\times \sigma$) 
\ba \label{3.14}
\chi[F] &=& \frac{Z[F]}{Z[0]},\;\;
\nonumber\\
Z[F] &=&\int\; [dQ]\; e^{\kappa^{-1}\int_M\; d^{D+1}X\;\sqrt{\det(g(Q))}[R(g(Q))-2\Lambda]} \;
\exp{<Q,F>}\;\times 
\nonumber\\
&& \Omega[Q(\infty)]^\ast\; \Omega[Q(-\infty)]\; e^{-2[\dot{V}(q(Q(\infty)))-\dot{V}(q(Q(-\infty)))]}\;,
\ea
where $Q$ independent numerical factors have cancelled between numerator and denominator. Here $q$ is 
constructed from $Q$ via $Q=[\det(q)]^r\; q,\; r=\frac{D-4}{4D}$ and $g(Q)$ is the Euclidean signature 
spacetime metric constructed from $q(Q)$ with $X$ considered as synchronous coordinates, i.e.
we have the Euclidean line element $g_{\mu\nu}(X)\; dX^\mu\;dX^\nu=(dx^0)^2+[q(Q))]_{ab}(X)\;dx^a\; dx^b$.
The value $r$ was chosen specifically in order that the integral over momenta produces only 
a $Q$ independent Jacobian which cancels between numerator and denominator. Finally, $V(q(x^0)$ is the total
volume of $\sigma$ at $x^0$.\\
\\
In what follows the dependence on cyclic vector in the distant 
past and future will play no role while we keep the Gibbons-Hawking boundary 
term that forces the Euclidean action to be in the canonical form (\ref{2.29}). Note that the exponential of the Euclidean action 
$S=+\kappa^{-1}\int d^{D+1}X |\det(g)|^{1/2}[R(g)-2\Lambda]$ has the correct (positive) sign as the 
kinetic term enters with a minus sign into $R(g)$ for Euclidean signature. Still the integral
(\ref{3.14}) is not granted to converge even for positive $\Lambda$ since the spatial Ricci
scalar is indefinite and the kinetic term contains a negative ``conformal mode'' \cite{19} (the DeWitt metric 
has signature $(-1,+1,..,+1)$). For the same reason, it is unclear whether $Z(F)$ is time 
reflection positive \cite{8}, a minimal requirement in order to regain 
${\cal H},\Omega_H, H$ via Osterwalder-Schrader reconstruction as the latter necessarily 
produces a Hamiltonian operator bounded from below and our Hamiltonian does not obviously 
have this property. See \cite{21} for a discussion of reflection positivity in Euclidean 
quantum gravity. The absence of reflection positivity does not mean that there is 
no underlying Hilbert space structure and Hamiltonian, just that the Hamiltonian is not 
bounded from below and thus Osterwalder-Schrader reconstruction cannot be used.

\section{Laplacians, heat kernels, cutoffs and Wetterich equation}
\label{s4}

In the first subsection we summarise the properties of the generating functional 
of Schwinger functions obtained. In the second we analyse the symmetries of the Euclidean action 
which influences the choice of cutoff functions used to construct the 
effective average action and study the relation between $(D+1)$- and $D$-dimensional tensors
densities. In the third we use that relation to synthesise a natural projector 
that necessarily finds its way into suitable Laplacians on our theory space. In the fourth 
we define the Effective average action and quickly comment on how to generalise 
the usual heat kernel expansion to non-trivial density weight.  

\subsection{Starting point}
\label{s4.1}

Since at the end of the previous section we ended up with a path integral involving 
the exponential of the Einstein--Hilbert action for Euclidean signature, it 
is worthwhile to list what has been gained as compared to the usual approach to
ASQG:
\begin{itemize}
\item[1.] {\it Connection to Operator Formulation}\\
The path integral was not written down ``by analogy'' but was derived from the 
language of operators and Hilbert spaces of {\it Lorentzian signature} GR.
\item[2.] {\it Physical interpretation and observables}\\
The Gaussian dust matter selects a natural reference frame and therefore a natural 
notion of time with corresponding Hamiltonian $H$. The ``problem of time'' is naturally 
solved and the observables of the theory as measured in this reference frame are 
cleanly identified. The gauge is fixed prior to quantisation and  
constructions to deal with gauge redundancies (such as ghost integrals implementing 
Faddeev--Popov determinants) never enter the stage.\footnote{See \cite{22a} for very recent developments towards a  construction of a gauge invariant effective action.} The quantum field theory 
to be constructed is that for an ordinary, albeit highly non-linear, Hamiltonian 
system with conservative Hamiltonian that allows for an infinite number of conserved charges
and resembles a non-linear $\sigma$ model. 
\item[3.] {\it Euclidean Einstein--Hilbert action for physical Lorentzian quantum GR}\\
While the Hamiltonian operator is for physical, Lorentzian signature GR, its 
Heisenberg time evolution by unitary operators $e^{it H}$ that enters the time ordered 
N-point functions has a natural analytic continuation $t\to is$ to the self-adjoint 
operators $e^{-s H}$ (which would be contractions if the spectrum of $H$ is bounded from
below) because we do have a natural notion of time at our disposal. In this way 
we arrive at the Schwinger functions of the theory, i.e. its Euclidean formulation. 
\item[4.] {\it Measure Jacobian}\\
The path integral formulation of the generating functional of (connected) Schwinger
functions in a first step leads to a functional integral over phase space rather 
than configuration space. Integrating out the momenta is possible as they 
enter quadratically in $H$, but they produce a non-trivial Jacobian that involves 
the determinant of the DeWitt metric. An often applied method to bring that Jacobian
into the exponent is to use Berezin integrals involving ghost fields. In this paper 
we chose a more direct route and performed a canonical transformation on the phase 
space from usual ADM variables $(q,p)$ to density valued canonical coordinates
$(Q,P)$ prior to quantisation to the effect to render that Jacobian trivial. 
\item[5.] {\it Euclidean action}\\    
The final configuration space path integral formulation of the generating functional of 
(connected) Schwinger functions indeed involves the exponential of the Euclidean signature 
Einstein--Hilbert action (plus Gibbons Hawking boundary term, see \cite{GH} for a treatment in ASQG of this term),
however, with two restrictions: 1. We only integrate over Euclidean signature 
metrics $g=g(q,N,\vec{N})$  
with fixed unit lapse $N=1$ and zero shift $\vec{N}=0$ 
as a consequence of having solved all gauge redundancies 
prior to quantisation and 2. the natural integration variable is the density valued 
field $Q$. Therefore the Euclidean signature Einstein--Hilbert action $S[g(q,N,\vec{N})]$ needs 
to be written in terms of these data, i.e. $S[Q]=S[g(q(Q),1,0)]$. This is the way to read the 
end result (\ref{3.14}) of the previous section.
\item[6.] {\it Analytically extended metrics}\\
It is often criticised that time Wick rotation of the path integral in quantum gravity  
is meaningless using the following argument: In Minkowski space, Wick rotation 
just changes the signature of a metric that has constant (time independent) tensor 
components. However in quantum gravity the metric, understood as an integration 
variable in the path integral, is generically time dependent, hence naively 
analytic continuation makes it complex valued rather than a real metric with 
Euclidean signature which would seemingly results in a path integral over complexified 
gravity. Indeed this is what would happen if the metric field would be a generic,
time dependent background field. However, in quantum gravity where the metric is 
a dynamical field operator valued distribution, this is actually not what happens,
at least when the Hamiltonian is not explicitly time dependent: in the canonical (operator) approach,
the time dependence
of the Lorentzian signature quantum metric field comes entirely from the unitary Heisenberg evolution of the 
self-adjoint time zero metric fields which keeps the time evolved quantum field self-adjoint. 
What happens upon Wick rotation is that indeed this time evolved quantum field operator is no longer self-adjoint.
However, what enters the Schwinger N-point function is a product of self-adjoint operators, 
consisting of time zero quantum fields and exponentials of the quantum Hamiltonian times a real
number. When cast into the form of a functional integral, each factor of a time zero quantum metric 
field loses its status as an operator and rather becomes an independent, real valued integration variable.
These integration variables are labelled by the real Euclidean time parameter corresponding 
to the Euclidean point of time at which it occurs in the product. This is why only real valued 
metrics enter the Euclidean path integral. Formally, when comparing the generating functionals of 
time ordered and Schwinger functions this corresponds to a switch from real to imaginary 
lapse and thus real to imaginary extrinsic curvature without touching the time dependence 
of the integration variables. This is quite similar to what is considered in ASQG \cite{3, 3b} 
and causal dynamical triangulations \cite{22}. 
\end{itemize}

\subsection{Interplay between symmetries and cutoff functions}
\label{s4.2}

We note that the generating functional of Schwinger functions 
\be \label{4.1}
\chi[F]=\frac{Z[F]}{Z[0]}, Z[F]=\int\; [dQ]\; e^{S[Q]}\; e^{<F,Q>}
\ee
is not covariant under all spacetime diffeomorphisms $\phi\in $Diff$_{D+1}(\mathbb{R}\times \sigma)$
but only those that preserve the 
synchronous gauge
\be \label{4.2}
[\phi^\ast g]_{t\mu}(X)=
\phi^\nu_{,t}\phi^\rho_{,\mu} g_{\nu\rho}(\phi(X))
=\phi^t_{,t}\phi^t_{,\mu}
+\phi^b_{,t}\phi^c_{,\mu} g_{bc}(\phi(X))=\delta^t_\mu\;.
\ee
This implies with $X=(t,x)$ that
$\phi^t_{,a}=-\frac{g_{bc}(\phi(X))\phi^b_{,t}\phi^c_{,a}}{\phi^t_{,t}}$ and 
$\phi^t_{,t}=\pm\sqrt{1-g_{bc} \varphi^b_{,t} \varphi^c_{,t}}$ when 
$\varphi^t_{,t}\not=0$. This system of $D+1$ PDE's for $\phi^t_{,\mu}$ has integrability conditions which 
depend on $g$. If we want the allowed class of $\phi$ to be independent of 
$g$ then we must set $\phi^a_{,t}=0$. Then $\phi^t_{,\mu}=\pm \delta^t_\mu$. The 
plus sign corresponds to the subgroup containing the identity, the minus sign 
refers to the coset containing time reflections. 
\begin{Definition} \label{def4.1}
The subgroup Diff$_D(\mathbb{R}\times \sigma)$ of the spacetime 
diffeomorphism group Diff$_{D+1}(\mathbb{R}\times \sigma)$ is isomorphic to a two-fold cover of the spatial  
diffeomorphism group Diff$(\sigma)$. Its elements are labelled by $\epsilon\in \{\pm 1\}$ and 
$\varphi\in$ Diff$(\sigma)$ and are explicitly given by 
\be \label{4.3}
\phi_{\epsilon,\varphi}(t,x):=(\epsilon t, \varphi(x))\;.
\ee
\end{Definition}
Note that only the subgroup of Diff$_D(\mathbb{R}\times \sigma)$ containing the identity
i.e. the diffeomorphisms $\phi_{+,\varphi}$ preserve the Hamiltonian while all of them
preserve the action displayed in (\ref{4.1})  In order not to get confused in what follows 
we distinguish between the following spaces of tensor fields on $\mathbb{R}\times \sigma$.
\begin{Definition} \label{def4.2} ~\\
i. $T_{D+1}(A,B,w)$ is the usual space of all smooth spacetime tensor fields of rapid decrease
on $\mathbb{R}\times \sigma$. 
That is, the index structure of an element $T$ is given by $T^{\mu_1 .. \mu_A}_{\nu_1 .. \nu_B}$ 
and under $\phi\in $ Diff$_{D+1}(\mathbb{R}\times \sigma)$ it transforms as 
\be \label{4.4}
[\phi^\ast T]^{\mu_1 .. \mu_A}_{\nu_1 .. \nu_B}(X)=|\det(J(X))|^w\;
\prod_{i=1}^A [J^{-1}(X)]^{\mu_i}_{\mu'_i}\;
\prod_{j=1}^B [J(X)]^{\nu'_j}_{\nu_j}\;T^{\mu'_1 .. \mu'_A}_{\nu'_1 .. \nu'_B}(\phi(X))\;,
\ee
where $J^\mu_\nu(X)=\frac{\partial\phi^\mu(X)}{\partial X^\nu}$ is the spacetime Jacobian 
of $\phi$ and 
$X=(t,x)$.\\
ii. $S_D(A,B,w)$ is the usual space of all tensor fields on $\sigma$ with an additional 
dependence on the time parameter $t$. Its elements $H$ are smooth and of rapid decrease 
with respect to both $t,x$ and carry the index structure $H^{a_1 .. a_A}_{b_1 .. b_B}$. 
Under $\phi_{\epsilon,\varphi}\in $ Diff$_D(\mathbb{R}\times \sigma)$ it transforms as 
\be \label{4.5}
[\varphi_{\epsilon,\varphi}^\ast H]^{a_1 .. a_A}_{b_1 .. b_B}(t,x)=|\det(J(x))|^w\;
\prod_{i=1}^A [J^{-1}(x)]^{a_i}_{a'_i}\;
\prod_{j=1}^B [J(x)]^{b'_j}_{b_j}\;H^{a'_1 .. a'_A}_{b'_1 .. b'_B}(\epsilon t, \varphi(x))\;,
\ee
where $J^a_b(x)=\frac{\partial\varphi^a(x)}{\partial x^b}$ is the spatial Jacobian 
of $\varphi$.
\end{Definition}
The relation between these spaces is as follows:
The space $T_{D+1}(A,B,w)$ is an irreducible representation of Diff$_{D+1}(\mathbb{R}\times \sigma)$.
Upon restriction to Diff$_D(\mathbb{R}\times \sigma)$ it decomposes into irreducible subspaces.
\begin{Lemma} \label{la4.1} ~\\
Upon restriction we have $T_{D+1}(A,B,w)=\oplus_{A'=0}^A\; \oplus_{B'=0}^B\; 
S_D(A',B',w)$. 
\end{Lemma}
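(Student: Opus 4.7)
The plan is to exploit the fact that the spacetime Jacobian of a restricted diffeomorphism is block diagonal in the $(t,\text{space})$ splitting, so that the transformation law (\ref{4.4}) never mixes temporal and spatial index slots. First I would compute the $(D+1)$-Jacobian of $\phi_{\epsilon,\varphi}(t,x)=(\epsilon t,\varphi(x))$ in (\ref{4.3}): it has $J^t{}_t=\epsilon$, $J^a{}_b(x)=\partial_b\varphi^a(x)$, vanishing off-diagonal blocks $J^t{}_a=J^a{}_t=0$, and therefore $|\det J(X)|^w=|\det J(x)|^w$. The inverse $J^{-1}$ inherits the same block form, so in (\ref{4.4}) each factor $[J^{-1}]^{\mu_i}_{\mu'_i}$ and $[J]^{\nu'_j}_{\nu_j}$ is block diagonal: an upper (resp.\ lower) slot that is temporal contributes only a scalar $\epsilon^{-1}=\epsilon$, while a spatial slot contributes an ordinary spatial Jacobian.

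Next I would decompose $T\in T_{D+1}(A,B,w)$ componentwise according to which slots are temporal and which are spatial. For each pair of subsets $I\subset\{1,\dots,A\}$, $J\subset\{1,\dots,B\}$ of cardinalities $A'=|I|$, $B'=|J|$, freezing the complementary positions equal to $t$ and letting the positions in $I,J$ run over $\{1,\dots,D\}$ defines a component $T_{I,J}$ with index structure $T_{I,J}{}^{a_1\dots a_{A'}}_{b_1\dots b_{B'}}$ depending smoothly on $t\in\mathbb{R}$ and $x\in\sigma$. By the block-diagonality established above, $T_{I,J}$ transforms under $\phi_{\epsilon,\varphi}$ exactly according to (\ref{4.5}) for $S_D(A',B',w)$, multiplied by the overall scalar $\epsilon^{(A-A')+(B-B')}$ generated by the temporal contractions. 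Rapid decrease in $x$ and $t$ is inherited from $T$.

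Finally, I would group the $T_{I,J}$ by their rank $(A',B')$. Since the assignment $T\mapsto\{T_{I,J}\}_{I,J}$ is clearly invertible (one simply reassembles the $2^{A+B}$ components into a single spacetime tensor) and $\mathrm{Diff}_D$-equivariant by the argument above, this yields the claimed direct sum decomposition, where each summand $S_D(A',B',w)$ appears with the combinatorial multiplicity $\binom{A}{A'}\binom{B}{B'}$ counting the choices of spatial slots (absorbed into the notation of the lemma).

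The main subtlety I would flag is the $\epsilon$-sign. On the identity component of $\mathrm{Diff}_D(\mathbb{R}\times\sigma)$, where $\epsilon=+1$, it is invisible and the identification with $S_D(A',B',w)$ as defined in (\ref{4.5}) is literal. On the full group the additional factor $\epsilon^{(A-A')+(B-B')}$ means that the decomposition is really into isotypic pieces for the $\mathbb{Z}_2$ of time reflection, with the parity determined by the number $A+B-A'-B'$ of temporal slots; the claim is understood modulo this twist. Apart from this bookkeeping, no real obstacle arises, the proof being essentially a reorganisation of index configurations enabled by the block structure of $J$.
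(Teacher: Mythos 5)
Your proof is correct and follows essentially the same route the paper sketches: the block-diagonality of the Jacobian of $\phi_{\epsilon,\varphi}$ makes each component with a fixed pattern of temporal slots an invariant subspace transforming as an element of $S_D(A',B',w)$, with $|\det(\partial\phi_{\epsilon,\varphi}/\partial X)|=|\det(\partial\varphi/\partial x)|$ handling the density weight. Your remark about the extra factor $\epsilon^{(A-A')+(B-B')}$ under time reflection is a legitimate refinement that the paper glosses over, but it does not change the substance of the argument.
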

That is, each subspace of $T_{D+1}(A,B,w)$ selected by fixing $k,l$ of its indices 
$\mu_i, \nu_j$ respectively to take the value $t$ is an invariant subspace and transforms as 
an element of $S_D(A-k,B-l,w)$ under the restricted diffeomorphism group. As a typical example
consider $T^\mu_\nu\in T_{D+1}(1,1,w)$. Then 
$T^t_t\in S_D(0,0,w),\;T^t_b\in S_D(0,1,w),\; T^a_t \in S_D(1,0,w),\;
T^a_b\in S_D(1,1,w)$ transform as spatial scalar, co-vector, vector and 2- tensor with 
weight $w$ respectively. This relies on the identity 
$|\det(\partial \phi_{\epsilon,\varphi}/\partial X)|=|\det(\partial \varphi/\partial x)|$.
The simple proof is left to the reader. 

We notice that the generating functional (\ref{4.1}) transforms under Diff$_D(\mathbb{R}\times \sigma)$
as $\chi[F]\mapsto \chi[(\phi_{\epsilon,\varphi}^{-1})^\ast F]$ where $F$ is considered 
an element of $S_D(2,0,1-2r)$ (since $Q$ has density weight $2r$). 
To see this note that the action is Diff$_D(\sigma)$ invariant and that the measure $[dQ]$ has 
a $Q$ independent Jacobian which drops from the quotient $Z[F]/Z[0]$. 

The formulation of an appropriate ASQG framework has to be adapted accordingly. Recall
that usually one employs the background field method to the Euclidean path integral 
and introduces an average kernel or cutoff that depends on a scale parameter $k$ and 
that background spacetime metric through its {\it spacetime Laplacian}. This is motivated by 
the fact that in the usual approach one assumes (rather than derives) that the path integral 
depends on the exponential of the Einstein--Hilbert action (and higher derivative Diff($M$)
invariant terms built from the spacetime metric) plus a gauge fixing term plus a ghost term (which 
brings the Faddeev--Popov determinant between constraints and gauge fixing condition into
the exponent) and one integrates over Euclidean signature spacetime metrics. Thus in order 
that the flow only generates terms compatible with the symmetries of the Euclidean action 
one builds the cutoff in a Diff($M$) invariant fashion (here $M=\mathbb{R}\times \sigma$). 
In the present case, we have only 
the fields $Q_{ab}, \bar{Q}_{ab}$ at our disposal. We could 
in fact complete $Q_{ab}$ to a spacetime field and then apply those types of usual 
cutoffs. However, this has several caveats. To see this, we define 
$g_{0\mu}(Q):=\delta^t_\mu$ and $g_{ab}(Q):=[\det(Q)]^{-\frac{r}{1+rD}}\; Q_{ab}$
and similar for $\bar{Q}$. Furthermore we can construct $G_{\mu\nu}(Q):=[\det(g(Q))]^r g_{\mu\nu}(Q)$. 
Let $\overline{\nabla}$ be the covariant differential compatible with $g(\bar{Q})$. Then 
it is easy to see that $(\overline{\nabla} G)_{t \mu}\not\propto \delta_\mu^t$, i.e. the covariant 
differential maps out of the space of allowed spacetime tensors. Furthermore, for 
$r\not=0$ the fluctuation $G_{t\mu}-\bar{G}_{t\mu}=
([\det(Q)]^{-\frac{r}{1+rD}}-[\det(\bar{Q})]^{\frac{r}{1+rD}})\;\delta^t_\mu$ is no longer 
a linear function of $H_{ab}=Q_{ab}-\bar{Q}_{ab}$ which however is an essential requirement 
that the cutoff function needs to have for the Wetterich equation to be valid.

It transpires that we need a different type of cutoff function that is adapted to 
the symmetries of the given Euclidean action which is just Diff$_D(\mathbb{R}\times \sigma)$. There are 
at least two natural options. The first option that was followed in \cite{3} is 
to construct a function just from the spatial Laplacian 
$\bar{q}^{ab}\; \bar{D}_a\; \bar{D}_b,\; \bar{D}_a \bar{q}_{bc}=0,\;
\bar{q}=[\det(\bar{Q})]^{-\frac{r}{1+rD}}\;\bar{Q}$. This has the disadvantage that 
the cutoff just controls the spatial fluctuations of the field. The second option 
that we will follow below is based on the following simple observation.
\begin{Lemma} \label{la4.2} ~\\
The operator $\bar{D}_t$ defined by 
$(\bar{D}_t T)^{a_1..a_A}_{b_1..b_B}:=\frac{\partial T^{a_1..a_A}_{b_1..b_B}(t,x)}{\partial t}$
preserves $S_D(A,B,w)$. 
\end{Lemma}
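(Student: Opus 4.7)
The plan is to verify the two defining properties of $S_D(A,B,w)$: the smoothness/rapid-decrease conditions, and the transformation law (\ref{4.5}). The first is immediate, since $\partial_t$ preserves smoothness and Schwartz-type rapid decrease in $(t,x)$ is closed under partial differentiation in any coordinate. Thus the substantive content of the lemma is verifying that $\bar{D}_t T$ still transforms tensorially of type $(A,B,w)$ under Diff$_D(\mathbb{R}\times\sigma)$.

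To establish the transformation law, I would start from (\ref{4.5}) applied to $T$ and differentiate both sides with respect to $t$. The crucial observation is that for $\phi_{\epsilon,\varphi}$ the spatial part $\varphi:\sigma\to\sigma$ is purely spatial, so every Jacobian factor $|\det(J(x))|^w$, $[J^{-1}(x)]^{a_i}_{a'_i}$ and $[J(x)]^{b'_j}_{b_j}$ is a function of $x$ alone and commutes with $\partial_t$. The derivative then reduces to a chain rule on $T^{a'_1..a'_A}_{b'_1..b'_B}(\epsilon t,\varphi(x))$, producing a factor of $\epsilon$. Collecting the terms yields the intertwining relation
\[
\bar{D}_t(\phi_{\epsilon,\varphi}^\ast T) \;=\; \epsilon\,\phi_{\epsilon,\varphi}^\ast(\bar{D}_t T).
\]

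For $\epsilon=+1$, i.e.\ on the identity component (which is the physically relevant subgroup since only it preserves the Hamiltonian), this is exactly the defining relation (\ref{4.5}) applied to $\bar{D}_t T$, so $\bar{D}_t T\in S_D(A,B,w)$. For $\epsilon=-1$ one picks up an overall sign; this is harmless because $S_D(A,B,w)$ is a linear subspace, so $-\bar{D}_t T\in S_D(A,B,w)$ iff $\bar{D}_t T\in S_D(A,B,w)$. In either case the conclusion follows.

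No real obstacle is anticipated: the lemma is essentially a one-line chain-rule computation whose validity rests entirely on the $t$-independence of the spatial Jacobians entering (\ref{4.5}). The only conceptual point worth flagging in the write-up is the extra sign under time reversal, which deforms the representation of the discrete $\mathbb{Z}_2$ factor but does not move $\bar{D}_t T$ out of the vector space $S_D(A,B,w)$.
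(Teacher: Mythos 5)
Your proof is correct and follows exactly the argument the paper has in mind: the paper's entire justification is the one-line remark that ``the proof is trivial as $J(x)$ is independent of $t$,'' which is precisely the $t$-independence of the spatial Jacobian factors that you identify as the crux. Your additional care with the $\epsilon=-1$ (time-reflection) component and the resulting sign in the intertwining relation $\bar{D}_t\circ\phi_{\epsilon,\varphi}^\ast=\epsilon\,\phi_{\epsilon,\varphi}^\ast\circ\bar{D}_t$ is a harmless and correct elaboration beyond what the paper states.
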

The proof is trivial as $J(x)$ is independent of $t$. It follows that $\bar{D}_t$ 
is to be considered a scalar operator on $S_D(A,B,w)$. We may therefore use also $D_t$
in oder to construct a Diff$(\mathbb{R}\times \sigma)$ invariant cutoff function. For instance we may 
use the spacetime Laplacian 
\be \label{4.3a}
\overline{\Delta}:=\bar{D}_t^2+\bar{q}^{ab} \bar{D}_a \bar{D}_b =
\bar{g}^{\mu\nu}\; \bar{D}_\mu\;\bar{D}_\nu 
\ee
with $\bar{g}_{t\mu}=\delta^t_\mu\;, \bar{g}_{ab}=\bar{q}_{ab}$. 
Note that $\overline{\Delta}$ is {\it not} $\overline{\Delta}_{D+1}:=\bar{g}^{\mu\nu} \bar{\nabla}_\mu \bar{\nabla}_\nu$
simply because $\bar{\nabla}_\mu\not=\bar{D}_\mu$, in particular $\bar{D}_\mu \bar{g}_{\nu\rho}\not=0$. 
We may call $\bar{D}_\mu$ the {\it hybrid covariant differential}. We are interested in the
{\it hybrid curvature tensor}.
\begin{Lemma} \label{la4.3} ~\\
Let $\bar{k}_{ab}:=\frac{1}{2}\bar{D}_t \bar{q}_{ab}$ be the the extrinsic curvature of $\bar{q}$ and 
$\bar{\Gamma}^c_{ab}$ the Christoffel symbol of $\bar{q}$. Then 
\be \label{4.6}
C^c_{ab}:=\partial_t \bar{\Gamma}^c_{ab}=2\bar{D}_{(a} \bar{k}_{b)}\;^c-\bar{D}^c\;\bar{k}_{ab}\;,
\ee
where index transport is with respect to $\bar{q}_{ab}$.
\end{Lemma}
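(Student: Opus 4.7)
The plan is to exploit the standard fact that the difference of two linear connections on the same manifold is a tensor, so that although $\bar\Gamma^c_{ab}(t,x)$ is not a tensor for fixed $t$, the $t$--derivative
\[
C^c_{ab}(t,x)=\lim_{s\to 0}\tfrac{1}{s}\bigl[\bar\Gamma^c_{ab}(t+s,x)-\bar\Gamma^c_{ab}(t,x)\bigr]
\]
is a $(1,2)$ spatial tensor built from $\bar q(t,\cdot)$ and its $t$--derivative $2\bar k$. Since $\bar q=[\det(\bar Q)]^{-r/(1+rD)}\bar Q$ has vanishing density weight, $\bar\Gamma^c_{ab}$ is the ordinary Christoffel symbol of $\bar q$ and $\bar k_{ab}=\tfrac12\partial_t\bar q_{ab}$ is an ordinary symmetric tensor in its spatial indices, so no density--weight subtleties arise.

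Once tensoriality is in hand, I would fix an arbitrary spacetime point $(t_0,x_0)$ and pass to Riemann normal coordinates for $\bar q(t_0,\cdot)$ centred at $x_0$, in which $\bar\Gamma^c_{ab}(t_0,x_0)=0$, $\partial_d\bar q_{ab}(t_0,x_0)=0$, and partial derivatives of spatial tensors agree at $x_0$ with their covariant $\bar D$--derivatives. Differentiating
\[
\bar\Gamma^c_{ab}=\tfrac12\bar q^{cd}\bigl(\partial_a\bar q_{bd}+\partial_b\bar q_{ad}-\partial_d\bar q_{ab}\bigr)
\]
with respect to $t$ produces two kinds of terms: one in which $\partial_t$ hits $\bar q^{cd}$, which is proportional to $(\partial_a\bar q_{bd}+\partial_b\bar q_{ad}-\partial_d\bar q_{ab})$ and therefore vanishes at $x_0$; and one in which $\partial_t$ hits the triple of partial derivatives, producing $\tfrac12\bar q^{cd}\bigl(\partial_a(2\bar k_{bd})+\partial_b(2\bar k_{ad})-\partial_d(2\bar k_{ab})\bigr)$. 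Upgrading the $\partial$'s to $\bar D$'s at $x_0$ (legitimate because the difference is proportional to a Christoffel symbol that vanishes there) and raising one index with $\bar q^{cd}$ yields
\[
C^c_{ab}(t_0,x_0)=\bar D_a\bar k_b{}^c+\bar D_b\bar k_a{}^c-\bar D^c\bar k_{ab}=2\bar D_{(a}\bar k_{b)}{}^c-\bar D^c\bar k_{ab}.
\]
Because both sides are spatial tensors and $(t_0,x_0)$ was arbitrary, this establishes the claimed identity everywhere.

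The main obstacle is not the computation itself, which is a well--known variation of the Palatini identity, but the preliminary verification that $C^c_{ab}$ really transforms as a tensor under $\phi_{\epsilon,\varphi}\in\mathrm{Diff}_D(\mathbb{R}\times\sigma)$; this is where the $t$--independence of the spatial Jacobian $J(x)$ (noted already in Lemma \ref{la4.2}) and the fact that $\bar q$ carries density weight zero enter in a combined way, and without them the normal--coordinate argument would not give a coordinate--invariant expression. Once this point is secured, the rest is a few lines of algebra as sketched above.
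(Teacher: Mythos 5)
Your proof is correct and is essentially the argument the paper has in mind: the paper's own proof consists only of the remark that the variation of $\bar{\Gamma}$ is a tensor and that "the computation is standard," i.e.\ the Palatini identity $\delta\bar{\Gamma}^c_{ab}=\tfrac12\bar q^{cd}(\bar D_a\delta\bar q_{bd}+\bar D_b\delta\bar q_{ad}-\bar D_d\delta\bar q_{ab})$ applied to $\delta\bar q_{ab}=\partial_t\bar q_{ab}=2\bar k_{ab}$. Your normal-coordinate derivation, together with the observation that tensoriality of $C^c_{ab}$ follows from the $t$-independence of the spatial Jacobian, is a correct and complete instance of that standard computation.
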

The computation is standard. While $\bar{\Gamma}$ is not a tensor field, its variation is a tensor 
field as explicitly displayed by the r.h.s. of (\ref{4.6}). Let now $H\in S_D(A,B,w)$. 
Then $[\bar{D}_a,\bar{D}_b] H$ can be expressed in the standard way in terms of the Riemann tensor
$\bar{R}_{abcd}$ of $\bar{q}$. On the other hand:
\begin{Lemma} \label{la4.4} ~\\
We have
\be \label{4.7}
([\bar{D}_t, \bar{D}_c] H)^{a_1..a_A}_{b_1..b_B}
=\sum_{i=1}^A\; C^{a_i}_{cd}\; H^{a_1..d..a_A}_{b_1..b_B}
-\sum_{j=1}^B\; C^{d}_{c b_j}\; H^{a_1..a_A}_{b_1..d..b_B}
-w\;C^d_{cd}\; H^{a_1..a_A}_{b_1..b_B}\;.
\ee
\end{Lemma}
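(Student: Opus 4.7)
The plan is to expand both sides of the commutator directly from the definitions and exploit the fact that, by Lemma \ref{la4.2}, $\bar{D}_t = \partial_t$ is a scalar differential operator that commutes with partial derivatives in the spatial coordinates and preserves the tensor type and density weight. Since $\bar{D}_c$ is the spatial Levi-Civita covariant derivative of $\bar{q}$ acting on a density-weighted tensor in $S_D(A,B,w)$, I would begin by writing out
\be
\bar{D}_c H^{a_1\ldots a_A}_{b_1\ldots b_B}
=\partial_c H^{a_1\ldots a_A}_{b_1\ldots b_B}
+\sum_{i=1}^A \bar{\Gamma}^{a_i}_{cd}\,H^{a_1\ldots d\ldots a_A}_{b_1\ldots b_B}
-\sum_{j=1}^B \bar{\Gamma}^{d}_{c b_j}\,H^{a_1\ldots a_A}_{b_1\ldots d\ldots b_B}
-w\,\bar{\Gamma}^{d}_{cd}\,H^{a_1\ldots a_A}_{b_1\ldots b_B}.
\ee

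Next I would apply $\bar{D}_t = \partial_t$ to this expression via the Leibniz rule. Two kinds of terms appear: those in which $\partial_t$ hits $H$ (or $\partial_c H$) and those in which $\partial_t$ hits a Christoffel symbol $\bar{\Gamma}$, producing $C^c_{ab} = \partial_t \bar{\Gamma}^c_{ab}$ as in Lemma \ref{la4.3}. Then I would compute $\bar{D}_c(\bar{D}_t H)$ using the same formula but with $H$ replaced by $\partial_t H$, which is legitimate because Lemma \ref{la4.2} guarantees $\partial_t H \in S_D(A,B,w)$ with the same index structure and weight as $H$.

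Taking the difference $\bar{D}_t \bar{D}_c H - \bar{D}_c \bar{D}_t H$, the mixed partial $\partial_t \partial_c H$ cancels against $\partial_c \partial_t H$, and every term in which a Christoffel symbol multiplies $\partial_t H$ (or its index-contracted variants) also cancels. The only surviving contributions are those in which $\partial_t$ differentiates a Christoffel symbol, and these reassemble precisely into
\be
\sum_{i=1}^A C^{a_i}_{cd}\,H^{a_1\ldots d\ldots a_A}_{b_1\ldots b_B}
-\sum_{j=1}^B C^{d}_{c b_j}\,H^{a_1\ldots a_A}_{b_1\ldots d\ldots b_B}
-w\,C^{d}_{cd}\,H^{a_1\ldots a_A}_{b_1\ldots b_B},
\ee
which is the claimed identity.

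The calculation is essentially a bookkeeping exercise with no real analytic obstacle; the only point requiring mild care is the density-weight term, where one must check that the $\partial_t$ acting on $-w \bar{\Gamma}^d_{cd}$ in $\bar{D}_t \bar{D}_c H$ and the $-w \bar{\Gamma}^d_{cd}$ in $\bar{D}_c \bar{D}_t H$ combine correctly. Since $\partial_t H$ transforms with the same weight $w$ as $H$ by Lemma \ref{la4.2}, the two weight-dependent contributions differ precisely by $-w\,C^d_{cd}\,H$, consistent with the stated formula. The tensorial character of the right-hand side is automatic once $C^c_{ab}$ is known to be a tensor, which is already the content of Lemma \ref{la4.3}.
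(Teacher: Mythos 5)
Your proof is correct and follows essentially the same route as the paper, which simply states that one applies the standard formula for the covariant derivative of a weight-$w$ tensor density and then uses Lemma \ref{la4.3} (equation (\ref{4.6})); your explicit Leibniz-rule bookkeeping, with only the $\partial_t\bar{\Gamma}=C$ terms surviving the cancellation, is exactly the intended computation. No gaps.
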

The proof consists of applying the standard formula for $\bar{D}_a T$ and applying 
(\ref{4.6}). It follows that the spatial-spatial curvature of the hybrid differential 
is determined by the 
{\it Riemannian curvature} of $\bar{q}$ while the temporal-spatial curvature is determined by 
the {\it extrinsic curvature} of $\bar{q}$. This is quite appealing for it means that the 
renormalisation flow precisely generates those terms which are already part of the 
Euclidean action. We anticipate that the exact flow, as defined below, generates an effective 
average action $\bar{\Gamma}_k[\hat{Q},\bar{Q}]$ such that the actual effective action 
$\Gamma[\hat{Q}]$ of our quantum field theory defined as the Legendre transform of $\ln(Z[F])$     
and given by $(\bar{\Gamma}_k[\hat{Q}',\bar{Q}])_{k=0,\hat{Q}'=0,\bar{Q}=\hat{Q}}$ is the most 
general Diff($\sigma$) invariant functional that one can build from $\hat{q}_{ab}, R_{abcd}(\hat{q})$ and
the $\hat{D}_t, \hat{D}_a$ derivatives thereof, where $\hat{q}=[\det(\hat{Q})]^{-\frac{r}{1+rD}} \hat{Q}$.
The simplest such terms not containing higher time derivatives (which would generate an 
Ostrogradsky instability \cite{24}) 
are (dropping the hat, using 
$(R_4)_{abcd}=R_{abcd},\; (R_2)_{ab}=q^{cd} R_{acbd},\; R_0= q^{ab} R_{ab}$ and traces are to 
be formed using $q$)
\be \label{4.8}
\sqrt{\det(q)}\{[R_0]^n,\; [{\sf Tr}([R_2]^m)]^n,\; [{\sf Tr}([R_4]^m)]^n,\; 
[{\sf Tr}(k^m)]^n,\; [{\sf Tr}([R_2\cdot k]^m)]^n,\; [{\sf Tr}([D R_2]^m)]^n
[{\sf Tr}([D k]^m)]^n,..\}\;,
\ee
where $D$ acts only into the spatial direction. 

It is conceivable that a minimal list of such 
terms to close the flow is downsized by the requirement such terms are to arise as the 
specialisation of a Diff$_{D+1}(M)$) invariant term to synchronous coordinates. Indeed 
while (\ref{4.3a}) is certainly a possible choice as far as the Diff$_D(M)$ covariance 
is concerned, the following list of requirements has to be met by an admissible $\overline{\Delta}$:\\
1.\\ 
$\overline{\Delta}$ preserves the real vector space $S_D(A,B,w)$. 
This makes sure that $\Delta$ does not map out of the given theory space.\\
2. \\
$\overline{\Delta}$ is a negative semi-definite (and therefore symmetric) operator with respect to the 
inner product on $S_D(A,B,w)$ defined by
\be \label{4.9}
<H,\hat{H}>_D:=\int_{\mathbb{R}}\; dt\; \int_\sigma\; d^Dx\; 
\sqrt{\det(\bar{q})}^{1-2w}\;\prod_{i=1}^A\; \bar{q}_{a_i a'_i}\;
\prod_{j=1}^B\; \bar{q}^{b_j b'_j}\; H^{a_1..a_A}_{b_1 .. b_B}\;
\hat{H}^{a'_1..a'_A}_{b'_1 .. b'_B}\;.
\ee
This makes sure that we can perform useful functional analysis with $\bar{\Delta}$.
Note that the inner product is positive definite and Diff$(\sigma)$ invariant if both
$\bar{q}$ and $H,\hat{H}$ transform according to their indicated tensor density type
with respect to Diff$_D(M)$.\\ 
3.\\ 
$\overline{\Delta}$ reduces to the flat space Laplacian when $\bar{q}_{ab}=\delta_{ab}$.
This ensures that cutoff functions constructed from the flat space $\overline{\Delta}$ have the same 
analytical properties when $\overline{\Delta}$ is generalised to curved space.\\  
\\
It is clear that (\ref{4.3a}) obeys requirements 1. and 3. but violates 2. 
However, it is easy to construct infinitely many $\overline{\Delta}$ that obey all three requirements
by backwards engineering: Consider the manifestly negative semidefinite 
object (we subsume $a:=(a_1..a_A), \; b=(b_1..b_B)$ 
into compound spatial indices and write likewise $\bar{q}_{aa'}=\prod_i \bar{q}_{a_i a'_i}$ etc.)
\ba \label{4.10}
<H,\; \overline{\Delta} \hat{H}>&:=&
-\int\; d^{D+1}X\; \sqrt{\det(\bar{q})}\; \{(D_t H)^a_b\; q_{aa'}\; q^{bb'}\;(D_t \hat{H})^{a'}_{b'}
+(D_c H)^a_b\; q_{aa'}\; q^{bb'}\;q^{cc'}\;(D_{c'} \hat{H})^{a'}_{b'}
\nonumber\\
&& +\sum_{n=1}^\infty\; \kappa_n\; ([\bar{k}^n]_a^b\; H^a_b)\;([\bar{k}^n]_{a'}^{b'}\; H^{a'}_{b'})\}\;,
\ea
where $\bar{k}^n$ is the $n-$th power of $\bar{k}_a^b=\bar{k}_{ac} \bar{q}^{cb}$ and $\kappa_n\ge 0$ and non vanishing 
for finitely many $n$ only. We read off $\bar{\Delta}$ using simple 
integration by parts exploiting the rapid decrease assumption. E.g. for $\kappa_n=0\; \forall
\; n$
\be \label{4.11}  
[\overline{\Delta} H]^a_b =\sqrt{\det(\bar{q})}^{-[1-2w]}\; \bar{q}^{aa'}\; \bar{q}_{bb'}\; (\bar{D}_t(\sqrt{\det(\bar{q})}^{1-2w}\; 
\bar{q}_{\cdot c'} \bar{q}^{\cdot d'} (\bar{D}_t H)^{c'}_{d'}))_{a'}^{b'}
+(\bar{q}^{cd} \bar{D}_c \bar{D}_d H)^a_b\;.
\ee
Since in the flat space limit we have $\bar{k}=0$ it follows that all of (\ref{4.8}) obey 1.-3.

\subsection{Projection structure and associated Laplacians}
\label{s4.3}

To downsize the number of possibilities and to tie the flow generated by $\overline{\Delta}$ to 
its spacetime origin we note the following:\\ 
Equip $T_{D+1}(A,B,w)$  with the inner product
\be \label{4.12}
<T,\hat{T}>_{D+1}
:=\int_{\mathbb{R}}\; dt\; \int_\sigma\; d^Dx\; 
\sqrt{\det(\bar{g})}^{1-2w}\;\bar{g}_{\mu\mu'}\;\bar{g}^{\nu\nu'}
T^\mu_\nu\; \hat{T}^{\mu'}_{\nu'}
\ee
This inner product is positive definite and Diff$_{D+1}(M)$ invariant if 
we let Diff$_{D+1}(M)$ act on all fields $\bar{g}, T, \hat{T}$. It is therefore in 
particular invariant under the subgroup Diff$_D(M)$. Consider now the following 
objects
($\mu=(\mu_1,..,\mu_A),\; \nu=(\nu_1, .. \nu_B)$ are compound spacetime indices
and $\delta^a_\mu=\prod_{i=1}^a \delta^{a_i}_{\mu_i}$ etc.) 
\be \label{4.13}
E^{\mu b}_{\nu a}=\delta^\mu_a\;\delta^b_\nu,\; 
R^{a\nu}_{b \mu}=\delta_\mu^a\;\delta_b^\nu
\ee
\begin{Lemma} \label{la4.5} ~\\
i.\\
The map 
\be \label{4.14}
E:\; S_D(A,B,w)\to T_{D+1}(A,B,w);\;\; [E\cdot H]^\mu_\nu:= E^{\mu b}_{a \nu} \; H^a_b
\ee
is an isometric embedding with respect to the Hilbert structures $<.,.>_D$ and 
$<.,.>_{D+1}$ respectively when $\bar{g}_{t\mu}=\delta^t_\mu$. \\
ii.\\
The adjoint of $E$ is the map 
\be \label{4.15}
R=E^\ast:\; T_{D+1}(A,B,w)\to S_D(A,B,w);\;\; [R\cdot T]^a_b:= R^{a \nu}_{b \mu} \; T^\mu_\nu
\ee
iii.\\
The image $T_D(A,B,w):=E\cdot S_D(A,B,w)$ is a Diff$_D(M)$ invariant subspace of 
$T_{D+1}(A,B,w)$ and we have 
\be \label{4.16}
R\cdot E={\sf id}_{S_D(A,B,w)},\; P:=E\cdot R: T_{D+1}(A,B,w)\to T_D(A,B,w)
\ee
is an orthogonal projection.
\end{Lemma}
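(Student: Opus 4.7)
The plan is to exploit the block structure of the background metric in the synchronous gauge: the hypothesis $\bar{g}_{t\mu}=\delta^t_\mu$ forces $\bar{g}_{ta}=0$, $\bar{g}_{ab}=\bar{q}_{ab}$, $\bar{g}^{ab}=\bar{q}^{ab}$ and $\det(\bar{g})=\det(\bar{q})$. Once this block-diagonal structure is in place, each of the three claims reduces to a mechanical index contraction, and no truly analytical step is required.

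For part i., I would substitute $[E\cdot H]^\mu_\nu=\delta^\mu_a\delta^b_\nu H^a_b$ into the definition \eqref{4.12} of $<.,.>_{D+1}$. The Kronecker deltas collapse the compound spacetime summations over $\mu,\mu',\nu,\nu'$ to purely spatial ranges, so the metric factors $\bar{g}_{\mu\mu'}\bar{g}^{\nu\nu'}$ reduce to $\bar{q}_{aa'}\bar{q}^{bb'}$. Combined with $\sqrt{\det(\bar{g})}=\sqrt{\det(\bar{q})}$, the integrand is exactly that of $<H,\hat{H}>_D$ in \eqref{4.9}.

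For part ii., the same reduction is applied with a general $T\in T_{D+1}(A,B,w)$ in the second slot of $<E\cdot H,T>_{D+1}$. The substantive input is that, because $\bar{g}_{ta}=\bar{g}^{ta}=0$, any component of $T$ carrying a time index fails to couple to $E\cdot H$ through the metric contractions; only the purely spatial components $T^a_b$ survive, and these are precisely $[R\cdot T]^a_b$ by definition of $R$. This identifies $E^\ast=R$. For part iii., Diff$_D(M)$-invariance of $\mathrm{im}(E)$ is immediate from Lemma \ref{la4.1}, since $\mathrm{im}(E)$ is exactly the summand $S_D(A,B,w)\subset T_{D+1}(A,B,w)$ corresponding to fixing none of the spacetime indices to the value $t$. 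The identity $R\cdot E=\mathrm{id}_{S_D(A,B,w)}$ is a one-line contraction: $[R\cdot E\cdot H]^a_b=\delta^a_\mu\delta^\nu_b\,\delta^\mu_{a'}\delta^{b'}_\nu H^{a'}_{b'}=H^a_b$. Then $P=E\cdot R$ satisfies $P^2=E(RE)R=ER=P$, and $P^\ast=(ER)^\ast=R^\ast E^\ast=ER=P$, where the last equality uses part ii. and its dual $R^\ast=E$.

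The only real obstacle is disciplined bookkeeping of the compound indices $a=(a_1,\ldots,a_A)$, $b=(b_1,\ldots,b_B)$ and their spacetime extensions; conceptually nothing is at stake beyond the observation that the assumption $\bar{g}_{t\mu}=\delta^t_\mu$ is precisely the input needed to decouple the ``mixed'' sector of $\bar{g}$, so that the purely spatial tensors embed isometrically and orthogonally into the full spacetime tensor space.
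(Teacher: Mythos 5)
Your proof is correct and follows essentially the same route as the paper's: block diagonality of $\bar{g}$ in synchronous gauge (so that $\bar{g}_{\mu\nu}\delta^\nu_b=\delta^a_\mu\bar{q}_{ab}$ and $\det(\bar{g})=\det(\bar{q})$) collapses the spacetime contractions to spatial ones for parts i.\ and ii., and part iii.\ follows from $\delta^a_\mu\delta^\mu_b=\delta^a_b$ together with the standard algebra $P^2=E(RE)R=P$, $P^\ast=R^\ast E^\ast=ER=P$. The only cosmetic difference is that you cite Lemma \ref{la4.1} for the Diff$_D(M)$-invariance of the image where the paper simply notes it holds by construction.
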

\begin{proof}:
i.\\
The claim is that 
\be \label{4.17}
<E\cdot H, E\cdot \hat{H}>_{D+1}=<H, \hat{H}>_D\;,
\ee
which is easily verified using 
$\det(\bar{g})=\det(\bar{q})$ and block diagonality of $\bar{g}$, i.e. 
\be \label{4.18} 
\bar{g}_{\mu\nu} \delta^\nu_b=\delta^a_\mu\; \bar{q}_{ab},\;
\bar{g}^{\mu\nu} \delta_\nu^b=\delta_a^\mu\; \bar{q}^{ab}\;. 
\ee
ii.\\
The claim is that 
\be \label{4.19}
<T, E\cdot H>_{D+1}=<R\cdot T, H>_D
\ee
which again follows from block diagonality.\\
iii.\\
Invariance is by construction and 
a simple calculation based on $\delta^a_\mu \delta^\mu_b=\delta^a_b$ shows that $R\cdot E \cdot H=H$.
This implies $P^2=(E\cdot R)^2=E\cdot (R\cdot E)\cdot R=P$ and $P^\ast=R^\ast \cdot E^\ast
=E\cdot R=P$. 
\end{proof}
Let now 
\be \label{4.20}
\overline{\Delta}_{D+1}:=\bar{g}^{\mu\nu}\; \bar{\nabla}_\mu \bar{\nabla}_\nu
\ee
be the standard spacetime Laplacian with respect to $\bar{g},\; \bar{\nabla} \bar{g}=0$ and 
$\bar{g}_{t\mu}=\delta_\mu^t,\; \bar{g}_{ab}=\bar{q}_{ab}$.  Then 
$\overline{\Delta}_{D+1}$ preserves $T_{D+1}(A,B,w)$ and is negative definite with respect 
to (\ref{4.12}). To make $\overline{\Delta}_{D+1}$ act on our space of fields $S_D(A,B,w)$ we must 
first embed it via $E$ into $T_{D+1}(A,B,w)$.
However, $\overline{\Delta}_{D+1}$ does not preserve the subspace $T_D(A,B,w)$: It is 
easy to see that $[\overline{\Delta}_{D+1}\; E\cdot H]^\mu_\nu$ is generically not vanishing when some 
of the $\mu_i,\nu_j$ take the index value $t$. The above 
developments however suggest to define
\be \label{4.21}
\overline{\Delta}_D:=R\cdot \overline{\Delta}_{D+1} \cdot E\;,
\ee
which does preserve $S_D(A,B,w)$. Using $E=P\cdot E,\; R=R\cdot P$ an equivalent definition 
is 
\be \label{4.22}
\overline{\Delta}_D:=R\cdot \overline{\Delta}^P_{D+1} \cdot E,\;
\overline{\Delta}^P_{D+1}:=P\cdot \overline{\Delta}_{D+1} \cdot P\;,
\ee
where $\bar{\Delta}^P_{D+1}$ is the projected Laplacian that preserves $T_D(A,B,w)$.
The advantage of $P$ over $E$ is that both $\overline{\Delta}_{D+1}, P$ act on 
$T_{D+1}(A,B,w)$. Due to $R^\ast=E$ and $P^\ast=P$ the operator $\overline{\Delta}_D$ on $S_D(A,B,w)$ is 
manifestly negative semi definite and thus 
symmetric on this respective domain. For instance
\ba \label{4.23}
&& <H,\overline{\Delta}_D\;\hat{H}>_D=<H,E^\ast\cdot  \overline{\Delta}_{D+1}\cdot E\cdot \hat{H}>_{D+1}
=<E\cdot H, \overline{\Delta}_{D+1}\cdot E\cdot \hat{H}>_{D+1}
\nonumber\\
&=&<\overline{\Delta}_{D+1}\cdot E\cdot H, E\cdot \hat{H}>_{D+1}
=<E^\ast \cdot \overline{\Delta}_{D+1}\cdot E\cdot H,\hat{H}>_D
\ea
and negative semi-definiteness of $\overline{\Delta}_D$ is inherited from $\overline{\Delta}_{D+1}$.

Note that despite the notation, $\overline{\Delta}_D$ is second order with respect to both 
$D_t, D_a$. Applied to our concrete theory, it would now be natural to construct cutoff functions of the form
($H_{ab}=Q_{ab}-\bar{Q}_{ab}$ and $\bar{q}=[\det(\bar{Q})]^{-\frac{r}{1+rD}} \bar{Q}$)
\be \label{4.24}
R_k(H,\bar{Q}):=<H, R_k(-\overline{\Delta}_D) H>_D\;,
\ee
where 
\be \label{4.25}
R_k(z)=k^2\; r(z/k^2),\; r(y)=\int_0^\infty\; ds\; \hat{r}(s)\; e^{-s\;y}
\ee
is the Laplace transform of the function $\hat{r}(s)$ which is to be smooth and of rapid decrease 
in heat kernel time $s$ both as $s\to 0 $ and $s\to \infty$ in order to be useful for ASQG as argued in 
\cite{4,4a}. A typical example is $r(s)=e^{-s^2-s^{-2}}$. Accordingly 
\be \label{4.26}
R_k(H,\bar{Q}):=k^2\; \int_0^\infty\; ds \; \hat{r}(s) <H, e^{s\; \overline{\Delta}_D/k^2}\; H>_D\;,
\ee
which involves the heat kernel of $\overline{\Delta}_D$. 
This looks now almost as in the standard case, except that 
$\overline{\Delta}_{D+1}$ is replaced by $\overline{\Delta}_D$ and the backgrounds on which 
$\overline{\Delta}_{D+1}$ is based is restricted to be in synchronous gauge. This innocent 
looking modification bears however the following technical nuisance: The standard 
heat kernel expansion techniques do not immediately apply. To se this we note that 
\be \label{4.27}
e^{s\overline{\Delta}_D/k^2}=E^\ast \cdot e^{s\overline{\Delta}^P_{D+1}} \cdot E\;,
\ee
where we used isometry $E^\ast E=$ id in the formal Taylor expansion of the exponential 
function. 
One would now like to develop either expansion techniques directly for 
$\overline{\Delta}^P_{D+1}$
or try to relate them to those for $\overline{\Delta}_{D+1}$. The problem with the first 
route is that $[\overline{\Delta}_{D+1},P]\not=0$ as already indicated above. See appendix
\ref{sa} for details. This fact implies 
that all formulae using the Synge world function heavily used in heat kernel expansions 
have to be rederived, perhaps using the technology developed for Horava--Lifshitz 
gravity \cite{25a}
in \cite{25}. In the appendix we sketch a method that uses S-matrix perturbation theory 
where $s$ is the perturbation parameter, and the theory of non-minimal operators developed in \cite{26}.
At any finite perturbative $s$ order the non-minimal operators involved are polynomials in the iterated 
commutators $[\overline{\Delta},P]_n$, see appendix \ref{sa} for details. It is important 
to have formulated these corrections in terms of $P$ rather than 
$E$ as $E, \overline{\Delta}_{D+1}$ act on different spaces.

With respect to that perturbative scheme the zeroth order is given by the following simpler 
version of 
(\ref{4.24})
\be \label{4.28}
R_k(H,\bar{Q}):=<E\cdot H, R_k(-\overline{\Delta}_{D+1})\;E\cdot  H>_{D+1}\;,
\ee
which involves only the ``projection'' of the standard heat kernel $e^{s\overline{\Delta}_{D+1}}$
to the space $S_D(A=0,B=2,w=2r)$. 
The advantage is that we can now 
copy all the heat kernel machinery from the standard case without using complicated 
non-minimal operators. One may even argue that 
the heat kernel and not the Laplacian is the fundamental object and in that sense 
(\ref{4.28}) could be argued to be ``more natural'' than (\ref{4.24}) thus not 
taking the corrections into account. In this paper we will start with (\ref{4.28}) 
as a first step, keeping the non-minimal operator corrections for future treatment. 

\subsection{Effective average action and heat kernel expansion}
\label{s4.4}

The remaining steps are now standard. First we employ the 
background field method and replace $Z[F]$ by 
\be \label{4.29}
\bar{Z}[F,\bar{Q}]:=\int\; [dH] \; e^{S[\bar{Q}+H]}\; e^{<F,H>}
\ee
and then we introduce the cutoff kernel
\be \label{4.30}
\bar{Z}_k[F,\bar{Q}]:=\int\; [dH] \; e^{S[\bar{Q}+H]}\; e^{-\frac{1}{2}\;R_k(H;\bar{Q})}\; e^{<F,H>}\;.
\ee
Then 
\be \label{4.31}
\bar{C}_k(F,\bar{Q}):=\ln(\bar{Z}_k(F;\bar{Q}),\;\;
\bar{\Gamma}_k(\hat{Q},\bar{Q}):={\sf extr}_F[<F,\hat{Q}>-\bar{C}_k(F;\bar{Q})]-\frac{1}{2}R_k(\hat{Q},\bar{Q})\;.
\ee
By construction, (\ref{4.13}) obeys the {\it Wetterich identity}
\be \label{4.32}
\frac{\partial}{\partial k} \bar{\Gamma}_k(\hat{Q},\bar{Q})=
\frac{1}{2}{\sf Tr}([R_k+\bar{\Gamma}^{(2)}_k(\hat{Q},\bar{Q})]^{-1}\; [\partial_k R_k(.,\bar{Q})])\;,
\ee
where $\bar{\Gamma}_k^{(2)}(\hat{Q},\bar{Q})$ is the second functional derivative of 
$\bar{\Gamma}_k(\hat{Q},\bar{Q})$ with respect to $\hat{Q}$ understood as a bi-distribution and 
also $\partial_k R_k$ is understood as a symmetric bi-distribution. To evaluate this identity 
one Taylor expands both sides in $\hat{H}=\hat{Q}-\bar{Q}$ to the desired precision. In what 
follows we will be content with the zeroth order. Then one makes an Ansatz for 
$\bar{\Gamma}_k(\hat{Q},\bar{Q})$ involving finitely many terms of the type (\ref{4.8}) with 
$k$-dependent couplings on the l.h.s. and retains on the r.h.s. only those terms of the 
same type (truncation of theory space). In what follows we will be content with the 
Einstein--Hilbert truncation.

As a final remark, note that the heat kernel expansion of $e^{s\overline{\Delta}_{D+1}}$ acting 
on $T_{D+1}(A,B,w)$ is usually only considered for the case $w=0$. We may compute it for 
$w\not=0$ as follows: By definition, the heat kernel $K(s)$ is a bi-tensor  
of type $(A,B,w)$ at $X$ and of type $(B,A,-w)$ at $Y$ such that (we relabel $s/k^2$ by $s$)
\be \label{4.33}
[e^{s\overline{\Delta}_{D+1}}\cdot T]^\mu_\nu(X)=\int_M\; d^{D+1}Y\; \sqrt{\det(\bar{g}(Y))}\; 
K^{\mu\beta}_{\nu\alpha}(s;X,Y)\; T^\alpha_\beta(Y)\;,
\ee
where Greek letters from the beginning/middle of the alphabet refer to the tensor structure 
at $Y,X$ respectively. Hence
\be \label{4.34}
\frac{d}{ds} K^{\mu\beta}_{\nu\rho}(s;X,Y)=
[\overline{\Delta}_{D+1} K^{\cdot\beta}_{\cdot\rho}(s;\cdot,Y)]^\mu_\nu(X),\;
K^{\mu\beta}_{\nu\alpha}(0;X,Y)=\delta^\mu_\nu \delta^\beta_\alpha\; \delta_{\bar{g}}(X,Y)\;,
\ee
where $\delta_{\bar{g}}(X,Y)$ is the bi-scalar valued $\delta$ distribution related to 
the coordinate $\delta$ distribution by $\delta(X,Y)=\sqrt{\det(\bar{g}(Y))}\;\delta_{\bar{g}}(X,Y)$.
This is consistent with the kernel to be of density weight zero at coincident points. 
Denoting by $\sigma(X,Y)$ the Synge world function (half of the square of the geodesic 
distance between $X,Y$ with respect to $\bar{g}$) the Ansatz reads
\be \label{4.35}
K^{\mu\beta}_{\nu\alpha}(s;X,Y)
=\frac{1}{\sqrt{4\pi s}^{D+1}}\;
\frac{|\det(\bar{\nabla}^X\bar{\nabla}^Y \sigma)|}{\sqrt{\det(\bar{g}(X))\det(\bar{g}(y))}}\; e^{-\frac{\sigma(X,Y)}{2s}}\; 
[\frac{\det(\bar{g}(X))}{\det(\bar{g}(Y))}]^{w/2}\; \Omega^{\mu\beta}_{\nu\alpha}(s;X,Y)  \;,
\ee
where the first three factors form a bi-scalar (the second is the van Vleck -- Morette determinant) chosen such that in the limit 
$s=0$ these converge to $\delta_{\bar{g}}$ and at $s\not=0$ but flat $\bar{g}$ coincides with the 
scalar heat kernel on $\mathbb{R}^{D+1}$. The fourth factor is new and takes care of the 
density weight. The actual heat kernel expansion concerns the bi-tensor $\Omega(s)$ of 
type $(A,B,0)$ at $X$ and $(B,A,0)$ at $Y$. It is normalised such that 
$\Omega^{\mu\beta}_{\nu\alpha}(0,X,X)=\delta^\mu_\alpha\delta^\beta_\nu$. 
One now inserts (\ref{4.31}) into (\ref{4.30}) and obtains an exact PDE system for 
$\Omega(s)$. To solve it one expands $\Omega(s)=\sum_{n=0}\; s^n\; \Omega(n)$ where 
$\Omega(n)$ is still a bi-tensor of the same type as $\Omega(s)$ and obtains an iterative 
PDE scheme that expresses $\Omega_{n+1}$ in terms of $\Omega_m,\; m\le n$. To solve 
it and reduce the PDE system to an algebraic system we use the parallel propagators 
$h^\mu_\alpha(X,Y)$, i.e. the holonomies of the Christoffel connection along the geodesic 
from $X$ to $Y$ which are bi-tensors of type $(1,0,0)$ at $X$ and $(0,1,0)$ at $Y$.
We now write 
\be \label{4.36}
\Omega^{\mu\rho}_{\nu\sigma}(n;X,Y):=
\Omega^{\mu\beta}_{\nu\alpha}(n,X,Y)\; h^\rho_\beta(X,Y) h^\alpha_\sigma(Y,X)\;,
\ee
with inversion 
\be \label{4.37}
\Omega^{\mu\beta}_{\nu\alpha}(n;X,Y):=
\Omega^{\mu\rho}_{\nu\sigma}(n,X,Y)\; h^\sigma_\alpha(X,Y) h^\beta_\rho(Y,X)\;,
\ee
and use that (\ref{4.36}) is a bi-tensor of type $(A+B,A+B,0)$ at $X$ and type 
$(0,0,0)$ at $Y$. Being a scalar at $Y$, we may therefore Taylor expand with respect to 
the $Y$ dependence
\be \label{4.38}
\Omega^{\mu\rho}_{\nu\sigma}(n; X,Y)=
\sum_{m=0}^\infty \Omega^{\mu\rho\lambda_1..\lambda_m}_{\nu\sigma}(n,m;X)\;
[\bar{\nabla}_{\lambda_1}\sigma].. 
[\bar{\nabla}_{\lambda_m}\sigma]\;
\ee
where $\Omega^{\mu\rho\lambda_1..\lambda_m}_{\nu\sigma}(n,m;X)$ is an ordinary mono-tensor 
of type $(A+B+m,A+B,0)$ at $X$. Using the master equations that hold for 
$\sigma(X,Y),\; h^\mu_\alpha(X,Y)$ \cite{27}, these tensors can be 
iteratively determined by the coincidence limits $Y\to X$ of repeated covariant derivatives
$\bar{\nabla}_\mu$ at $X$ of $\Omega^{\mu\rho}_{\nu\sigma}(n;X,Y)$.

\section{Analysis of the Wetterich equation} \label{s5}

In this section we will run the RG methods developed in ASQG on our model. First of all, we will specify the  truncation we will carry our 
analysis on, together with the choice of the cutoff. Subsequently, we will write down the beta functions of Newton's coupling and the 
cosmological constant. We determine  the fixed points and the critical exponents. Finally, we integrate down to $k \to 0$ the flow equations, 
obtaining the $k = 0$ limit of the dimensionful couplings. We emphasise that compared to standard ASQG treatments, 
our methods allow for an integration down towards the deep infrared ($k = 0$), obtaining the 
physical effective action.

\subsection{Einstein--Hilbert truncation}
\label{s5.1}
For the effective average action we make the following Ansatz, following the operational structure of \eqref{2.29}:
\begin{equation}\label{5.1}
	\begin{aligned}
\bar\Gamma_k[\hat Q, \bar Q]=& \frac{1}{G_{N,k}}\int d^Dx\left\{
	\frac{1}{4}[\det(Q)]^{\frac{1}{2(1+rD)}}[Q^{ac}\;Q^{bd}-u'\;Q^{ab}\;Q^{cd}]\dot{Q}_{ab}\dot{Q}_{cd}
	\right.\\
	&\qquad \qquad\qquad \qquad\left.-[\det(Q)]^{\frac{1}{2(1+rD)}}\;[R([\det(Q)]^{-\frac{r}{1+rD}} Q)-2\Lambda_k]\right\}\;. 
		\end{aligned}
\end{equation}
with $u' = \frac{1 + r (2 + D r)}{(1 + D r)^2}$ found in section \ref{s2.3}. We are going to refer to the first term as the kinetic term and the second term as the Ricci term with the cosmological constant.

Let us  start considering the case  $r = 0$. In order to derive the Hessian, namely the second functional derivative at fixed $\bar Q$, we exploit the background field method and expand
\begin{equation}
\bar\Gamma_k[H + \bar Q, \bar Q] = \bar\Gamma_k[\bar Q, \bar Q]+ O(H) + \bar\Gamma_k^{\text{quad}}[H, \bar Q]+ O(H^3)\;.
\end{equation}
 By taking two times the functional derivative wrt. $H$, the quadratic part gives the Hessian
 \begin{equation}
\bar \Gamma^{(2), abcd}[H, \bar Q]=\frac{{\delta \bar\Gamma}_k^{\text{quad}}[H,\bar{Q}]}{\delta H_{ab} \delta H_{cd}}\;.
 \end{equation}
  With the ansatz \eqref{5.1}, it takes the form:
\ba\label{5.2}
\bar \Gamma^{(2), abcd}[H, \bar Q]_{H=0}&=&
\frac{[\det(\bar Q)]^{1/2}}{G_{N,k}} \left((-\overline{\Delta}_{D+1}+2\Lambda_k) K^{abcd} + U_k^{abcd}\right)\\ \label{5.3}
K^{abcd}&=&\frac{1}{4}\bar Q^{ac}  \bar Q^{bd}+\frac{1}{4}\bar Q^{ad}  \bar Q^{bc}-\frac{1}{2}\bar Q^{ab} \bar Q^{cd}
\\
U_k^{abcd}&=&-\frac{1}{2}\left( \bar \nabla^{(a}\bar \nabla^{b)} \bar Q^{c d} + \bar \nabla^{(c}\bar \nabla^{d)} \bar Q^{a b} - \bar \nabla^{(a}\bar \nabla_i \bar Q^{i)c}\bar Q^{bd}-\bar \nabla^{(a}\bar \nabla_i \bar Q^{i)d}\bar Q^{bc}\right)\nonumber\\
&&-\frac{1}{2}(\bar R_{1}^{acbd}+\bar R_{1}^{adbc})-\frac{1}{4}\left(\bar Q^{ac} \bar R_1^{bd} + \bar Q^{ad}\bar R_1^{bc} +\bar Q^{bc} \bar R_1^{ad}+\bar Q^{db}  \bar R_1^{ac}\right)\label{5.4}\\
&&+\frac{1}{2}\left(\bar Q^{ab }\bar R_1^{cd}+ \bar Q^{cd} \bar R_1^{ab}\right)+\bar R_1\left(\bar Q^{ac}  \bar Q^{bd}+\bar Q^{ad}  \bar Q^{bc}-\bar Q^{ab} \bar Q^{cd}\right) \nonumber
\\&&+ \frac{\Lambda_k}{2}\left(3\bar Q^{a b}\bar Q ^{c d}-2(\bar Q^{ac}  \bar Q^{bd}+\bar Q^{ad}  \bar Q^{bc})\right)\nonumber
\ea
were the $\bar \nabla$ and the $\bar R_{\epsilon=1}\equiv\bar R_1$ are meant to be the $D+1$-dimensional  covariant derivative and 
the $D+1$-dimensional curvature invariants constructed on it, respectively, i.e.,  on the background with $N=1$ and $N^a = 0$. 
Since we are interested in the variation of the spatial components, these are evaluated only on the spatial components, 
explicitly that means that the indexes $a,b,c,d,i = 1,\dots, D$. The Laplacian $\overline \Delta_{D+1}$ is also restricted 
to the same class of backgrounds  introduced in \eqref{4.20}. Furthermore, we observe that the  $+2\Lambda_k$ term will act as a mass 
term in the propagator as follows from \eqref{5.2} which will become clearer later, when we need to evaluate the traces.

As discussed in the previous section, we wish now to use the covariant Laplacian $\overline \Delta_{D+1}$ and the associated heat kernel 
technology. In order to evaluate the r.h.s. of the flow equation and to compare with the l.h.s., we project the operators back into $S_D (A=0, B=2, w = 0)$. In this first investigation we don't take into account the corrections in the heat kernel arising from the projection between $S_D$ and $T_{D+1}$ (see appendix \ref{sa}). This is consistent with the choice of the regulator involving the projection of the standard heat kernel to $S_D$ as in \eqref{4.28}.

Let us illustrate with an example how to rearrange the terms, starting by the time derivative contribution, i.e., the first term in \eqref{5.1}. Performing the second functional derivative, one obtains:
\begin{equation}
\bar \Gamma_k^{(2), abcd}= \frac{1}{4 G_{N,k}} [\det(\bar Q)]^\frac{1}{2}[\bar Q^{a(c}\;\bar Q^{d)b}-\;\bar Q^{ab}\;\bar Q^{cd}](-\partial_t^2) + \text{ additional terms } = \frac{[\det(\bar Q)]^\frac{1}{2}}{G_{N,k}} K^{abcd} (-\partial_t^2) +  \cdots
\end{equation}
where we used $K$ in \eqref{5.2} and the additional terms not displayed are mixed contributions of the time derivative acting on the 
fluctuation and the background. All these terms are taken into account and are combined with the non-minimal operators and the curvature tensors.
Furthermore, combining with the second functional derivative coming from the Ricci scalar contribution, we can recast 
the contributions in terms the Laplacian $\overline \Delta_{D+1}$. Consider for example the operator to be applied to a scalar:
\begin{equation}
-	\bar g_{\mu \nu}\bar \nabla^\mu \bar \nabla^\nu \bigg|_{\bar g_{tt} = 1,\; \bar g_{ta} = 0,\; \bar g_{ab} = \bar Q_{ab}} =- \partial_t ^2- \bar Q_{ab}\bar \nabla^a \bar \nabla^b - \frac{1}{2}\bar Q^{ab} \partial_t \bar Q_{ab} \partial_t
\end{equation}
and again the non-minimal terms will get recombined with the last, non-minimal term in the first line of the potential-like term $U_k$ in \eqref{5.4}. Importantly, also the contribution coming from the Ricci term comes with the same index structure as the kinetic term, namely with $K^{abcd}$. This justifies the expression in  \eqref{5.2}.

In fact, this was anticipated based on the analysis presented in section 2, in particular in \eqref{2.35}, where we showed that modulo the Gibbons-Hawking boundary term, via the Codacci relation the foliated Lagrangian we are considering reproduces the $D+1$ Lagrangian on the foliated class of spacetimes with $\bar g_{tt} = 1$ and $\bar g_{ij} = Q_{ij}$. Hence, this will allow us to work with the heat kernel evaluated on $\overline \Delta_{D+1}$ and to then identify the beta functions at the ``unfoliated level".

\bigskip

Now let us consider a generic $r$ and $u'$ as in our Ansatz \eqref{5.1}. From the tensorial structure in front of the kinetic term in \eqref{5.1}
one might wonder, how the overall tensorial structure changes. This is indeed expected since we performed a canonical transformation as in 
\eqref{2.14} only on the spatial components, namely only on $q_{ab}$. We will show how the new tensorial structure of the kinetic term 
becomes $r$-dependent and will take the form of
\begin{equation}\label{5.7}
K_1^{abcd}(r) = \frac{1}{2}[Q^{a(c} Q^{d)b}- u' Q^{ab}Q^{cd}]
\end{equation}
as was anticipated in \eqref{2.29}.
 First of all, this can also be derived by evaluating $R_{\epsilon=1}$ for 
 $\bar g_{t\mu} = \delta^t_\mu$ and $\bar g_{ab} = [\det(\bar{Q})]^{-\frac{r}{1+Dr}}\bar{Q}_{ab}$. 
In particular, the terms which contain time derivatives  in $R_{\epsilon=1}$ are (dropping the bar)
\begin{eqnarray}
&&\frac{\sqrt{q}}{4}\left[\left(-3q^{ac} q^{bd}+q^{ab}q^{cd}\right)\dot q_{ab} \dot q_{cd}+ 4 q^{ab} \ddot{q}_{ab}\right] = \frac{\sqrt{q}}{4}\left(q^{ac}q^{bd}-q^{ab}q^{cd}\right)\dot q_{ab} \dot q_{cd}\nonumber\\
&=&\frac{\det(Q)^\frac{1}{2(1+rD)}}{4}\left[Q^{ac}Q^{bd}-Q^{ab}Q^{cd}\right]\left(\dot Q_{ab}-\frac{r}{1+rD}Q_{ab}Q^{ef}\dot Q_{ef}\right)\left(\dot Q_{cd}-\frac{r}{1+rD}Q_{cd}Q^{ef}\dot Q_{ef}\right)\nonumber\\
& =& \frac{\det(Q)^\frac{1}{2(1+rD)}}{4}\left[Q^{ac}Q^{bd}-u'Q^{ab}Q^{cd}\right]\dot{Q}_{ab} \dot{Q}_{cd}
\end{eqnarray}
Hence, we recovered the form of the Ansatz as in \eqref{5.1}, with $K^{abcd}(r)$ as the tensorial structure.

The canonical transformation, however, breaks the wished covariant tensorial symmetry with the spatial structure. When computing the Hessian, the spatial derivatives coming from the Ricci tensor contain a Laplacian plus non-minimal derivative terms as  in the first line of \eqref{5.4}. By expanding the spatial Laplacian we get the same structure as for the kinetic term
\begin{equation}
K^{abcd}(r=0)h_{ab}\bar{D}_i\bar{D}^i  h_{cd} =: K_1^{abcd}(r)H_{ab}  
\bar{D}_i\bar{D}^i H_{cd}
\end{equation}
where $i = 1,\dots,D$ and $\bar{D}_a \bar{q}_{bc}=0$. 
Note that the fluctuation $H$ of $Q$ can be expressed in terms of the fluctuation of $q$ as
\begin{eqnarray}
h_{ab} &=& 
[\det(\bar{Q})]^\frac{-r}{1+rD}\left[\left(H_{ab}-\bar{Q}_{ab} \frac{r}{1+rD}\bar{Q}^{cd} H_{cd}\right)+\left(H_{ab} \frac{r}{1+rD}\bar{Q}^{cd} H_{cd}+\frac{r (1 + r + D r)}{(1 +  rD)^2}\bar{Q}_{ab}  \bar{Q}^{c[e} \bar{Q}^{f]d} H_{ce} H_{fd}\right)\right]\nonumber\\
&=:&h^{(1)}_{ab}+h^{(2)}_{ab}\label{5.10}
\end{eqnarray}
A similar strategy is used to compute the entire Hessian \cite{rp}, given that the variations wrt. $q_{ab}$ have already been computed, we can exploit those contributions and express the Hessian wrt. $Q_{ab}$ through composition:
\begin{equation}
	\bar\Gamma^{(2)}[\hat Q, \bar Q] = \bar\Gamma^{(2)}[\hat q, \bar q]^{ijkl}\frac{\delta h_{ij}}{\delta H_{ab}} \frac{\delta h_{kl}}{\delta H_{cd}}+2 \bar\Gamma^{(1)}[\hat q, \bar q]^{ij}\frac{\delta^2 h_{ij}}{\delta H_{ab}\delta H_{cd}} 
\end{equation}
In effect, we are just interested in the terms quadratic in $H$, allowing us to consider just $h_{ab}^{(1)}$ for the first term and
$h_{ab}^{(2)}$ for the second term, defined in \eqref{5.10}.  This gives
\ba\label{5.12}
\bar{\Gamma}^{(2),abcd}[H,\bar{Q}]^{abcd}_{H=0}&=&
\frac{[\det(\bar Q)]^{\frac{1}{2+2rD}}}{G_{N,k}} \left(-K_1^{abcd} \partial_t^2
+(-\overline{\Delta}'_{D}+2\Lambda_k) K_2^{abcd} + U_k^{abcd}\right)\\
K_1^{abcd}&=&\frac{1}{4}\bar Q^{ac}  \bar Q^{bd}+\frac{1}{4}\bar Q^{ad}  \bar Q^{bc}-\frac{u'}{2}\bar Q^{ab} \bar Q^{cd}
\\
K_2^{abcd}&=&\frac{1}{4}\bar Q^{ac}  \bar Q^{bd}+\frac{1}{4}\bar Q^{ad}  \bar Q^{bc}-\frac{u''}{2}\bar Q^{ab} \bar Q^{cd}
\\
U_k^{abcd}&=&-\frac{1}{2}\left(\frac{1 + 2 r}{1+Dr}( \bar \nabla^{(a}\bar \nabla^{b)} \bar Q^{c d} + \bar \nabla^{(c}\bar \nabla^{d)} \bar Q^{a b} - \bar \nabla^{(a}\bar \nabla_i \bar Q^{i)c}\bar Q^{bd}-\bar \nabla^{(a}\bar \nabla_i \bar Q^{i)d}\bar Q^{bc}\right)\nonumber\\
&&-\frac{1}{2}(\bar R_1^{acbd}+\bar R_1^{adbc}-\frac{1}{4}\left(\bar Q^{ac} \bar R_1^{bd} + \bar Q^{ad}\bar R_1^{bc} +\bar Q^{bc} \bar R_1^{ad}+\bar Q^{db}  \bar R_1^{ac}\right)\label{5.15}\\
&&+\frac{1}{2}\frac{1+2r}{1+rD}\left(\bar Q^{ab} \bar R_1^{cd}+ \bar Q^{cd} \bar R_1^{ab}\right)+\frac{1}{4}\bar R_1\left(\frac{1+2r}{1+Dr}(\bar Q^{ac}  \bar Q^{bd}+\bar Q^{ad}  \bar Q^{bc})-(1+2r)^2\bar Q^{a b} \bar Q^{c d}\right)
\nonumber\\
&& + \frac{1}{2} \Lambda_k \left(\frac{-2 (2 + 3 D r + D^2 r^2)}{(1+Dr)^2}(\bar Q^{ac}  \bar Q^{bd}+\bar Q^{ad}  \bar Q^{bc})+\frac{3 + 6 r + 2 (1 + D) r^2}{(1+Dr)^2}\bar Q^{a b}\bar Q ^{c d}\right)\nonumber
\ea
where the indexes are only spatial, i.e., $a,b, c,d,i = 0,\dots D$ and $\overline{\Delta}'_D$ 
is the Laplacian of $\bar{q}_{ab}$ which is not quite the same as 
$\overline{\Delta}_D-\partial_t^2$ but 
the difference is due to $[\overline{\Delta}_{D+1}, E\cdot E^\ast]\not=0$ which we will ignore 
in what follows in accordance with what we said at the end of section \ref{s4}. 
Note the difference between the tensors in \eqref{5.12} in front of temporal-temporal and 
spatial-spatial derivatives.
This is due to the non-minimal contributions of the Hessian evaluated on the expansion 
\eqref{5.10} which lead to additional terms in the Laplacian of the trace of the fluctuation 
$H_{ab}$, resulting effectively in
\begin{equation}
	K_2^{abcd}(r) = \frac{1}{2}[Q^{a(c} Q^{d)b}- u'' Q^{ab}Q^{cd}], \qquad u'' = \frac{1 + r (4 + (2 + D) r)}{(1+D r)^2}
\end{equation}
Crucially, this is the manifestation of the fact, that the canonical transformation has broken the ``foliated covariance", 
which was instead realized above, for the special case $r =0$. In general, the kinetic term and 
the Ricci term flow differently.
This is physically correct as we take here the effect of non-trivial Jacobean that arises for $r=0$ 
and which usually discarded by hand seriously into account.

In order to compute the flow one either has to go beyond the Einstein-Hilbert truncation and expand the theory space 
or we must take an additional step that is not necessary for $r=0$ which we interpret as part of taking the Einstein-Hilbert
truncation. For the purpose of the present paper, we will choose the latter option, reserving the former more fundamental 
option for future research. The additional truncation step consists in taking a ``covariant average" between the tensorial structures 
of $K_1$ and $K_2$ by constructing
\begin{equation}\label{5.13}
K_{\pm}^{abcd} = \frac{1}{2}[Q^{a(c} Q^{d)b}- u_{\pm} Q^{ab}Q^{cd}], \qquad u_{\pm} = \frac{u'\pm u''}{2}
\end{equation}
We stick to the minimal set of operators dictated by the Einstein--Hilbert truncation, setting $K_-^{abcd} = 0$ and using $K_+^{abcd} $, with 
\begin{equation}
u_+ = \frac{1 + r (3 + r + D r)}{(1 + D r)^2}
\end{equation}
as the prefactor of the trace part in the $D+1$ tensorial structure. 

As a final consistency check, we note that setting $r = 0$, the previous results are found. Having now all the ingredients at hand, we can now evaluate the flow equation.

\bigskip
In particular, we will go up to first order in the expansion of the Wetterich identity \eqref{4.32}
\be \label{exp}
k\partial_k \Gamma_k=\frac{1}{2} 
\text{Tr}
\{
[k\partial_k R_k]\;P_k^{-1}
(1-[(R_k+U_k)\; P_k^{-1}])
\}
\ee
where we split the Hessian in a propagator $P_k$ and a potential term $U_k$:
\be
\bar{\Gamma}_k^{(2)}(H,\bar{Q})=:P_k+U_k\;
\ee
The propagator reads
\begin{equation}
P_k^{abcd}=G_{N,k}^{-1}K_+^{abcd}(-\overline{\Delta}_{D+1}+2\Lambda_k)\;,
\end{equation}
while the potential like term can be read off from \eqref{5.4}.

Making the Ansatz that the regulator $R_k$ has the tensorial structure 
\be\label{5.22}
R_k ^{abcd}{}= [\det(\bar Q)]^{\frac{1}{2(1+rD)}} G_{N,k}^{-1} K_+^{a bc d} 
E^\ast \cdot R_k(\overline{\Delta}_{D+1}) \cdot E,\; 
R_k(z)=k^2\; r(z/k^2)
\ee
simplifies considerably the computations. Hence, via 
$K_+$ the regulator will be adapted to the specific canonical transformation. 
Furthermore, as a regulator kernel, we choose the form introduced in \eqref{4.25}.
Finally, consistent with the approximation made at the end of section \ref{s4}, 
as a first step, we 
will ignore corrections coming from $[\overline{\Delta}_{D+1},E\cdot E^\ast]\not=0$. 

Effectively, one is left with computing the suitable trace of products of $K_+^{-1} U_k$ up to the order established in the expansion \eqref{exp}, where the matrix $K_+^{-1}$  is the inverse of $K_+$ defined in \eqref{5.13}:
\be
(K^{-1})^{abcd}_+ = \bar Q^{ac}  \bar Q^{bd}+\bar Q^{ad}  \bar Q^{bc}- \frac{(1 + 3 r + (1 + D) r^2) }{  D (1 + r + r^2)-1} \bar Q^{ab} \bar Q ^{cd}\;.
\ee

 \subsection{Heat kernel traces}
In this section, we will explicitly express the traces in \eqref{exp} as heat kernel traces. As is standard in RG analysis, we switch to dimensionless variables:
  \be
 y=z/k^2, \qquad
 \Lambda_k = \lambda_kk^2, \qquad 
 G_{N,k} = \frac{g_k}{k^{D-1}}, \qquad
 U_k = u_k k^2, \qquad
 \eta_N = \frac{k \partial_k g_k}{g_k}\;,
 \ee
 where $\eta_N$ represents  the anomalous dimension of the dimensionless Newton's coupling and $z$ is the eigenvalue of $-\overline{\Delta}_{D+1}$.
 
As a next step, we need to evaluate the heat kernel traces. 
In order to do so, we will work at the level of the proper time  
(or heat kernel time) integration, considering the convolution of up to four proper time variables to obtain the expansion order required.\footnote{We refer the reader to \cite{ab} for a   proper time treatment of Einstein--Hilbert gravity by means of a one-loop improved flow equation.} Further information about the methods used to compute those proper time integrals can be found in appendix \ref{sb}, where we detail the procedure making use of the Barnes identity.

The cutoff function \eqref{5.22} has a suitable kernel in proper time given by
\begin{equation}
r(z/k^2) = \int_0^\infty \; ds \;e^{-s^2-s^{-2}}\;e^{-s\frac{z}{k^2}}
\end{equation}
where $z$ is the eigenvalue of $-\overline{\Delta}_{D+1}$. Notice that due to the form of the cutoff, when we take the $k$-derivative in the numerator of the flow equation, we also have to evaluate terms with the derivative of the cutoff kernel, which adds a contribution equal to:
\begin{equation}
\int ds	\;k \frac{d}{dk} \left(e^{-\frac{s z}{k^2}}\right)e^{-s^2-s^{-2}} = -2\int ds	\;s\frac{d}{ds} \left(e^{-\frac{s z}{k^2}}\right)e^{-s^2-s^{-2}} =  2 \int ds	\;e^{-\frac{s z}{k^2}}\left(e^{-s^2-s^{-2}}+s \frac{d}{ds}e^{-s^2-s^{-2}}\right)
\end{equation}
Note that due to the choice of cut-off function which is of rapid decrease at both zero and infinity, no boundary terms arise.

Based on what we need, let us now briefly review the heat kernel technology based on the Schwinger heat kernel time representation, which allows us to find a representation for the propagator
\begin{equation}
\frac{	1}{y+2\lambda_k} = \int_0^\infty e^{-(y+2\lambda_k)s}\; ds\;,
\end{equation}
where $y = -\overline{\Delta}_{D+1}/k^2$.
This representation allows us to exploit the heat kernel trace:
\begin{equation}
	\text{Tr} \; \left[\frac{	1}{y+2\lambda_k}\right] = \int_0^\infty \text {Tr}\left[e^{-ys}\right]\; e^{-2\lambda_ks}\; ds\;,
\end{equation}
since the term $\text{Tr}[e^{-ys}] = \text{Tr}[e^{\bar \Delta_{D+1}s}] $ can be recognized to be the heat kernel trace. In addition to this, we also have to consider the cutoff function, which also has a heat kernel time representation and can be included in the $s$-integration.
As an example, consider the first term on the l.h.s. of \eqref{exp}. The heat kernel representation of this term  amounts to evaluate
\begin{eqnarray}\label{5.31}
&\text{Tr}&\; \left[\frac{k \partial_k R_k }{P_k}\right]=\text{Tr} \;\left[\frac{((D-1)-\eta_N+2)r(y)+k\frac{d}{dk}r(y)}{y+2\lambda_k}\right] \nonumber\\&=& \int _0^\infty ds_1 \int _0^\infty ds_2\; \text{Tr}\; \left[e^{-y(s_1+s_2)}\right]e^{-2\lambda_ks_1}\left(((D-1)-\eta_N+2)e^{-s_2^2-s_2^{-2}}+2s_2 \frac{d}{ds_2}e^{-s_2^2-s_2^{-2}}\right)\;,
\end{eqnarray}
where  $\eta_N$ appears because of the $k$-th derivative of $G_{N,k}^{-1}$ in the regulator \eqref{5.22}.  We now  exploit the heat kernel expansion for the trace at the truncation desired, i.e.,
\begin{equation}\label{5.30}
 \text{tr}\;\left[e^{\bar \Delta_{D+1}s}\right] = \frac{1}{(4\pi s)^{\frac{D+1}{2}}}\left(1+\frac{\bar R_1 s}{6}+ \cdots\right)\;.
\end{equation}
We specify here, that the above trace tr$[\;\cdot\;]$ acts only on the internal space as 
$\int d^{D+1}x \sqrt{g} \langle x |\; \cdot \;|x\rangle$, and hence does not include the index contraction over the field space which is 
performed in Tr$[\;\cdot\;]$.
This means that in the first term in \eqref{5.31} two convoluted heat kernel time integrations have to be performed
\begin{equation}
\frac{1}{(4\pi)^{\frac{D+1}{2}}} \int _0^\infty ds_1 \int _0^\infty ds_2\; \frac{e^{-2\lambda_ks_1}e^{-s_2^2-s_2^{-2}}}{(s_1+s_2)^{\frac{D+1}{2}}}\;\left(1+\frac{\bar R_1 (s_1+s_2)}{6}+ \cdots\right)\;.
\end{equation}
The techniques to solve these convoluted integrals and find an analytic expression as an expansion in $\lambda_k$ can be found in appendix \ref{sb}.

As it happens, the second order term \eqref{exp} will contain up to three convoluted heat kernel time integrals, respectively. Exploiting that
\begin{equation}
	\frac{	1}{(y+2\lambda_k)^2} = \int_0^\infty s\;e^{-(y+2\lambda_k)s}\; ds\;,
\end{equation}
the term with the potential amounts to:
\begin{eqnarray}
\text{Tr}\; \left[\frac{k \partial_k R_k U_k}{P_k^2}\right]&=&\text{Tr} \;\left[\frac{\left(((D-1)-\eta_N+2)r(y)+2 s\frac{d}{ds}r(y)\right)u_k K^{-1}_+}{(y+2\lambda_k)^2}\right]\nonumber\\
& =& \int _0^\infty ds_1 \int _0^\infty ds_2\; \text{Tr}\; \left[e^{-y(s_1+s_2)}K_+^{-1}u_k\right]s_1e^{-2\lambda_ks_1}
\nonumber
\\ &&\qquad\qquad\times\left(((D-1)-\eta_N+2)e^{-s_2^2-s_2^{-2}}+2s_2 \frac{d}{ds_2}e^{-s_2^2-s_2^{-2}}\right)\;
\end{eqnarray}
In order to evaluate this trace, we will exploit both the expansion in \eqref{5.30} for the minimal terms in the potential (second and third line in \eqref{5.4} and second to fourth line in \eqref{5.15}), and the expansion for the non-minimal terms (first line in the potential), given by:
\begin{equation}
\text{tr}\; \left[\bar \nabla_{(\mu}	\bar \nabla_{\nu)}e^{\bar \Delta_{D+1}s}\right] = \frac{1}{( 4\pi s)^\frac{ D+1}{2}}\left(-\frac{1}{2s}\bar g_{\mu \nu}
-\frac{1}{2}\bar g_{\mu \nu} \frac{\bar R_1}{6}+\bar \nabla_{(\mu}\bar \nabla_{\nu)}\frac{\bar R_{1,\mu \nu}}{6} \right)\,.
\end{equation}
In particular then, performing  the complete trace $\text{Tr}[\;\cdot\;]$ also 
over the indexes as in the flow equation, together with the potential $U_k$ and $K^{-1}_+$ we obtain:
\begin{eqnarray}
\text{Tr}\; \left[e^{-ys}K_+^{-1}U_k\right]^\text{min}&=&\frac{1}{(4\pi s)^{\frac{D+1}{2}}}\Bigg[-\left(D(1+D)+ \frac{(2+D)(D-1)}{1+Dr}+ \frac{D-2-2Dr}{D(1+r+r^2)-1}\right)\left(\Lambda_k + \frac{\bar R_1}{6}s\right)\nonumber\\
\nonumber
&&\qquad\qquad\quad- \frac{(D-2) (3 + D + 4 r (2 + r) + 2 D r^2 (2 + r) + 
	D^2 ( r^3-1))}{2 (1 + D r) (-1 + D (1 + r + r^2))} \Lambda_k\bar R_1s \Bigg] 
\\&=:&\frac{1}{(4\pi s)^{\frac{D+1}{2}}}\left[a_1 \Lambda_k + a_2 \bar R_1 s+a_3 \Lambda_k \bar R_1 s\right]
\label{5.35}
\end{eqnarray}
\begin{eqnarray} 
\text{Tr}\; \left[e^{-ys}K_+^{-1}U_k\right]^\text{non-min}&=&\frac{1}{(4\pi s)^{\frac{D+1}{2}}}\frac{1 + 2 r (1 + r) - 3 D r (1 + r) - D^2 (1 + r + r^2)}{3 (  D (1 + r + r^2)-1)} \left(\frac{D}{s}+\frac{D-2}{6}\bar R_1 \right)\nonumber \\
&=:& \frac{1}{(4\pi s)^{\frac{D+1}{2}}}\left[ \frac{a_4}{s}+a_5 \bar R_1 \right]\label{5.36}
\end{eqnarray}
where we have encoded in  $a_1, \dots, a_5$ the $r-$ and $D-$ dependent coefficients.

Finally, the contribution with the regulator in the second order term is given by
\begin{eqnarray}
	\text{Tr}\; \left[\frac{(k \partial_k R_k) \;R_k}{P_k^2}\right]&=&\text{Tr} \;\left[\frac{\left(((D-1)-\eta_N+2)r(y)+2 s\frac{d}{ds}r(y)\right)r(y)}{(y+2\lambda_k)^2}\right]\nonumber\\
	& =& \int _0^\infty ds_1 \int _0^\infty ds_2\int_0^\infty ds_3\; \text{Tr}\; \left[e^{-y(s_1+s_2)}K_+^{-1}u_k\right]s_1e^{-2\lambda_ks_1}e^{-s_3^2-s_3^{-2}}\\
	\nonumber
&&\qquad \qquad\times	\left(((D-1)-\eta_N+2)e^{-s_2^2-s_2^{-2}}+2s_2 \frac{d}{ds_2}e^{-s_2^2-s_2^{-2}}\right)\;
\end{eqnarray}
Even if the evaluation of this integral is a bit more involved, it can be expressed as an analytic expansion in $\lambda_k$. The reader can find the details in appendix \ref{sb}.

In order to facilitate the notation, we will introduce the following symbolic expressions for the proper time integrals. Let us denote by 
\begin{equation}
I_{2,p}(\lambda_k):= \int_{0}^\infty ds_1  \int_{0}^\infty ds_2\; \frac{e^{-2\lambda_ks_1} e^{-s_2-s^{-2}_2}}{(s_1 + s_2)^p}
\end{equation}
\begin{equation}
	J_{2,p}(\lambda_k):= \int_{0}^\infty ds_1  \int_{0}^\infty ds_2\; \frac{e^{-2\lambda_ks_1}s_2 \frac{d}{ds_2}e^{-s_2-s^{-2}_2}}{(s_1 + s_2)^p}
\end{equation}
\begin{equation}
	Y_{2,p}(\lambda_k):= \int_{0}^\infty ds_1  \int_{0}^\infty ds_2\; \frac{s_1e^{-2\lambda_ks_1} e^{-s_2-s^{-2}_2}}{(s_1 + s_2)^p}
\end{equation}
\begin{equation}
	K_{2,p}(\lambda_k):= \int_{0}^\infty ds_1  \int_{0}^\infty ds_2\; \frac{s_1e^{-2\lambda_ks_1}s_2 \frac{d}{ds_2}e^{-s_2-s^{-2}_2}}{(s_1 + s_2)^p}
\end{equation}
\begin{equation}
I_{3,p}(\lambda_k):= \int_{0}^\infty ds_1  \int_{0}^\infty ds_2 \int_{0}^\infty ds_3\; \frac{e^{-2\lambda_ks_1} e^{-s_2-s^{-2}_2}e^{-s_3-s^{-2}_3}}{(s_1 + s_2+s_3)^p}
\end{equation}
\begin{equation}
	J_{3,p}(\lambda_k):= \int_{0}^\infty ds_1  \int_{0}^\infty ds_2 \int_{0}^\infty ds_3\; \frac{e^{-2\lambda_ks_1} s_2 \frac{d}{ds_2}e^{-s_2-s^{-2}_2}e^{-s_3-s^{-2}_3}}{(s_1 + s_2+s_3)^p}
\end{equation}
 the integrals needed in our calculations. The index $p$ will be determined by the heat kernel expansion, while the first index denotes the number of proper time integrations to be performed.

This completes the analysis of the treatment of the traces via the heat kernel methods.

\subsection{Beta functions}
\label{s5.2}
After having evaluated  the traces, we can now return to \eqref{exp} and compare   the l.h.s with the r.h.s. of the flow. In particular, in the Einstein--Hilbert truncation we are left with the flow of the two dimensionless gravitational coupling constants. These can be identified by matching terms on both sides of the equation. Those proportional to the identity operator, which yield $k \partial_k \left(\lambda_k/g_k\right)$, and those proportional to the Ricci scalar, which provide 
$k \partial_k \lambda_k$. One can then carefully disentangle them to find $k \partial_k g_k$. 
For the sake of readability, we will report the beta functions in terms of the $a_1, \dots a_5$, the $r$- and $D$-dependent coefficients, which can be read off from \eqref{5.35} and \eqref{5.36}. The resulting flow equations for the two dimensionless coupling constants are:
\ba\label{5.44}
k \partial_k \lambda_k&=& -(D+1) \lambda_k + \eta_N \lambda_k + \frac{g}{(4 \pi)^{\frac{D+1}{2}-1} }\Bigg\{(D-1)-\eta_N +2)\bigg( \frac{D (D+1)}{2}I_{2,2}(\lambda_k)\\
\nonumber
&&-(a_1Y_{2,2}(\lambda_k)\lambda_k+a_4Y_{2,3}(\lambda_k))-\frac{D (D+1)}{2}I_{3,2}(\lambda_k) \Bigg)\\
\nonumber
&&+2\bigg( \frac{D (D+1)}{2}J_{2,2}(\lambda_k)-(a_1K_{2,2}(\lambda_k)\lambda_k+a_4K_{2,3}(\lambda_k))-\frac{D (D+1)}{2}J_{3,2}(\lambda_k)\Bigg) \Bigg\}
\\
k \partial_k g_k&=& (D-1) g + \eta_N- \frac{2g^2}{(4 \pi)^{\frac{D+1}{2}-1} }\Bigg\{(D-1)-\eta_N +2)\bigg( \frac{D (D+1)}{12}I_{2,1}(\lambda_k)\label{5.45}\\
\nonumber
&&-(a_2Y_{2,1}(\lambda_k)+a_3Y_{2,1}(\lambda_k)\lambda+a_5 Y_{2,2}(\lambda_k))-\frac{D (D+1)}{12}I_{3,1}(\lambda_k) \Bigg)\\
\nonumber
&&+2\bigg( \frac{D (D+1)}{12}J_{2,1}(\lambda_k)-(a_2K_{2,1}(\lambda_k)\lambda_k+a_3K_{2,1}(\lambda_k)+a_5 K_{2,2}(\lambda_k))-\frac{D (D+1)}{12}J_{3,1}(\lambda_k)\Bigg) \Bigg\}\;.
\ea
We will now evaluate the integrals via the series expansion in appendix \ref{sb} in order to analyze the beta function.

\subsection{Fixed points and critical exponents of dimensionfree couplings}
\label{s5.3}
Once the beta functions have been determined, we can look for the fixed points of the theory, i.e., whether the beta functions vanish in the 
limits $k \to 0$ and $k \to \infty$, and determine the values that the coupling constants take at those fixed points.

First of all, we specialize to the case where $r= 0$ and $D = 3$, where no canonical transformation has been performed
and the effect of the Jacobean on the flow is incorrectly abandoned by hand. That is, one incorrectly  does not 
take into account the non trivial Jacobian, which arises for the choice $r=0$ of the Weyl algebra.
The system exhibits an IR Gaussian fixed point at $\lambda=g=0$ with critical exponents equal to the canonical mass dimensions:
\begin{equation}
\theta_{1}^{UV} = 2\;, \qquad \theta_{2}^{UV} = -2\;.
\end{equation}
This agrees with the standard ASQG Einstein--Hilbert truncation \cite{11b}.
Furthermore, evaluating the beta functions  \eqref{5.44} and \eqref{5.45} we find the UV-fixed point at
\begin{equation}
 \lambda_* = 1.85 \;, \qquad \quad g_* = 58.73\;,
\end{equation}
the analogue of the Reuter fixed point \cite{11b}.
In Figure \ref{fig:1} we plot the $\lambda-g$-phase diagram, where the trajectories and the fixed points are depicted.
\begin{figure}[H]
	\centering
	\includegraphics[width=.57\textwidth]{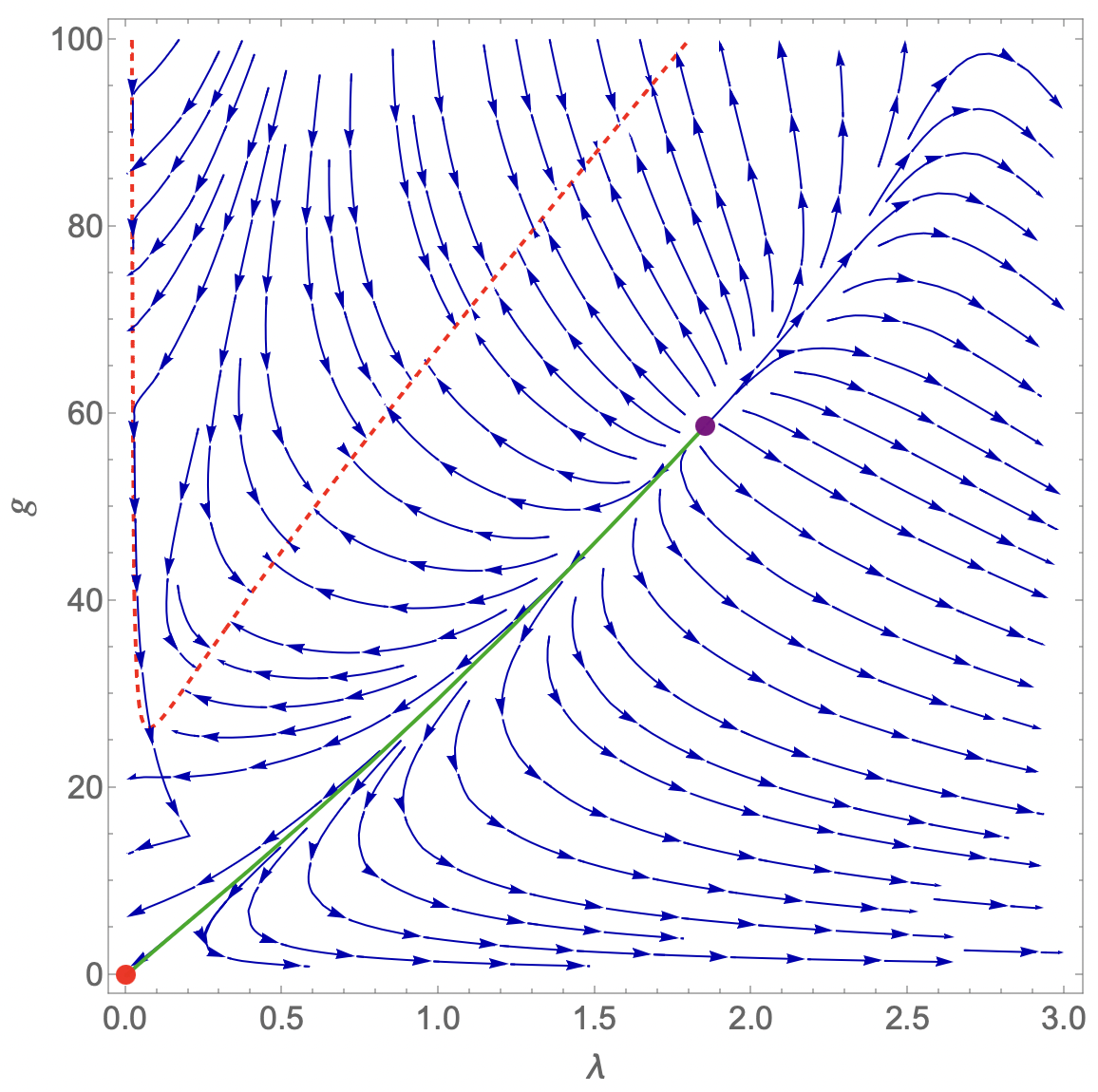}
	\caption{Flow diagram in the $\lambda-g$ plane for $r = 0$ and $D=3$. The arrows point towards decreasing values of $k$. All trajectories stem from the fixed point (purple point). The red dashed line is the ``curtain" or the singular locus: beyond that, the flow cannot be trusted anymore. Technically it represents the  parametric curve in the denominator of the beta functions. The flow cannot be followed anymore below the values at which the denominator vanishes and the beta functions become singular. The green line represents the separatrix (trajectory of the Type IIa), connecting the UV fixed point with the IR Gaussian fixed point, depicted by a red point.}
	\label{fig:1}
\end{figure}

Furthermore we compute the critical exponent, which determines how the coupling constants scale around the fixed point.  We find
\begin{equation}
\theta^{UV}_{1} = 8.01\;, \qquad \quad \theta^{UV}_2 = 2.03\;.
\end{equation}
These are both real and positive, signaling the fact that the coupling constants are related to two relevant directions.

It would be interesting to compare our results with those recently obtained within the foliated fluctuation approach in ASQG \cite{3b}. 
We notice that our critical exponents are real, as a subset of critical exponents found in \cite{3b}. The value of the coupling 
constants at the UV fixed point, however, differs significantly. We highlight that qualitative and technical differences between 
our approach and \cite{3b} exist, which may necessitate careful interpretation of a  comparison.

\bigskip
Let us now draw our attention to the special case $r = \frac{D-4}{4D}$ found in \eqref{2.23} and $D=3$. The IR fixed point persists 
also with this modified Jacobian. Regarding the UV fixed point in this case it takes the value
\begin{equation}
	 \lambda_* = 1.92 \;, \qquad \quad g_* = 57.41\;.
\end{equation}
The corresponding critical exponents are
\begin{equation}
	\theta^{UV}_{1} = 8.01\;, \qquad \quad \theta^{UV}_2 = 2.13\;.
\end{equation}
Comparing with the case $r = 0$ and $D=3$, one can notice that there is a minimal difference, which  does not affect the qualitative 
behavior of the flow diagram and the critical properties of the system. The fixed point is qualitatively similar as the one found 
previously and only one critical exponent is slightly modified. Thus, we can conclude that the RG properties of this system are 
minimally affected by using the correct choice of $r$ which avoids the Jacobian, 
which we accounted for by performing a canonical transformation, at least when we stay in the Einstein-Hilbert truncation.
Note, however, that this might change when we expand the theory space as outlined in the previous section.

\begin{figure}[h]
	\centering
	\includegraphics[width=.57\textwidth]{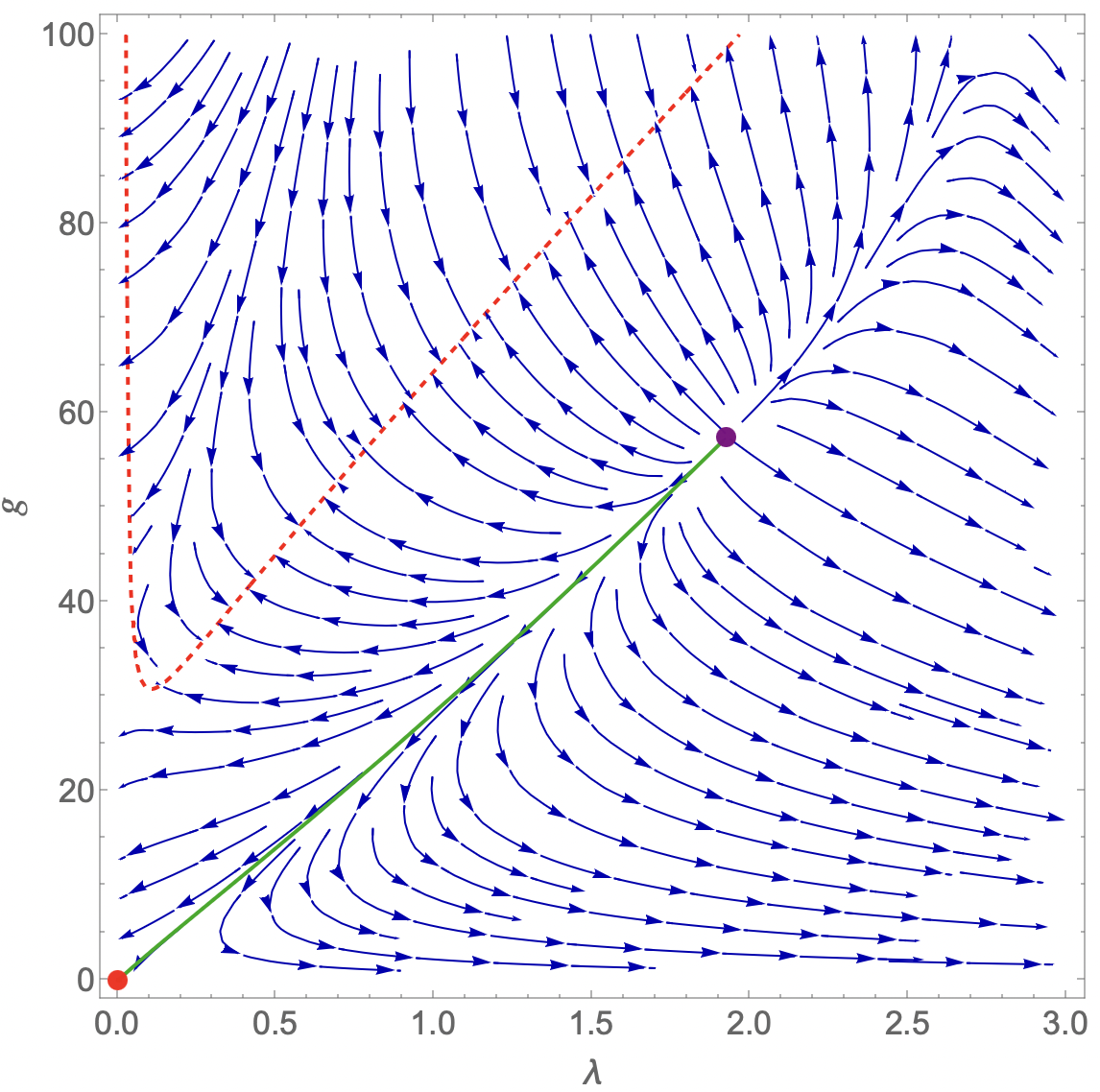}
	\caption{The flow diagram in the $\lambda-g$ plane illustrates trajectories with 
arrows indicating decreasing values of $k$ for $r=\frac{D-4}{4D}$, $D=3$. 
All trajectories originate from the fixed point (marked as a purple point). 
The red dashed line  marks the boundary beyond which the flow is no longer reliable.  
The green line denotes the separatrix, which connects the UV fixed point to the IR 
Gaussian fixed point (red point).
	}
	\label{fig:4}
\end{figure}

\subsection{$k=0$ limits of dimensionful couplings}
\label{s5.4}
Having found the flow, we can now integrate down the beta functions to $k \to 0$. This can be achieved for the separatrix and for the trajectories which flow towards increasing positive values of $\lambda$. In Figure \ref{fig:2} and \ref{fig:3} we report the dependence on $k$ of the dimensionless and the dimensionful cosmological constant and Newton's constant, respectively. As a trajectory, we picked the separatrix, namely that trajectory which flows from the UV fixed point into the IR Gaussian fixed point.
\begin{figure}[H]
	\centering
	\includegraphics[width=.48\textwidth]{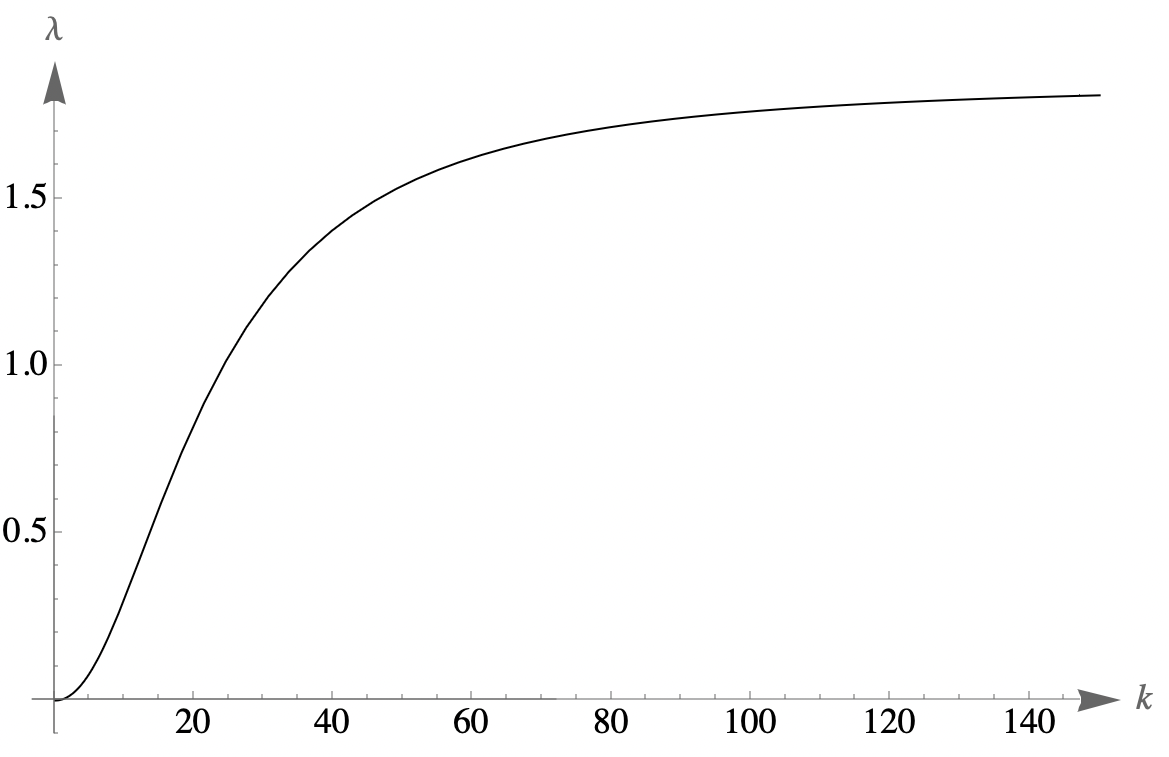}\quad
	\includegraphics[width=.48\textwidth]{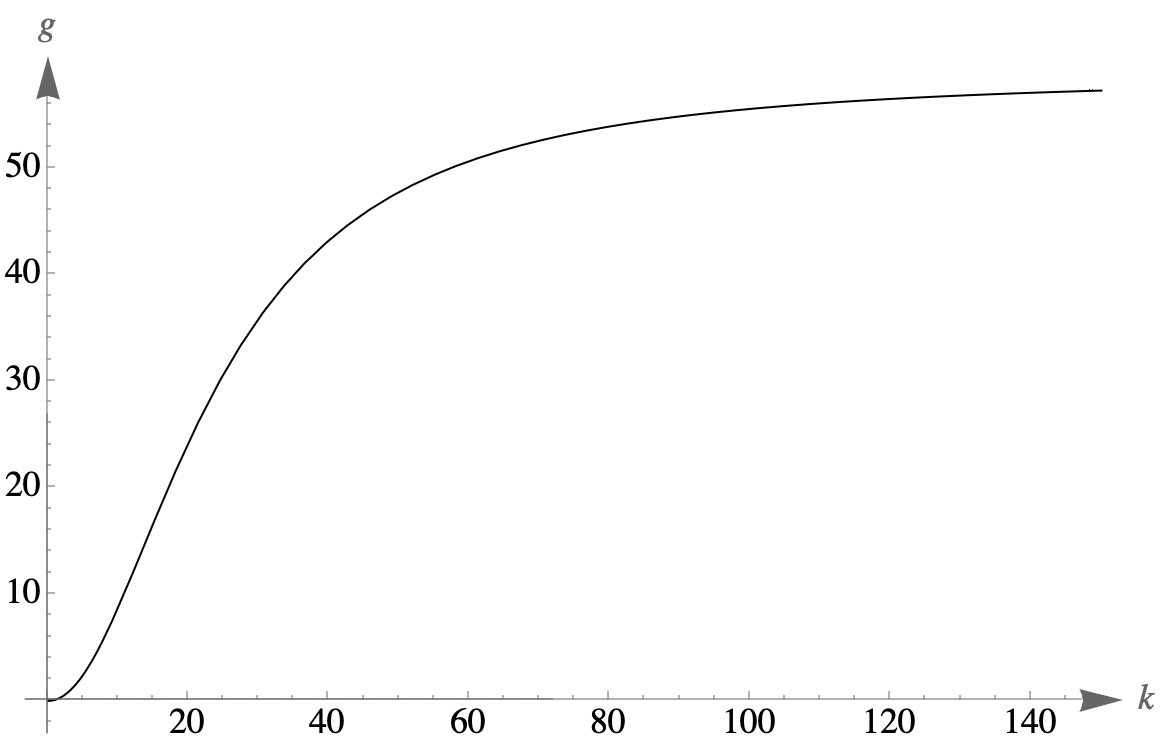}
	\caption{Plot of the $k$-dependence of the dimensionless coupling constants when $r = 0$ and $D = 3$. The trajectory chosen is the separatrix and the couplings go from 0 when $k=0$ to their UV fixed point when $k \to \infty$.}
	\label{fig:2}
\end{figure}

\begin{figure}[H]
	\centering
	\includegraphics[width=.48\textwidth]{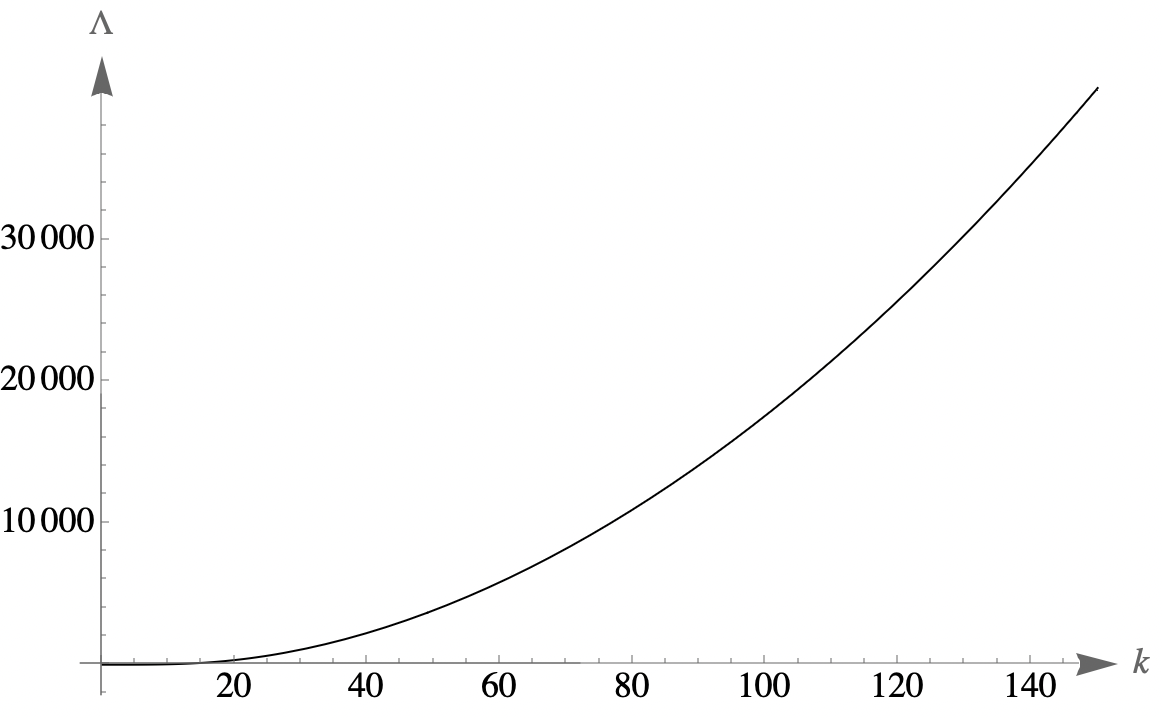}\quad
		\includegraphics[width=.48\textwidth]{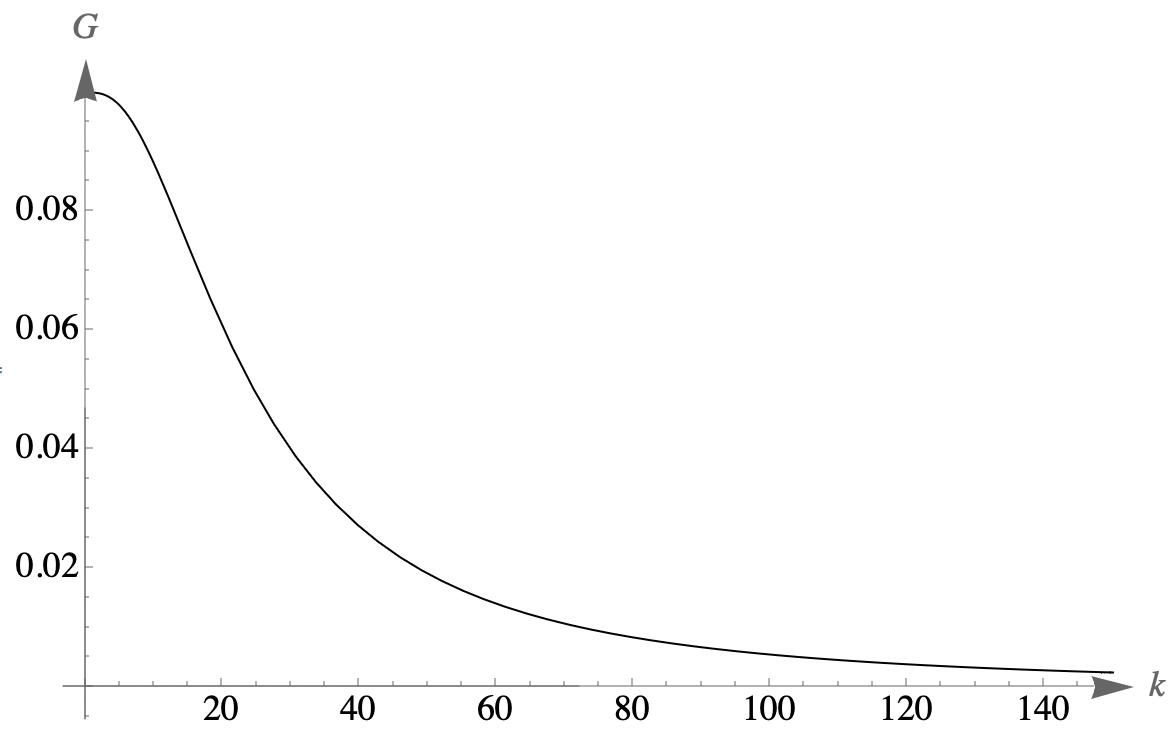}
	\caption{Plot of the $k$-dependence of the dimensionful coupling constants when $r = 0$ and $D = 3$. The trajectory chosen is the separatrix: $\Lambda_k$ vanishes in the IR and increases quadratically in the UV, while $G$ reaches a finite value in the IR and vanishes in the UV.}
	\label{fig:3}
\end{figure}

Another class of trajectories interesting to study would be the so called Type IIIa trajectories (we refer to \cite{11b} for the complete 
classification of trajectories in the Einstein-Hilbert truncation), namely those trajectories which flow towards a diverging 
$\lambda$ in the limit $k \to 0$.  In our case, these trajectories do not hit any singularity and are complete. The series in 
\eqref{b.5} has infinite radius of convergence thus we have all the means at our hand. 
Numerically, however, this turns down to be a challenging task 
because of the increasing number of orders one has to take into account: The series in (\ref{b.5}) behaves roughly as 
$\lambda^n/\sqrt{n!}$ and converges faster than geometrically for $\lambda<1$. For $\lambda\ge 1$, if we wish 
to truncate it at order $N$ and want to ensure that $\lambda^n/\sqrt{n!}\le (1/2)^n$ for $n\ge N$ 
(which estimates the error by unity) then by 
Stirling's formula we must pick $N\approx 4\; e\;\lambda^2$, i.e. $N$ grows quadratically with $\lambda$.
While it is clear that for large $k$ all trajectories approach
the fixed point, it is hard to find matching trajectories as $k\to 0$ and as $k\to\infty$.  
In Figure \ref{fig:5} and \ref{fig:6} we report the plots of the 
small $k$ regime for the dimensionful and the dimensionless coupling constants for a trajectory of the Type IIIa.

\begin{figure}[H]
	\centering
	\includegraphics[width=.48\textwidth]{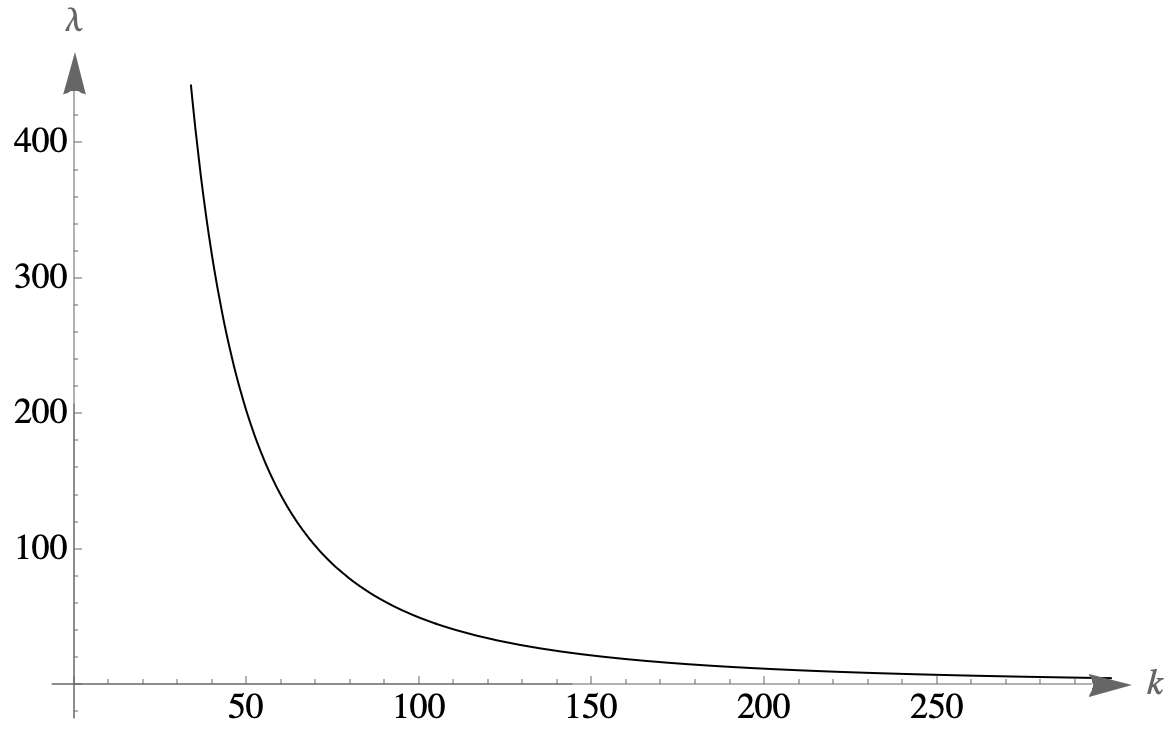}\quad
	\includegraphics[width=.48\textwidth]{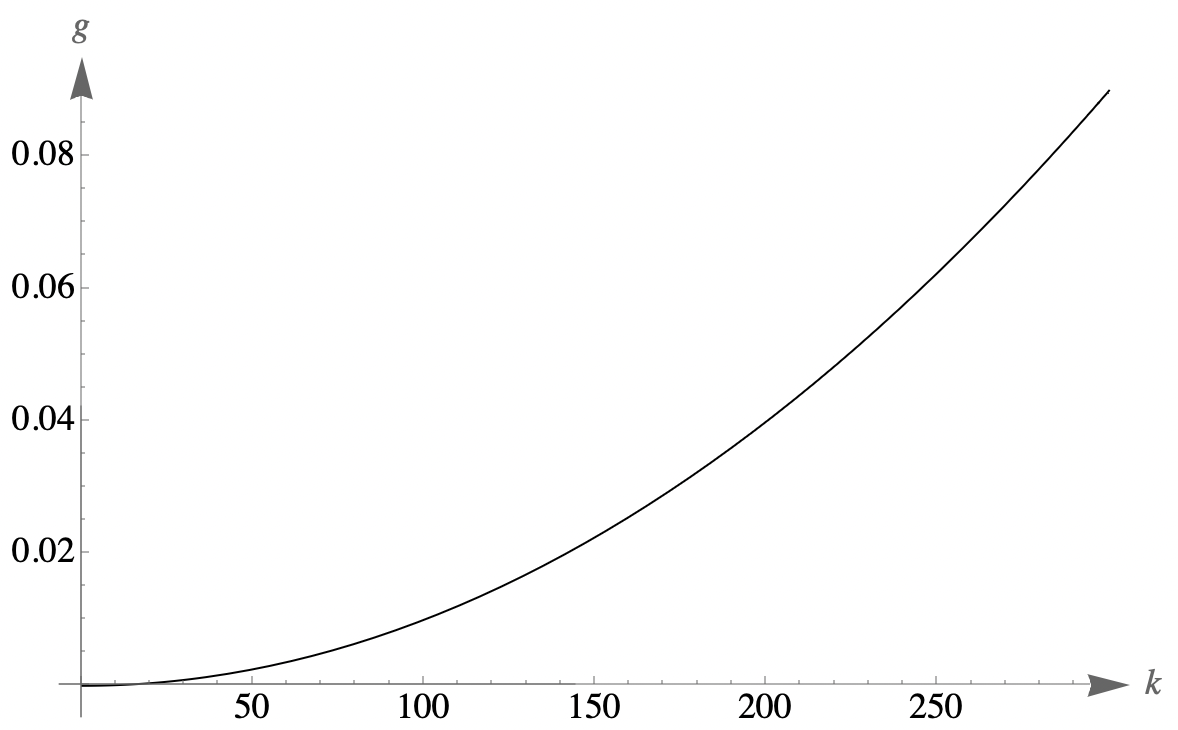}
	\caption{Small $k$ regime of the dimensionless cosmological constant and Newton's constant. The cosmological constant diverges approaching $k \to 0$, while $g$ vanishes. Note that the regime under investigation is far away from the UV fixed point. As initial conditions we picked $\lambda (k = 100) =50$ and $g(k = 100) =1/100.$}
	\label{fig:5}
\end{figure}
\begin{figure}[H]
	\centering
	\includegraphics[width=.46\textwidth]{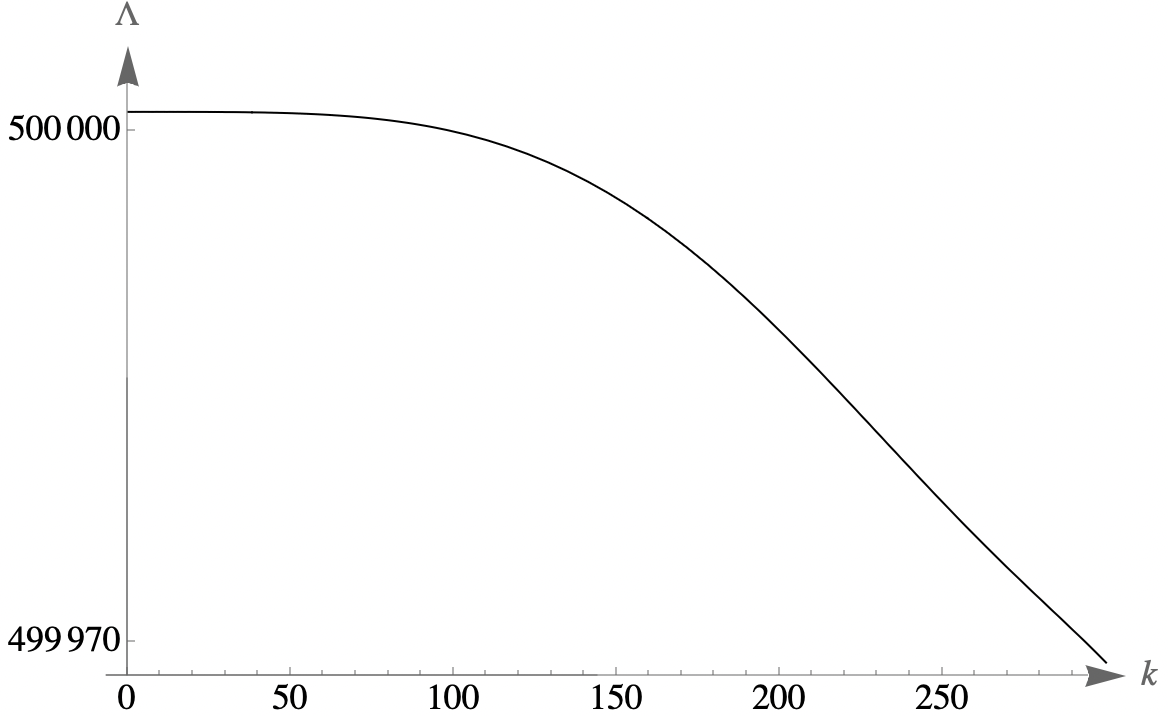}\quad
	\includegraphics[width=.49\textwidth]{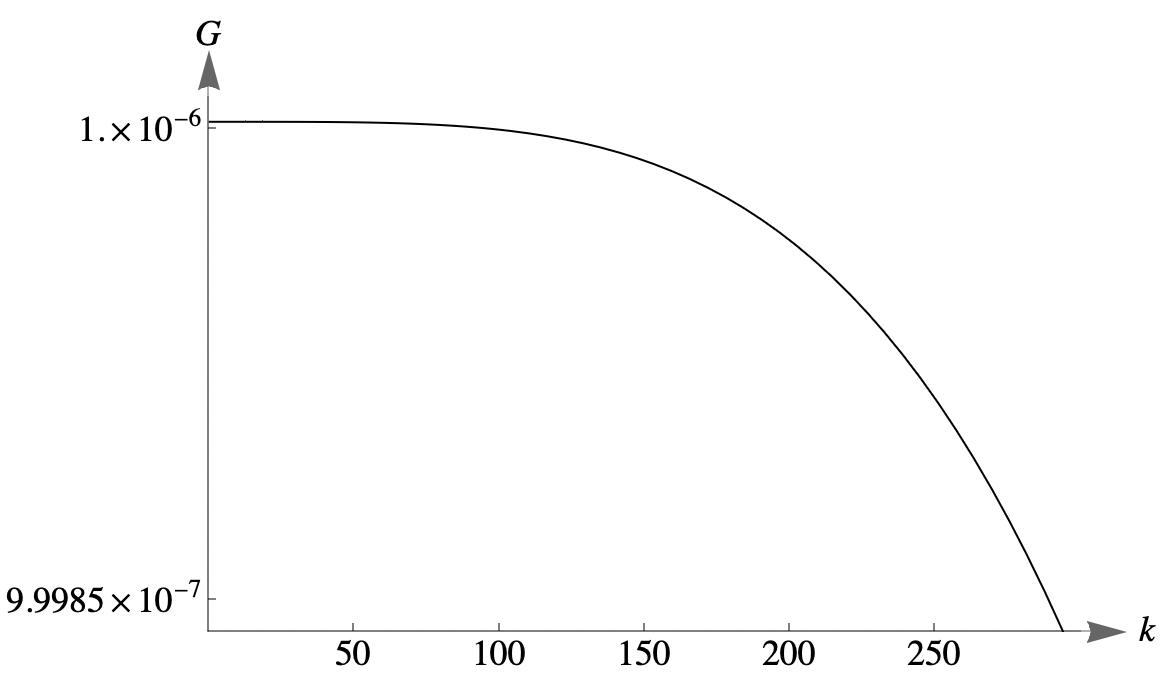}
	\caption{Small $k$ regime of the dimensionful cosmological constant and 
Newton's constant. Interestingly, both couplings reach a finite value when 
$k \to 0$. This value will depend on the initial conditions.}
	\label{fig:6}
\end{figure}

\section{Summary, conclusions and outlook}
\label{s6}

The present matter model coupled to GR comes as close as possible to the idealisation 
of a congruence of collision free observers moving on geodesics in a dynamical spacetime
while taking backreaction into account. Accordingly, its physical interpretation in terms of (Dirac) observables 
is crystal clear and there is a corresponding, distinguished 
induced physical Hamiltonian driving the dynamics of those observables. The ``problem of 
time'' is therefore solved and on the classical side one ends up a with conservative 
Hamiltonian system. Its canonical quantisation therefore does not meet any conceptual 
questions, rather there are technical issues that result from the non-polynomial 
structure of the Hamiltonian when treated non-perturbatively. In non-perturbative 
LQG one can face this 
non-polynomial structure squarely by picking a non-standard representation of the 
canonical commutation relations (CCR) and adjointness relations (AR) of the $^\ast-$algebra
of observables \cite{6}.

In non-perturbative ASQG for Hamiltonian systems one proceeds differently: In the presence of UV and IR cutoff 
one can derive a path integral formulation of the generating functional of Schwinger functions
based on the standard Schr\"odinger representation of the CCR and AR. When one formally removes both 
cutoffs one ends up with an object that is mathematically poorly defined. However, using 
average kernel techniques one can formally derive the Wetterich equation which by itself 
is well defined. One now turns the logic around and considers solutions of the Wetterich 
equation as definitions of the theory.

In the present paper we have taken the first step towards a full-fledged 
ASQG treatment of this model. New elements as compared to standard treatments 
of ASQG include:\\
i. The gauge redundancy is removed prior to quantisation. The path integral does not
contain gauge fixing and ghost terms.\\
ii. The path integral is strictly derived from the Hamiltonian formulation. That 
formulation results a priori in a phase space rather than configuration space path integral.
Luckily one can easily perform the integral over the momenta for this model which 
for generic matter coupling is very complicated, see e.g. \cite{4}.\\
iii. The resulting configuration path integral involves a non-trivial measure factor
which includes the determinant of the DeWitt metric. We can avoid that measure factor 
altogether by equipping the canonical phase space coordinates with a non-trivial density weight 
prior to quantisation. This avoids questions on how that measure factor is supposed to 
flow.\\
iv. The exponential term in the path integral is almost the Euclidean signature 
Einstein--Hilbert action but not quite: First, the Gibbons-Hawking boundary term is present and 
secondly the action is restricted to Euclidean signature metrics in synchronous gauge.\\
v. Accordingly this Euclidean action has a reduced symmetry group. When we construct 
the effective average action using a cutoff kernel only that reduced symmetry needs to 
be taken into account in order that the flow produces terms in agreement with it. 
We therefore need to construct new types of cutoffs which 
involves a natural projection operator as compared to the standard treatment. In this paper 
we have picked a choice that comes as close as possible to the standard treatment but 
involves corrections by non-minimal operators due to the unavoidable presence of the 
projection operator. The 
corrections can be dealt with using techniques already contained in the ASQG literature 
and we have confined our treatment to the leading term in this publication.\\
vi. Since we do not want to rely on unproved assumptions about the existence of 
pre-images of the Laplace transform of cutoff kernels we directly work with 
a concrete pre-image whose analytic properties are sufficient to deal with the small and 
large heat kernel time singularities of the heat kernel which requires new techniques 
in order to obtain sufficiently accurate approximants to the beta functions of the 
flow equations. 

In section \ref{s5} we performed the ASQG analysis of our model. Exploiting the possibility to perform a generic canonical 
transformation, we derive the Hessian and perform the traces via heat kernel methods. Importantly, our regulator is adapted 
to the canonical transformation in consideration. We observe that the canonical transformation breaks the foliated covariance realized 
when $r = 0$. However, we perform a covariant average in order to recast our terms in the $D+1$ Euclidean Einstein--Hilbert action, 
modulo  boundary terms, which we neglect here. We consider this averaging as part of performing the Einstein Hilbert truncation. 
When one wants to take this effect, caused by avoiding the Jacobian, properly into account and not perform 
the covariant averaging, one must expand the theory space, 
a topic that we want to examine in the future.  

Novel to our treatment is the evaluation of convoluted regularised heat kernel time integrals: we developed a technology based on the Barnes identity to expand the integrals analytically in the coupling constants to the desired order.

We specialized the study of the flow to the case $r = 0$, $D=3$ and to the case $r = \frac{D-4}{4D}$, $D=3$. 
In both cases we found the IR Gaussian fixed point and an UV attractive fixed point. 
The critical exponents are both positive and real in both cases. Comparing the  properties of the two flows, 
we do not find any qualitative large difference signaling the fact, that for this model working 
with canonical variables that avoid the otherwise present
Jacobian does not modify significantly the RG properties of the system, at least when the 
artificial covariant averaging is performed. Finally, with the purpose to access the effective 
regime at $k = 0$, we  integrate down the trajectories obtaining the values of the dimensionful 
coupling constants in this limit. Our proper time methods allow to access the effective regime for trajectories 
with positive cosmological constant along the entire flow. We analyse completely  
a trajectory of particular interest being the separatrix, connecting the UV with the IR fixed point. 
As for the trajectories of Type IIIa, due to numerical limitations, we focus the investigation the small $k$ regime. 
The large $k$ regime is dictated by the UV fixed point, from which all trajectories stem.
\\

This work can be extended and improved in various ways. An obvious 
task is to take higher order 
corrections terms in the Wetterich equation into account coming from i. geometric series expansion
of the second functional derivative,
ii. the non-minimal operator expansion mentioned above and iii. its Taylor expansion
with respect to the fluctuation
field. This requires to expand the theory space (number of symmetry consistent 
couplings) accordingly, 
moving to more general truncations. Another interesting direction would be 
to determine from a given truncation the (non-averaged $k=0$) effective action, 
to Legendre transform it to obtain the generating functional of Schwinger functions 
and to derive from that the underlying quantum Hamiltonian 
and Hilbert space representation (e.g. via Osterwalder-Schrader
reconstruction) that can then be compared to \cite{6}. We hope to come back 
to these and related questions in future publications. We believe that the present 
analysis adds to the understanding how LQG (or more generally canonical quantisation approaches 
to GR) and ASQG are related both conceptually and technically. \\  
\\ 
\\       
{\bf Acknowledgements}\\
R.F.  is supported by the FAU Emerging Talents Initiative (ETI).\\
 R.F. is grateful for the hospitality of Perimeter Institute
where part of this work was carried out.
Research at Perimeter Institute is supported in part by the Government of Canada
through the Department of Innovation, Science and Economic Development and by the
Province of Ontario through the Ministry of Colleges and Universities. This work was
supported by a grant from the Simons Foundation (grant no. 1034867, Dittrich).\\

\begin{appendix}

\section{Heat kernel of projected Laplacian}
\label{sa}

The considerations in this section are formal (i.e. without paying attention to 
functional analytic concerns). We believe that they can be made writer-tight using 
the spectral theorem applied to a self-adjoint version of the Laplacian.\\
\\
In order to construct the heat kernel of 
\be \label{a.1}
\overline{\Delta}^P_{D+1}=P\cdot \overline{\Delta}_{D+1}\cdot P, \;\;
\overline{\Delta}_{D+1}=\bar{g}^{\mu\nu} \;\bar{\nabla}_\mu\; \bar{\nabla}_\nu,\;
P^\mu_\nu=\delta^\mu_a\;\delta^a_\nu\;,
\ee
while using the known techniques to compute the heat kernel of $\overline{\Delta}_{D+1}$ we 
write 
\be \label{a.2}
e^{s\overline{\Delta}^P_{D+1}}=M(s)\; e^{s\overline{\Delta}_{D+1}},\;
M(s)=
e^{s\overline{\Delta}^P_{D+1}}\; e^{-s\overline{\Delta}_{D+1}}\;.
\ee
The M{\o}ller operator $M(s)$ obeys 
\be \label{a.3}
\frac{d}{ds} M(s)=M(s)\; A(s),\; A(s)=P(s)\cdot \overline{\Delta}^P_{D+1}\; P(s)-
\overline{\Delta}^P_{D+1},\; M(0)=1\;,
\ee
where $1$ is the identity on the type of spacetime tensor fields considered and 
\be \label{a.4}
P(s)=e^{s\overline{\Delta}_{D+1}}\; P \;e^{-\overline{\Delta}_{D+1}} 
=\sum_{n=0}^\infty\;\frac{s^n}{n!}\; [\overline{\Delta}_{D+1},P]_n
\ee
is the heat kernel evolution of the projection operator. Here 
$[\overline{\Delta}^P_{D+1},P]_0=P,\;[\overline{\Delta}^P_{D+1},P]_{n+1}=
[\overline{\Delta}^P_{D+1},[\overline{\Delta}^P_{D+1},P]_n]$ denotes the 
commutator 
of order $n$. The ODE (\ref{a.3}) has the well known solution
\be \label{a.5} 
M(s)=\sum_{n=0}^\infty M_n(s),\;\; M_0(s)=1,\; M_{n+1}(s)=\int_0^s\; dr\; M_n(r)\; A(r)\;.
\ee
The nested integrals can be computed in closed form. To that end we 
write $C_n:=[\overline{\Delta}^P_{D+1},P]_n$ and 
\be \label{a.6}
A(s)=\sum_{N=0}^\infty\; \frac{s^N}{N!} A_N,\;
A_0:=P\overline{\Delta}_{D+1} P-\overline{\Delta}_{D+1},\;
A_N=\sum_{n=0}^N \;
\left( \begin{array}{c} N\\ n \end{array} \right) 
\;
C_n\; \overline{\Delta}_{D+1}\; C_{N-n} \;\; (N>0)
\ee 
Then the first terms are
\ba \label{a.6a}
M_1(s) &=& \int_0^s\; dr\; A(r)\; =\sum_{N=1}^\infty \;\frac{s^N}{N!}\; A_{N-1} \;
\nonumber\\
M_2(s) &=& \int_0^s\; dr\; M_1(r) \; A(r)=\sum_{N=2}^\infty \frac{s^N}{N!}
\sum_{M=0}^{N-2}\; 
\left( \begin{array}{c} N-1\\ M \end{array} \right) 
\;
A_{N-2-M} \; A_M\;.
\ea   
We see that $M_n(s)$ is of order $s^n$ in heat kernel time which can be used in order to
construct the corrections to $e^{s \overline{\Delta}_{D+1}}$ which itself involves an expansion 
in terms of $s$ as described at the end of section \ref{s4}. Hence the Taylor expansions above 
and of $e^{s \overline{\Delta}_{D+1}}$  merge into a systematic expansion in $s$ in which 
the $s$ independent operator valued coefficients computed in (\ref{a.6}) act on the 
bi-tensor valued heat kernel coefficients of $e^{s \overline{\Delta}_{D+1}}$.

For the explicit computation one needs 
the $C_n$. We have e.g.
\ba \label{a.7}
C_1 &=& \bar{g}^{\mu\nu} \{\bar{\nabla}_\mu\;[\bar{\nabla}_\nu,P]+
[\bar{\nabla}_\mu,P]\;\bar{\nabla}_\nu\}=
(\overline{\Delta}_{D+1}\cdot P)+\bar{g}^{\mu\nu}\;(\bar{\nabla}_\mu\cdot P)\;\bar{\nabla}_\nu
\\
&& [\bar{\nabla}_\mu \cdot P]^t_t= [\bar{\nabla}_\mu \cdot P]^a_b= 0,\;
[\bar{\nabla}_\mu \cdot P]^t_b=-\bar{k}_{bc}\delta^c_\mu,\; 
[\bar{\nabla}_\mu \cdot P]^a_t=-\bar{k}^a_c\delta^c_\mu
\nonumber\\
&& 
[\overline{\Delta}_{D+1} \cdot P]^t_t=-2[\bar{k}^a_a]^2,\;
[\overline{\Delta}_{D+1} P]^t_a=-\bar{D}_b\bar{k}^b_a,\;
[\overline{\Delta}_{D+1} P]^a_t=-\bar{D}_b\bar{k}^{ba},\;
[\overline{\Delta}_{D+1} P]^a_b=-2\bar{k}^a_b\; \bar{k}^b_a\;,
\nonumber
\ea
where $\bar{g}_{t\mu}=\delta^t_\mu,\;\bar{g}_{ab}=\bar{q}_{ab},\; 2\bar{k}_{ab}=\bar{D}_t \bar{q}_{ab}$
was used and where $\bar{D}_a \bar{q}_{bc}=0, \bar{D}_t=\partial_t$. We see that the 
Diff$_D(M)$ correction
terms involve exactly the terms that assemble the Euclidean action. Using these 
techniques the $C_n,\;n>1$ are straightforward while tedious to compute.

\section{Barnes integral technique}
\label{sb}

In order to compute the heat kernel time $s$ integrals with respect to concrete cut-off 
functions proposed in this paper, we cannot rely on the usual methods \cite{2} that just 
work with a proposed image of the Laplace transform, assuming that 
a pre-image exists. As shown in \cite{4} the question of existence of those pre-images
is non-trivial. We therefore start from a given pre-image whose existence is thus 
secured and use this as the basis of our computation.\\
\\
The concrete heat kernel time integrals are of the type
\be \label{b.1}
J(\lambda,p):=\int_0^\infty\; ds_1\int_0^\infty\;ds_2\; e^{-s_1^2-s_1^{-2}}\; e^{-\lambda s_2}\; (s_1+s_2)^{-p}
\ee
where $\lambda>0$ and $p\in \mathbb{R}$. If $p$ is a non-positive integer then the integral 
is a sum of products of two integrals containing only $s_1,s_2$ which can be computed individually in closed
form. 
For all other cases the integral does not factorise. To factorise it for $p>0$ we make use of the 
following Barnes identity (e.g. \cite{13} and references therein).
\be \label{b.2}         
(s_1+s_2)^{-p}=\int_{-c-i\infty}^{-c+i\infty}\; dz\; B(z;s_1,s_2,p),\;\;
B(z;s_1,s_2,p)=\frac{1}{2\pi i}\; s_1^z\; s_2^{-[p+z]}\; \frac{\Gamma(z+p)\;\Gamma(-z)}{\Gamma(p)}
\ee
There are also similar identities involving 
an arbitrary number $n$ of heat kernel times which are relevant for higher order corrections 
of the Wetterich equation \cite{13}.
Here the integration path is parallel to the imaginary axis and $1>c>0$ is chosen such that $p-c$
is not a non-positive integer 
e.g. $c=\frac{1}{4}$ when $p$ is a positive 
integer or half integer. 
One can prove it using elementary Cauchy integral techniques remembering the simple pole 
structure of the $\Gamma$ function and its residua there. To do that one closes the contour via 
an infinite radius semi-circle enclosing either the positive or negative real axis using the fact 
that the $\Gamma$ function is of rapid decay at large imaginary arguments. Which 
closed path one chooses depends on $s_1,s_2$. For $s_2>s_1$ and $s_2>s_1$ respectively one closes the contour 
to the left and right respectively (for $s_1=s_2$, a set of Lebesgue measure 
$ds_1\; ds_2$ zero in $(0,\infty)^2$, we take 
a limit $s_2\to s_1 +$). In both cases one obtains a converging geometric series that can be summed 
and combines with a $p$ dependent pre-factor to the l.h.s. of (\ref{b.2}).   

We note that the roles of $s_1,s_2$ can be interchanged as the l.h.s. is invariant under this exchange.
We interchange the $z$ with the $s_1,s_2$ integrals (to be justified later) to obtain
\be \label{b.3}
J(\lambda,p)=\frac{1}{2\pi i}
\int_{-c-i\infty}^{-c+i\infty}\; dz\; \frac{\Gamma(z+p)\;\Gamma(-z)}{\Gamma(p)}\;
\left\{ \begin{array}{cc}
\lambda^{p+z-1}\; J(z)\; \Gamma(1-[p+z]) & \\ 
\lambda^{-[1+z]}\; J(-[p+z])\; \Gamma(z+1]) & 
\end{array}
\right.
\ee
where 
\be \label{b.4}
J(u)=\int_0^\infty\; ds \; e^{-s^2-s^{-2}} s^u
\ee
is analytic in $u$ and converges for any $u\in \mathbb{C}$ due to the properties of the chosen cut-off function. 
In fact, this integral is known in terms of modified Bessel functions \cite{28}.

Both ways of writing the integral (\ref{b.3}) are a priori equally valid for any value of $\lambda>0$.
However when carrying out the $z$ integral via the residue theorem we obtain a 
series which whose rate of convergence depends on whether $\lambda<1$ or $\lambda>1$.
Specifically, closing the contour to the right or left respectively yields pole contributions from 
positive and negative integers respectively, hence to improve convergence one may choose to close 
the contour to the right and left respectively for $\lambda<1$ and $\lambda>1$ respectively when 
using the upper version of (\ref{b.3}) while one may choose to close 
the contour to the left and right respectively for $\lambda<1$ and $\lambda>1$ respectively when 
using the lower version of (\ref{b.3}).

It remains to investigate the pole structure of the $z$ integral and to close the contour.  
We confine ourselves to the upper version of (\ref{b.3}) and are thus 
confronted with poles of $\Gamma(z+p)\Gamma(-z)\Gamma(1-p-z)$.
$\lambda<1$:\\ 
As indicated, we close the contour to the right where $\Gamma(z+p)$ is entire. 
For $p=1$ the poles are at $z=0,1,,2,..$ and are double.
For an integer $p>1$ the poles at $z=0,1,..,p-2$ are single and those for $z=p-1,p, ..$ 
are double. For $p>0$ not an integer the poles at $z=0,1,2,..$ and $z=p-1,p,..$ are 
both single valued. \\
$\lambda>1$:\\ 
We close the contour to the left. Then $\Gamma(-z)$ is entire and $\Gamma(1-p-z)$ has 
poles at $z=-1,-2,.., -(p-1)$ while $\Gamma(z+p)$ has poles at $z=-p,-p-1,..$. Thus 
all poles are simple in this case. \\ 
The existence of double poles for the right contour leads to derivatives with respect to the holomorphic
part of the integrand in (\ref{b.3}) when applying the residue theorem and thus to terms 
proportional to $\ln(\lambda)$. In either case one obtains a series 
in $\lambda$ or $\lambda^{-1}$ respectively whose coefficients can be computed in closed 
form. For instance for $p>0$ an integer and $\lambda<1$ 
\ba \label{b.5}
J(\lambda,p) &=&        
\frac{(-1)^p}{(p-1)!}\;\sum_{n=0}^\infty\; \frac{\lambda^{n+p-1}}{n!}\;J(n)\;
\{\frac{J'(n)}{J(n)}+\ln(\lambda)+\frac{c_{n+p-1}}{(n+p-1)!}
-[2 c_0+\sum_{k=0}^{n-1}\;\frac{1}{n-k}+\sum_{k=0}^{n+p-2}\;\frac{1}{n+p-1-k}]\}
\nonumber\\
c_n &:=& \int_0^\infty\; ds\; s^n \; \ln(s)\; e^{-s}
\ea
while for $\lambda>1$
\be \label{b.6}
J(\lambda,p) = \frac{(-1)^p}{(p-1)!}\; \sum_{n=1}^\infty\; (-1)^n\;\lambda^{p-1-n}\; \Gamma(n)\;
J(-n)
\ee
It is understood that the sums over $k$ displayed in (\ref{b.5}) are missing when the upper 
summation bound is lower than zero and $J'$ is the derivative of $J$. By combining formulae 
(10.32.9) and (10.41.2) of \cite{29} (see also \cite{28}) one finds that $J(-n)=J(n-2)$ and
$J(n)=K_{(n+1)/2}(2)$ where $K_\nu(z)$ is the modified Bessel function of the second 
kind which has an asymptotics for large $\nu$ and fixed $z$ such that
$J(n)\sim \Gamma((n+1)/2)/\sqrt{n+1}$. It is not difficult to see that the curly bracket 
in (\ref{b.5}) grows at most linearly with $n$. Thus (\ref{b.5}) has infinite radius of absolute convergence
while (\ref{b.6}) has zero radius of absolute convergence and interchange of the integrals 
is justified only for (\ref{b.5}). We interpret (\ref{b.6}) as an asymptotic series presentation 
of the function (\ref{b.5}).   

When solving the flow equation up to the first order in truncation, we are also facing the challenge of computing 
integrals containing the convolution of three and four heat kernel times (see \eqref{exp}). 
However, one can show that the following integral over three convoluted heat kernel times can be reduced to 
an integral over two convoluted heat kernel times, by introducing $s_4=s_2+s_3$ instead of $s_3$ as an integration 
variable, noticing that the integrand only depends on $s_4$ and that $s_2$ is confined by $s_2\le s_4$ hence  
integrating by parts:
\begin{equation}\label{b7}
\int_0^\infty ds_1\int_0^\infty ds_2\int_0^\infty ds_3  e^{-s_1^2-s_1^{-2}}e^{-\lambda (s_2+s_3)}(s_1+s_2+s_3)^{-p} = \int_0^\infty ds_1\int_0^\infty ds_2 e^{-s_1^2-s_1^{-2}}s_2e^{-\lambda (s_2+s_3)}(s_1+s_2)^{-p} 
\end{equation}
after renaming $s_4\to s_2$.
Hence, these terms with originally three heat kernel integrations can be computed with the methods above. 
The contribution with originally four heat kernel times contains two insertions of the regulator and two insertions of the propagator, 
which can be reduced to insertion of one propagator using \eqref{b7}. However, the two insertions of the regulator require a little more work.
Structurally, we have
\begin{equation}
\int_0^\infty ds_1\int_0^\infty ds_2 e^{-s_1^2-s_1^{-2}}e^{-s_2^2-s_2^{-2}}e^{-\lambda s_3}(s_1+s_2+s_3)^{-p} 
\end{equation}
Again we apply the Barnes identity \eqref{b.2} on $s_1+s_2$ together as a single variable and $s_3$. After having chosen the appropriate contour and evaluated the residue at the respective poles, this leads then to a similar structure as \eqref{b.5}, with the difference that there is one more class of coefficients to evaluate, namely
\begin{equation}
d_n := \int_0^\infty ds_1 \int_0^\infty ds_2\; e^{-s_1^2-s_1^{-2}}e^{-s_2^2-s_2^{-2}} \ln (s_1+s_2) \;(s_1+s_2)^n\;.
\end{equation}
Hence, by means of the Barnes identity, we have developed a method to evaluate convoluted heat time integrals at every order needed in this paper.

\end{appendix}


\end{document}